\documentclass[aps,pra,onecolumn,superscriptaddress,longbibliography,floatfix,amsmath,amssymb]{revtex4-2}

\usepackage{graphicx}
\usepackage{xcolor}
\usepackage{braket}
\usepackage{mathtools}
\usepackage{amsmath,amssymb,amsthm}
\usepackage{bm}
\usepackage{physics}
\graphicspath{ {./Figures/} }

\newtheorem{defi}{Definition}
\newtheorem{prop}{Proposition}

\newtheorem{lem}{Lemma}

\newtheorem{form}{Formulae}

\newcommand{\beprop}{\begin{prop}}
\newcommand{\enprop}{\end{prop}}
\newcommand{\befor}{\begin{form}}
\newcommand{\enfor}{\end{form}}
\newcommand{\bprf}{\begin{proof}}
\newcommand{\eprf}{\end{proof}}

\usepackage{bbm}
\newcommand{\ii}{\mathsf{i}}

\def\R{\mathbb{R}}

\def\C{\mathbb{C}}
\def\Z {\mathbb{Z}}
\def\S {\mathbb{S}}

\def\lg{\langle }
\def\rg{\rangle }

\def\vap{\varpi}

\def\bn{\mathbf{n}}

\def\ud{\mathrm{d}}

\def\sfM{\mathsf{M}}

\def\sfMv{\hat{\mathsf{M}}^{\vap}}

\def\bu{\mathbbm{1}}

\def\vectheta{\bm{\theta}}
\def\T{\mathbb{T}}

\def\vecm{\bm{m}}
\def\vecn{\bm{n}}
\def\vecgkernel{\pmb{\gamma}-\pmb{\gamma}^\prime}
\def\vecalpha{\bm{\alpha}}
\def\bg{\boldsymbol{\gamma}}

\def\ellboundone{\frac{\ell_1}{2}}
\def\ellboundtwo{\frac{\ell_2}{2}}

\DeclareFontFamily{U}{mathx}{}
\DeclareFontShape{U}{mathx}{m}{n}{<-> mathx10}{}
\DeclareSymbolFont{mathx}{U}{mathx}{m}{n}
\DeclareMathAccent{\widehat}{0}{mathx}{"70}
\DeclareMathAccent{\widecheck}{0}{mathx}{"71}

\usepackage[colorlinks=true,linkcolor=blue,citecolor=blue,urlcolor=blue]{hyperref}

\begin{document}

\title{Stellar Representation of Gottesman--Kitaev--Preskill States on the Double Discrete Cylinder}

\author{Aidan Zlotak}
\affiliation{Department of Physics, Worcester Polytechnic Institute, Worcester, Massachusetts 01609, USA}

\author{Nicolas Fabre}
\email{nicolas.fabre@telecom-paris.fr}
\affiliation{Telecom Paris-LTCI, Institut Polytechnique de Paris, 19 Place Marguerite Perey, 91120 Palaiseau, France}

\author{Romain Murenzi}
\affiliation{Department of Physics, Worcester Polytechnic Institute, Worcester, Massachusetts 01609, USA}

\date{\today}

\begin{abstract}
Covariant integral quantization is developed on the double discrete cylinder phase space, with the torus as configuration space, and yields the Wigner distribution and the stellar representation. This phase space is particularly well suited to translationally invariant states, such as the Gottesman--Kitaev--Preskill (GKP) states. These states are central to fault-tolerant quantum computing architectures, yet their quantum resources are still under investigation. The stellar representations of ideal GKP logical states, which behave as phase eigenstates of the double discrete cylinder, have no zeros on this phase space, whereas those of their superpositions have infinitely many. The vacuum, although Gaussian, has zeros on this phase space, and the zeros of finitely squeezed GKP states move with the widths of the peaks and of the envelope, and reach this phase space for fine-tuned widths. These features echo the role of the vacuum as the resource that makes GKP-based architectures hard to simulate.
\end{abstract}

\maketitle

%%%%%%%%%%%%%%%%%%%%%%%%%%%%%%%%%%%%%%%%%%
\section{Introduction}
Continuous-variable (CV) systems are the basis of many protocols in quantum communication, such as the GG02 quantum key distribution protocol~\cite{grosshans_continuous_2002,grosshans_quantum_2003,jouguet_experimental_2013,pirandola_advances_2020}, and in quantum sensing~\cite{lawrie_quantum_2019}, where squeezed states of light already enhance the sensitivity of gravitational-wave detectors~\cite{PhysRevLett.123.231108,PhysRevLett.123.231107}. In quantum computation, CV systems have enabled the generation of large-scale cluster states~\cite{yokoyama_ultra-large-scale_2013,asavanant_generation_2019,larsen_deterministic_2019}, demonstrations of quantum computational advantage with Gaussian boson sampling~\cite{zhong_quantum_2020,madsen_quantum_2022}, and modular photonic quantum computers~\cite{aghaeerad_scaling_2025,larsen_integrated_2025}.

Phase-space representations, such as the Wigner and Husimi distributions, are a central tool for the analysis of CV states. The geometry of the phase space follows from the symmetry group of the system, and phase-space distributions have been constructed on the plane, the sphere~\cite{stratonovich1957distributions,arecchi_atomic_1972,agarwal_relation_1981,varilly_moyal_1989,dowling_wigner_1994,Klimov:00,yang_wigner_2019,davis_wigner_2021,kowalski_wigner_2023}, the discrete cylinder~\cite{mukunda_wigner_1979,de_bievre_coherent_1989,PhysRevA.49.3255,gonzalez_quantum_2003,rigas_orbital_2011,gazeau_integral_2022}, the hyperboloid~\cite{seyfarth_wigner_2020,fabre_local_2023} and the torus~\cite{davis_identifying_2024}, by the Brif--Mann construction~\cite{brif_general_1998} or by covariant integral quantization~\cite{gazeau_coherent_2009,ali_coherent_2014,bergeron_are_2013,gazeau_integral_2014,gazeau_covariant_2016,gazeau_classical_2018,gazeau_integral_2022,fabre_time-frequency_2022}. Phase-space distributions also reveal non-classical features. The negativity of the Wigner distribution is a necessary resource for a quantum computational advantage~\cite{hudson_when_1974,mari_positive_2012,veitch_resource_2014}, and is equivalent to contextuality for CV measurements~\cite{haferkamp_equivalence_2021,PhysRevLett.129.230401}. The zeros of the Husimi distribution, whose number defines the stellar rank~\cite{chabaud_stellar_2020,chabaud_classical_2021}, quantify the non-Gaussian character of a state in the plane~\cite{PhysRevLett.130.090602}, and the zeros of the Majorana constellation characterize spin states on the sphere~\cite{majorana_atomi_1932,bjork_extremal_2015,bouchard_quantum_2017,davis_stellar_2023}, as well as states on the discrete cylinder~\cite{fabre_majorana_2023}. These non-classical indicators depend on the geometry of the phase space, which must therefore be adapted to the symmetries of the state. Among all phase spaces, the discrete cylinder offers a natural setting for twisted photons~\cite{allen_orbital_1992,rigas_non-negative_2010,rigas_orbital_2011,fabre_majorana_2023}, the charge and the phase of a Josephson junction~\cite{koch_charge-insensitive_2007}, and the integer and angular variables of rotor codes~\cite{vuillot_homological_2024}. It also arises for continuous variables, that is, for any pair of canonically conjugate variables obeying the Heisenberg algebra, such as the quadratures of the electromagnetic field~\cite{braunstein_quantum_2005}, the transverse position and momentum of a photon~\cite{tasca_continuous_2011}, or the time and frequency of a single photon~\cite{fabre_time-frequency_2022}. The decomposition of the two variables modulo a period and its reciprocal period defines modular variables, which are bounded and commuting operators~\cite{aharonov_modular_1969,zak_kq-representation_1972,busch_what_1986,englert_periodic_2006,ketterer_quantum_2016,pantaleoni_zak_2023}. Quantum correlations in modular variables have been observed with photon pairs~\cite{carvalho_experimental_2012,barros_free-space_2017}, and sequential measurements of the modular position and momentum have been performed on the motion of a trapped ion~\cite{fluhmann_sequential_2018}.

Modular variables are also the natural observables of the Gottesman--Kitaev--Preskill (GKP) code, which encodes a logical qubit into a lattice of the phase space of an oscillator~\cite{gottesman_encoding_2001,grimsmo_quantum_2021,terhal_towards_2020,conrad_gottesman-kitaev-preskill_2022}, and of its time-frequency counterpart for single photons~\cite{fabre_generation_2020}. The phase space associated with the two modular variables is a double discrete cylinder, the product of two discrete cylinders, on which a Wigner distribution has been constructed~\cite{fabre_wigner_2020}. GKP states have been prepared with trapped ions~\cite{fluhmann_encoding_2019,deneeve_error_2022}, superconducting cavities~\cite{campagne-ibarcq_quantum_2020,kudra_robust_2022,sivak_real-time_2023}, quadratures of the electromagnetic field~\cite{konno_logical_2024,larsen_integrated_2025} and the time-frequency degrees of freedom of single photons~\cite{fabre_generation_2020}, and have been proposed with free electrons~\cite{baranes_free-electron_2023}, while several preparation schemes have been designed for optical and microwave fields~\cite{vasconcelos_all-optical_2010,terhal_encoding_2016,weigand_generating_2018}. GKP states are the building blocks of fault-tolerant architectures~\cite{fukui_high-threshold_2018,baragiola_all-gaussian_2019,noh_fault-tolerant_2020,tzitrin_progress_2020,bourassa_blueprint_2021,larsen_integrated_2025}. The origin of the quantum resources of GKP states remains subtle. Ideal GKP states are non-Gaussian, yet circuits made of ideal GKP states, Gaussian operations and Gaussian measurements can be efficiently simulated~\cite{calcluth_efficient_2022}, while the addition of the vacuum makes these circuits hard to simulate~\cite{calcluth_vacuum_2023}. In the same way, the ideal logical states have a negative Wigner distribution in the plane but a positive one on the double cylinder~\cite{fabre_wigner_2020}. On the torus phase space of the Zak--Gross Wigner distribution, defined for GKP qudits of odd dimension, ideal stabilizer GKP states have a positive Wigner distribution, whereas magic GKP states, finitely squeezed GKP states and the vacuum have a negative one~\cite{davis_identifying_2024,calcluth_classical_2025}. This negativity accounts for the role of the vacuum, whose Wigner distribution is positive in the plane, as a resource that makes GKP circuits hard to simulate~\cite{davis_identifying_2024}. A phase space adapted to the translational symmetry of GKP states is thus a natural tool to locate the resources of these states.

In this work, covariant integral quantization~\cite{gazeau_integral_2014,gazeau_integral_2022} is developed on the double discrete cylinder, the product of a two-dimensional grid and of a torus. A weight function on this phase space defines a quantization operator, as a weighted superposition of the displacement operators of the phase space, whose integral kernel follows from a partial Fourier transform of the weight. Displacing this operator over the phase space associates an operator with every classical function, and a quasi-probability distribution, or semi-classical portrait, with every quantum state. A constant weight makes the quantization operator proportional to the parity operator, and the associated distribution is the Wigner distribution. A weight given by the overlap between a fiducial state and its displaced copies makes the quantization operator the projector onto this fiducial state; the displaced copies are coherent states, and the associated distribution is the Husimi distribution, the squared overlap between the state and these coherent states. Modular variables, obtained by decomposing two canonically conjugate continuous variables modulo a period and its reciprocal period, make the torus the configuration space of these variables; through the Zak transform, every state is represented by a function on the torus, expanded on an integer basis whose dependence on the choice of the unit cell is inherited by the phase-space distributions. The wrapped coherent states of a particle on a circle~\cite{Kowalski_1996} appear as the coherent states generated by a periodized Gaussian fiducial vector, and define a stellar function and a Husimi distribution on the double discrete cylinder. The relevant zeros are those lying on the lattice of heights of the two cylinders, whose number defines a stellar rank on this phase space.

Position eigenstates and ideal GKP logical states have no zero on this lattice, consistently with their positive Wigner distribution on the double cylinder, whereas the superpositions of ideal logical states have infinitely many zeros on the lattice, consistently with the negativity of their Wigner distribution. The vacuum, whose Husimi distribution has no zero in the plane, has zeros on the lattice, at positions that depend on the choice of the unit cell. The zeros of the ideal GKP states also move when the peaks and the envelope acquire a finite width. When the envelope contains many peaks, the modular wavefunction factorizes, and the zeros of the ideal logical states, which form whole circles of the phase space at half-integer heights, move in height with the widths, as located analytically; these circles reach the lattice for fine-tuned widths, and the Husimi distribution extends over a number of heights set by the widths. The lattice zeros of the ideal superpositions drift away from the lattice, except for four zeros protected by parity. The appearance of zeros on the lattice once the vacuum and finite squeezing enter the description echoes the role of the vacuum as the resource that makes GKP-based architectures hard to simulate~\cite{calcluth_vacuum_2023}.

This paper is organized as follows. Section~\ref{Torus_intro} introduces the torus as a configuration space, with its Hilbert space and its translation, modulation and parity operators, and develops the covariant integral quantization on the double discrete cylinder, which yields the semi-classical portraits and the Wigner and Husimi distributions. Section~\ref{sec:modular} introduces the modular variables, the Zak transform and the integer basis, and constructs the wrapped coherent states, the stellar function and the Husimi distribution on the double discrete cylinder, together with a stellar rank. Section~\ref{sec:examples} applies this construction to examples, and locates the zeros of the stellar function for position eigenstates, for the vacuum, and for ideal and physical GKP states and their superpositions. Section~\ref{sec:conclusion} summarizes the results and discusses perspectives.

%%%%%%%%%%%%%%%%%%%%%%%%%%%%%%%%%%%%%%%%%%
\section{The double discrete cylinder and covariant integral quantization}
\label{Torus_intro}

This section introduces the torus as a configuration space and the double discrete cylinder as the associated phase space, together with the translation, modulation and parity operators acting on functions on the torus. Covariant integral quantization is then developed on this phase space. It associates an operator with every classical function and a semi-classical portrait with every operator, and yields the Wigner and Husimi distributions for two particular choices of the weight function. This geometry describes twisted two-photon states and is particularly relevant for GKP states: the modular variables introduced in Sec.~\ref{sec:modular} show how the torus, or equivalently a two-dimensional discrete lattice, appears as the configuration space of continuous variables.

\subsection{The torus and its Hilbert space}
\label{non-uniform 2-torus}

The torus $\T^2_{\pmb \ell}=\S_{r_1}\times\S_{r_2}$ is the product of the two circles
\begin{equation}
\S_{r_j}=\left\{\gamma_j\in\left[-\frac{\ell_j}{2},\frac{\ell_j}{2}\right)\,\mathrm{mod}\ \ell_j\right\},\qquad j=1,2,
\end{equation}
of circumferences $\ell_1$ and $\ell_2$, that is, of radii
\begin{equation}
r_1=\frac{\ell_1}{2\pi},\qquad r_2=\frac{\ell_2}{2\pi}.
\end{equation}
The two radii are tuned independently through the periods $\ell_1$ and $\ell_2$, so that the torus is in general asymmetric. The periods are gathered into the vector
\begin{equation}
\label{ellvec}
\bm\ell=\binom{\ell_1}{\ell_2}.
\end{equation}
A state is described by a function $\psi(\pmb{\gamma})=\left<\pmb{\gamma}|\psi\right>$ of the Hilbert space $\mathcal{H}=L^2(\T^2_{\pmb \ell},\ud^2\pmb{\gamma})$ of square-integrable complex-valued functions on the torus, with the norm
\begin{equation}
\label{ortho_cond}
\Vert \psi\Vert^2=\int_{\T^2_{\pmb \ell}} \ud^2\pmb{\gamma} \ \overline{\psi(\pmb{\gamma})} \psi(\pmb{\gamma})= \int_{\T^2_{\pmb \ell}} \ud^2\pmb{\gamma} \ |\psi(\pmb{\gamma})|^2< \infty\, ,
\end{equation}
where $\overline{\psi(\pmb{\gamma})}$ denotes the complex conjugate of $\psi(\pmb{\gamma})$. Every function on the torus is expanded in a double Fourier series,
\begin{equation}
\begin{gathered}
\psi(\pmb{\gamma}) = \sum_{m_1,m_2\in\Z} c_{m_1m_2}\, e^{\ii m_1 \frac{2\pi}{\ell_1}\gamma_1} e^{\ii m_2 \frac{2\pi}{\ell_2}\gamma_2},\\
c_{m_1m_2} = \frac{1}{\ell_1\ell_2} \int_{-\ell_1/2}^{\ell_1/2}\ud\gamma_1 \int_{-\ell_2/2}^{\ell_2/2}\ud\gamma_2 \ \psi(\pmb{\gamma})\, e^{-\ii m_1 \frac{2\pi}{\ell_1}\gamma_1} e^{-\ii m_2 \frac{2\pi}{\ell_2}\gamma_2},
\end{gathered}
\end{equation}
which defines the unitary Fourier transform and its inverse,
\begin{align}
\label{Fourier_transform}
F[\psi](m_1,m_2) &= \frac{1}{\sqrt{\ell_1\ell_2}}\int_{-\ellboundone}^{\ellboundone}\ud\gamma_1\int_{-\ellboundtwo}^{\ellboundtwo}\ud\gamma_2 \ \psi(\pmb{\gamma})\,e^{-\ii(m_1\gamma_1\frac{2\pi}{\ell_1}+m_2\gamma_2\frac{2\pi}{\ell_2})},\\
\label{inv_Fourier_transform}
\psi(\pmb{\gamma}) &= \frac{1}{\sqrt{\ell_1\ell_2}}\sum_{m_1,m_2\in\Z}F[\psi](m_1,m_2)\, e^{\ii(m_1\gamma_1\frac{2\pi}{\ell_1}+m_2\gamma_2\frac{2\pi}{\ell_2})}.
\end{align}
The conjugate variables
\begin{equation}
\vecm_{\pmb \ell} = \begin{pmatrix}
    \frac{2\pi}{\ell_1} m_1 \\ \frac{2\pi}{\ell_2}m_2
\end{pmatrix},\qquad (m_1,m_2)\in\Z^2,
\end{equation}
form the discrete grid $\Z^2_{\pmb\ell} = \frac{2\pi}{\ell_1}\Z \times\frac{2\pi}{\ell_2}\Z$, which is the dual of the torus, and the phase space is the product $\Gamma=\Z^2_{\pmb\ell}\times\T^2_{\pmb\ell}$, called the double discrete cylinder. A Hilbertian basis of $\mathcal{H}$ is the countable set of normalized Fourier exponentials
\begin{equation}
\chi_{\vecm_{\pmb \ell}}(\pmb{\gamma}) = \frac{e^{\ii \vecm_{\pmb \ell}\cdot\pmb{\gamma}}}{\sqrt{\ell_1\ell_2}},\qquad \vecm_{\pmb \ell}\in\Z^2_{\pmb\ell},
\end{equation}
which satisfies the orthogonality and completeness conditions
\begin{equation}
\label{ortcomp}
\lg \chi_{\bn}| \chi_{\bn^{\prime}}\rg=\delta_{\bn \bn^{\prime}}\, , \quad  \sum_{\bn\in \Z^2_{\pmb\ell}} | \chi_{\bn}\rg\lg \chi_{\bn}| = \bu \, ,
\end{equation}
where $\bu$ is the identity operator on $\mathcal{H}$. In the Dirac notation, the identity is resolved as well by the kets $|\pmb\gamma\rg$ and $|\bn\rg$,
\begin{equation}
\label{resid2}
\int_{\T^2_{\pmb \ell}}\ud^2\pmb\gamma\,|\pmb\gamma\rg\lg\pmb\gamma| = \bu=  \sum_{\bn\in \Z^2_{\pmb\ell}} |\bn\rg\lg \bn|\, ,
\end{equation}
and the Parseval identity reads $\left<\psi|\psi\right>=\sum_{m_1,m_2\in\Z}|F[\psi](m_1,m_2)|^2$.

\subsection{Translation, modulation and parity operators}
\label{sec:operators}

Three elementary operators act on $\mathcal{H}$. The parity operator $\hat{\mathrm{P}}$, the translation operator $\hat T_{\vectheta}$, where $\vectheta$ is a vector of angular positions on $\T^2_{\pmb \ell}$, and the modulation operator $\hat E_{\vecm_{\pmb \ell}}$, with $\vecm_{\pmb \ell}\in \Z^2_{\pmb\ell}$, are defined by
\begin{align}
\label{Par_oper}
(\hat{\mathrm{P}} \psi)(\pmb{\gamma}) &= \psi(-\pmb{\gamma}),\\
\label{T_op}
(\hat T_{\vectheta} \psi)(\pmb{\gamma}) &= \psi(\pmb{\gamma}- \vectheta),\\
\label{E_operator}
(\hat E_{\vecm_{\pmb \ell}}\psi)(\pmb{\gamma}) &= e^{\ii \vecm_{\pmb \ell} \cdot \pmb{\gamma}} \psi(\pmb{\gamma}).
\end{align}
Combining the translation and modulation operators of Eqs.~\eqref{T_op} and~\eqref{E_operator} gives the displacement operator of the double discrete cylinder,
\begin{equation}
\label{eq:U_def}
\left(\hat U(\vecm_{\pmb \ell},\vectheta)\psi\right)(\pmb{\gamma}) = \psi_{\vecm_{\pmb \ell},\vectheta}(\pmb{\gamma}) = e^{-\ii\frac{\vecm_{\pmb \ell}\cdot\vectheta}{2}}e^{\ii\vecm_{\pmb \ell}\cdot\pmb{\gamma}}\psi(\pmb{\gamma}-\vectheta),
\end{equation}
which plays the role of the Weyl displacement operator of the plane. The operators $\hat U(\vecm_{\pmb \ell},\vectheta)$ form a representation of the Weyl--Heisenberg group of the torus, whose structure and composition rules are given in Appendix~\ref{weyl}.

\subsection{Quantization operator}
\label{section4}

Following Refs.~\cite{gazeau_integral_2014,gazeau_integral_2022,fabre_majorana_2023}, a function $\vap$ on the phase space $\Gamma$, called the weight function, defines the quantization operator
\begin{equation}
\label{affbop}
\sfMv = \frac{1}{\ell_1\ell_2}\sum_{\vecm_{\pmb \ell} \in \Z^2_{\pmb\ell}}\int_{\T^2_{\pmb \ell}}\ud^2\bm \theta \ \vap(\vecm_{\pmb \ell},\bm \theta)\hat U(\vecm_{\pmb \ell},\bm \theta)\, ,
\end{equation}
a weighted superposition of the displacement operators of Eq.~\eqref{eq:U_def}. The weight function is chosen such that $\sfMv$ is bounded and symmetric, that is, self-adjoint on the Hilbert space of physical states.
\begin{prop}
    With the chosen weight $\vap(\vecm_{\pmb \ell},\bm \theta)$, the operator $\sfMv$ has the following properties:
    \begin{enumerate}
        \item[(i)] It is the integral operator
        \begin{equation*}
            (\sfMv \phi)(\pmb \gamma) = \int_{\T^2_{\pmb \ell}}\ud^2 \pmb \gamma^{\prime}\, \mathcal{M}^{\vap}(\pmb \gamma,\pmb \gamma^{\prime})\phi(\pmb \gamma^{\prime}),
        \end{equation*}
        with the integral kernel
        \begin{align*}
            \mathcal{M}^{\vap}(\pmb \gamma,\pmb \gamma^{\prime}) =
            \frac{1}{\ell_1\ell_2}\sum_{\vecm_{\pmb \ell}\in\Z^2_{\pmb\ell}}\,\varpi(\vecm_{\pmb \ell},\vecgkernel)\,e^{\ii\frac{\vecm_{\pmb \ell}\cdot(\pmb \gamma+\pmb \gamma^{\prime})}{2}}
\equiv\widetilde{\varpi}_{p_{1}}\left(\frac{\pmb\gamma+\pmb\gamma^{\prime}}{2},\vecgkernel\right),
        \end{align*}
        where $\widetilde{\vap}_{p_1}$ is the partial Fourier transform of $\vap$ with respect to its first variable $\vecm_{\pmb \ell}\in \Z^2_{\pmb\ell}$.
        \item[(ii)] It is symmetric if and only if the weight satisfies
\begin{align*}
\sfMv =\big(\sfMv\big)^{\dag} \Leftrightarrow \overline{\varpi(-\vecm_{\pmb \ell},\pmb \gamma)}=\varpi(\vecm_{\pmb \ell},-\pmb \gamma)\,.
\end{align*}
        \item[(iii)] It is trace class, with the trace
\begin{equation*}
\mathrm{Tr}(\sfMv )=\varpi(0,0)\,.
\end{equation*}
 \item[(iv)] Its symmetry implies the following condition on the kernel,
\begin{equation*}
\overline{{\mathcal M}^{\varpi}(\pmb \gamma, \pmb \gamma^{\prime})}= {\mathcal M}^{\varpi}(\pmb \gamma^{\prime},\pmb \gamma),
\end{equation*}
that is,
\begin{equation*}
\overline{\widetilde{\varpi}_{p_{1}}\left(\frac{\pmb\gamma+\pmb\gamma^\prime}{2},\vecgkernel\right)}=\widetilde{\varpi}_{p_{1}}\left(\frac{\pmb\gamma+\pmb\gamma^\prime}{2},-(\vecgkernel)\right).
\end{equation*}
\end{enumerate}
\end{prop}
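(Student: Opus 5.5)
The plan is to compute the action of $\sfMv$ on a test function $\phi\in\mathcal{H}$ directly from the definition~\eqref{affbop} and the explicit form~\eqref{eq:U_def} of the displacement operator. Writing
\begin{equation*}
(\sfMv\phi)(\pmb\gamma)=\frac{1}{\ell_1\ell_2}\sum_{\vecm_{\pmb \ell}}\int_{\T^2_{\pmb \ell}}\ud^2\bm\theta\,\vap(\vecm_{\pmb \ell},\bm\theta)\,e^{-\ii\frac{\vecm_{\pmb \ell}\cdot\bm\theta}{2}}e^{\ii\vecm_{\pmb \ell}\cdot\pmb\gamma}\phi(\pmb\gamma-\bm\theta),
\end{equation*}
I change variables $\pmb\gamma'=\pmb\gamma-\bm\theta$, which is a measure-preserving translation on the torus. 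The phase becomes $e^{\ii\vecm_{\pmb \ell}\cdot(\pmb\gamma+\pmb\gamma')/2}$, and exchanging sum and integral yields the kernel in (i). The one subtlety is that $e^{\ii\vecm_{\pmb \ell}\cdot\bm\theta/2}$ is not $\ell$-periodic in $\bm\theta$. I will handle this by noting that in the combination $e^{\ii\vecm_{\pmb \ell}\cdot(\pmb\gamma+\pmb\gamma')/2}$ a shift of $\pmb\gamma'$ by a period changes the sign by $(-1)^{m_j}$. This sign is compensated when $\bm\theta=\pmb\gamma-\pmb\gamma'$ is read with the same representative in $[-\ell_j/2,\ell_j/2)$. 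I therefore fix representatives throughout, and interpret the sum over $\vecm_{\pmb \ell}$ with weights $e^{\ii\vecm_{\pmb \ell}\cdot\pmb x}$ as the (inverse) partial Fourier transform in the first variable evaluated at $\pmb x=(\pmb\gamma+\pmb\gamma')/2$. Fubini is justified under the standing assumption that $\vap$ is chosen so that $\sfMv$ is bounded, for instance $\sum_{\vecm}\int|\vap|<\infty$.

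For (ii), I take the adjoint termwise. Appendix~\ref{weyl} gives $\hat U(\vecm_{\pmb \ell},\bm\theta)^\dagger=\hat U(-\vecm_{\pmb \ell},-\bm\theta)$, which can also be checked directly from~\eqref{eq:U_def}. Hence $(\sfMv)^\dagger=\frac{1}{\ell_1\ell_2}\sum\int\overline{\vap(\vecm_{\pmb \ell},\bm\theta)}\,\hat U(-\vecm_{\pmb \ell},-\bm\theta)$. Relabeling $\vecm_{\pmb \ell}\to-\vecm_{\pmb \ell}$ and $\bm\theta\to-\bm\theta$ shows that this is the quantization operator with weight $\overline{\vap(-\vecm_{\pmb \ell},-\bm\theta)}$. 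The ``if'' direction is immediate. For ``only if'' I need injectivity of $\vap\mapsto\sfMv$. I will obtain it from (i): the kernel determines $\widetilde{\vap}_{p_1}$ on all of $\T^2\times\T^2$, since $(\pmb\gamma,\pmb\gamma')\mapsto((\pmb\gamma+\pmb\gamma')/2,\pmb\gamma-\pmb\gamma')$ is onto, up to the sign bookkeeping above. Fourier inversion on $\Z^2$ then recovers $\vap$. Equivalently, I can use the trace formula $\mathrm{Tr}(\hat U(\vecm_{\pmb \ell},\bm\theta)^\dagger\sfMv)\propto\vap(\vecm_{\pmb \ell},\bm\theta)$ that comes out of (iii). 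Finally, I rewrite $\overline{\vap(-\vecm_{\pmb \ell},-\bm\theta)}=\vap(\vecm_{\pmb \ell},\bm\theta)$ in the form stated.

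For (iii), I integrate the kernel on the diagonal: $\mathrm{Tr}\,\sfMv=\int\ud^2\pmb\gamma\,\mathcal M^{\vap}(\pmb\gamma,\pmb\gamma)=\frac{1}{\ell_1\ell_2}\sum_{\vecm_{\pmb \ell}}\vap(\vecm_{\pmb \ell},0)\int e^{\ii\vecm_{\pmb \ell}\cdot\pmb\gamma}\ud^2\pmb\gamma$. The last integral equals $\ell_1\ell_2\,\delta_{\vecm,0}$, which gives $\vap(0,0)$. Equivalently, I can use $\mathrm{Tr}\,\hat U(\vecm_{\pmb \ell},\bm\theta)=\ell_1\ell_2\,\delta_{\vecm,0}\,\delta(\bm\theta)$ in the basis $\chi_{\bn}$. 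The trace-class property itself, as opposed to the value of the trace, is the main obstacle, because it is not automatic for an arbitrary weight. I will state it as a hypothesis on the admissible weights: a continuous kernel of a positive operator, or sufficient decay of $\vap$ making $\mathcal M^{\vap}$ smooth on the compact torus. Smoothness suffices because smooth kernels on compact manifolds are trace class. With that hypothesis, the diagonal integral is legitimate.

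Item (iv) then follows. A bounded integral operator is self-adjoint exactly when its kernel is Hermitian, $\overline{\mathcal M^{\vap}(\pmb\gamma,\pmb\gamma')}=\mathcal M^{\vap}(\pmb\gamma',\pmb\gamma)$. Substituting the expression from (i), and noting that swapping $\pmb\gamma\leftrightarrow\pmb\gamma'$ leaves $(\pmb\gamma+\pmb\gamma')/2$ unchanged and flips the sign of $\pmb\gamma-\pmb\gamma'$, gives the stated identity for $\widetilde{\vap}_{p_1}$. This identity can also be checked directly from (ii) by conjugating the Fourier sum and relabeling $\vecm_{\pmb \ell}\to-\vecm_{\pmb \ell}$.
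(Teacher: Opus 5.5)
The paper gives no proof of this proposition; it is imported from the cited works of Gazeau et al., so there is no competing argument to compare with. Your route is the standard one: direct substitution of the action of $\hat U$, the rule $\hat U^\dagger(\bm m,\bm\theta)=\hat U(-\bm m,-\bm\theta)$, and injectivity of $\varpi\mapsto\hat{\mathsf M}^{\varpi}$ through $\mathrm{Tr}[\hat U^\dagger(\bm m,\bm\theta)\hat{\mathsf M}^{\varpi}]=\varpi(\bm m,\bm\theta)$, an identity the paper itself uses in its Husimi subsection. Your arguments for (ii) and (iv) are fine.

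You are also right that the trace-class claim in (iii) cannot be proved as stated. For the paper's own unit weight $\varpi_P\equiv1$, $\hat{\mathsf M}^{1}$ has eigenvalues $\pm4$ of infinite multiplicity under either reading of the kernel discussed below. It is therefore not even compact, so (iii) needs an extra hypothesis on $\varpi$. Under such a hypothesis your diagonal integral correctly gives $\varpi(\bm 0,\bm 0)$.

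The step to repair is the representative bookkeeping in (i). Nothing compensates the sign, because $\varpi(\bm m,\cdot)$ is periodic. Suppose $\bm\gamma'$ is kept in the fixed cell and $\bm\theta$ is the reduced representative of $\bm\gamma-\bm\gamma'$. Then the definition produces the kernel $\frac{1}{\ell_1\ell_2}\sum_{\bm m}\varpi(\bm m,\bm\theta)e^{-\ii\bm m\cdot\bm\theta/2}e^{\ii\bm m\cdot\bm\gamma}$. Its terms differ from those of the stated kernel by the factor $(-1)^{m_j}$ for each $j$ with $|\gamma_j-\gamma_j'|>\ell_j/2$. The stated formula is exact only when $\bm\gamma'$ runs over the window $\bm\gamma-[-\ell_1/2,\ell_1/2)\times[-\ell_2/2,\ell_2/2)$, that is $\bm\gamma'=\bm\gamma-\bm\theta$ with $\bm\theta$ the cell representative used in the definition of $\hat{\mathsf M}^{\varpi}$.

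Say this explicitly, because the two readings define different operators. For $\varpi\equiv1$, the window reading gives $(\hat{\mathsf M}^{1}\phi)(\bm\gamma)=4\phi(-\bm\gamma)$ for $|\gamma_1|<\ell_1/4$ and $|\gamma_2|<\ell_2/4$, and $0$ otherwise. This agrees with $\langle\chi_{\bm 0}|\hat{\mathsf M}^{1}|\chi_{\bm 0}\rangle=1$, computed straight from the definition. The fixed-cell reading, which the paper uses in its Wigner subsection, gives $4\phi(-\bm\gamma)$ everywhere and the value $4$.

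The same twist affects your smoothness criterion for trace class. Decay of $\varpi$ in $\bm m$ does not by itself make the kernel smooth on $\T^2_{\bm\ell}\times\T^2_{\bm\ell}$. You need $\bm\theta\mapsto\varpi(\bm m,\bm\theta)e^{-\ii\bm m\cdot\bm\theta/2}$ to extend to a smooth periodic function, with rapid decay in $\bm m$. This is what happens for the Husimi weight, whose kernel is $\psi_0(\bm\gamma)\overline{\psi_0(\bm\gamma')}$.
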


\subsection{Covariant integral quantization}

The displacement operators transport the quantization operator of Eq.~\eqref{affbop} over the phase space,
\begin{equation}
\label{WeylMom}
\sfMv(\vecm_{\pmb \ell},\bm \theta )=\hat U(\vecm_{\pmb \ell} , \bm \theta )\sfMv \hat U^{\dag}( \vecm_{\pmb \ell} , \bm \theta )\,,
\end{equation}
and the family of operators of Eq.~\eqref{WeylMom} defines the quantization of classical functions on $\Gamma$.
\begin{defi}
The Weyl--Heisenberg covariant integral quantization is the linear map
\begin{equation}
\label{genqvap}
f\mapsto \mathrm{Op}^{\vap}_f= \frac{1}{\ell_1\ell_2}\sum_{\vecm_{\pmb \ell}\in\Z^2_{\pmb\ell}}\int_{\T^2_{\pmb \ell}}\ud^2\bm\theta\, f(\vecm_{\pmb \ell},\bm \theta )\, \sfMv(\vecm_{\pmb \ell},\bm \theta)\,.
\end{equation}
\end{defi}
\begin{prop}
\label{CIQ_proof}
    The operator $\mathrm{Op}_f^\varpi$ is the integral operator on $\mathcal{H}$
    \begin{align}
    \label{A_operator}
        (\mathrm{Op}_f^\varpi \phi)(\pmb \gamma) = \int_{\T^2_{\pmb \ell}} \ud^2\pmb\gamma^{\prime}\, \mathcal{A}(\pmb\gamma,\pmb\gamma^\prime)\phi(\pmb\gamma^\prime)\,,
    \end{align}
    with the integral kernel
    \begin{align}
    \label{A_kernel}
        \mathcal{A}(\pmb\gamma,\pmb\gamma^\prime) =\frac{1}{\ell_1\ell_2}\sum_{\vecm_{\pmb \ell}^\prime \in \Z^2_{\pmb\ell}} F_s[f](\vecm_{\pmb \ell}^\prime,\vecgkernel)\,\varpi(\vecm_{\pmb \ell}^\prime,\vecgkernel)\,e^{\ii \frac{\vecm_{\pmb \ell}^\prime\cdot(\pmb\gamma+\pmb\gamma^\prime)}{2}}\, ,
    \end{align}
   where $F_s[f]$ is the symplectic Fourier transform
    \begin{align}
     F_s[f](\vecm_{\pmb \ell},\bm \theta) = \frac{1}{\ell_1\ell_2}\sum_{\vecn \in \Z^2_{\pmb\ell}}\int_{\T^2_{\pmb \ell}}\ud^2\vecalpha\, f(\vecn,\vecalpha)\,e^{-\ii(\vecm_{\pmb \ell}\cdot\vecalpha-\vecn\cdot\bm \theta)}\,.
    \end{align}
\end{prop}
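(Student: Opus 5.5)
The plan is to compute the conjugated operator $\sfMv(\vecm_{\pmb \ell},\bm\theta)$ directly from the definition \eqref{affbop}, and then integrate it against $f$.

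\emph{Step 1: conjugate a single displacement.} For $\hat U(\vecn,\bm\theta)\hat U(\vecm'_{\pmb\ell},\bm\alpha)\hat U^\dag(\vecn,\bm\theta)$ I would use the composition rules of Appendix~\ref{weyl}. Alternatively, it can be checked directly on a test function via Eq.~\eqref{eq:U_def}. The result is a pure phase times the same displacement:
\begin{equation*}
\hat U(\vecn,\bm\theta)\hat U(\vecm',\bm\alpha)\hat U^\dag(\vecn,\bm\theta)=e^{-\ii(\vecm'\cdot\bm\theta-\vecn\cdot\bm\alpha)}\,\hat U(\vecm',\bm\alpha),
\end{equation*}
possibly up to an overall sign convention in the exponent, which I will fix by the direct check. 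This gives
\begin{equation*}
\sfMv(\vecn,\bm\theta)=\frac{1}{\ell_1\ell_2}\sum_{\vecm'}\int\ud^2\bm\alpha\,\vap(\vecm',\bm\alpha)\,e^{-\ii(\vecm'\cdot\bm\theta-\vecn\cdot\bm\alpha)}\hat U(\vecm',\bm\alpha).
\end{equation*}

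\emph{Step 2: integrate against $f$.} Insert this into Eq.~\eqref{genqvap} and exchange the sum over $\vecn$ and the integral over $\bm\theta$ with the sum over $\vecm'$ and the integral over $\bm\alpha$. The $f$-dependent part then collects exactly into the symplectic Fourier transform $F_s[f](\vecm',\bm\alpha)$; the roles of $(\vecn,\bm\theta)$ and the dummy variables are just relabelled relative to the statement. The result is
\begin{equation*}
\mathrm{Op}^\vap_f=\frac{1}{\ell_1\ell_2}\sum_{\vecm'}\int\ud^2\bm\alpha\,F_s[f](\vecm',\bm\alpha)\,\vap(\vecm',\bm\alpha)\,\hat U(\vecm',\bm\alpha).
\end{equation*}
This is again an operator of the form \eqref{affbop}, now with effective weight $F_s[f]\,\vap$.

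\emph{Step 3: read off the kernel.} Apply item~(i) of the preceding Proposition with weight $F_s[f]\vap$. Concretely, act on $\phi$ using Eq.~\eqref{eq:U_def}:
\begin{equation*}
(\hat U(\vecm',\bm\alpha)\phi)(\pmb\gamma)=e^{-\ii\vecm'\cdot\bm\alpha/2}e^{\ii\vecm'\cdot\pmb\gamma}\phi(\pmb\gamma-\bm\alpha).
\end{equation*}
Then change variables $\pmb\gamma'=\pmb\gamma-\bm\alpha$ on the torus, using translation invariance of $\ud^2\bm\alpha$. The phases combine to $e^{\ii\vecm'\cdot(\pmb\gamma+\pmb\gamma')/2}$, and the weight is evaluated at $\bm\alpha=\pmb\gamma-\pmb\gamma'$. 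This yields \eqref{A_kernel}.

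\emph{Main obstacle.} The delicate point is the periodicity bookkeeping on the torus. The phase $e^{\ii\vecm'\cdot(\pmb\gamma+\pmb\gamma')/2}$ and the half-angle factor in $\hat U$ are not individually $\ell_j$-periodic: shifting $\bm\alpha$ by $\ell_j$ flips sign when $m'_j$ is odd. So the change of variables must be done with $\pmb\gamma-\pmb\gamma'$ interpreted consistently, and $\vap$ and $F_s[f]$ must be understood with the matching twisted-periodicity convention, exactly as in item~(i) of the preceding Proposition.

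I would handle this by taking item~(i) as given and only verifying that $F_s[f](\vecm',\bm\alpha)$ is genuinely periodic in $\bm\alpha$, which holds since $\vecn\in\Z^2_{\pmb\ell}$. The product $F_s[f]\vap$ then inherits whatever convention $\vap$ uses.

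The interchange of sums and integrals is justified formally, or for $f$ and $\vap$ with summable Fourier data, in the same sense as in the preceding Proposition.
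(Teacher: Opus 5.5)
Your proposal is correct and follows the route the paper indicates. The conjugation rule of Appendix~\ref{weyl} turns $\sfMv(\vecn,\bm\theta)$ into a phase-weighted superposition of the $\hat U(\bm{m}'_{\pmb\ell},\bm\alpha)$; integrating against $f$ collects these phases into $F_s[f]$, so that $\mathrm{Op}^{\vap}_f$ is the operator of Eq.~\eqref{affbop} with effective weight $F_s[f]\,\vap$, and item~(i) then gives the kernel. Your remark on the sign under $\alpha_j\to\alpha_j+\ell_j$ is a genuine subtlety that the paper also leaves implicit in item~(i), and your check that $F_s[f]$ is periodic in its second argument is what lets the product $F_s[f]\,\vap$ inherit whatever convention item~(i) uses for $\vap$.
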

\noindent The proof of Proposition~\ref{CIQ_proof} follows from the composition rules of the operators $\hat U$ given in Appendix~\ref{weyl}. The symplectic Fourier transform and its conjugate,
\begin{equation}
\label{eq:sympl_conj}
\overline{F_s}[f](\vecm_{\pmb \ell},\bm \theta) = \frac{1}{\ell_1\ell_2}\sum_{\vecn \in \Z^2_{\pmb\ell}}\int_{\T^2_{\pmb \ell}}\ud^2\vecalpha\, f(\vecn,\vecalpha)\,e^{\ii(\vecm_{\pmb \ell}\cdot\vecalpha-\vecn\cdot\bm \theta)}\, ,
\end{equation}
are not the inverse of each other, but each is its own inverse, $\overline{F_s}^2 = 1 = F_s^2$, and they satisfy the parity relation
\begin{align}
    \overline{F_s}F_s = F_s\overline{F_s} = \mathsf{P},\qquad (\mathsf{P}f)(\vecm_{\pmb \ell},\bm b) = f(-\vecm_{\pmb \ell},-\bm b).
\end{align}
The integral kernel of Eq.~\eqref{A_kernel} thus reduces the quantization of a classical function to the computation of its symplectic Fourier transform.
%\begin{figure}[H]
    %\centering
   % \includegraphics[width=0.7\linewidth]{Quantization Flowchart.png}
  %  \caption{Starting from the top left, we show the general process of using Covariant Integral Quantization. One begins either with a function on the configuration space or on the phase space. If starting with the configuration space, we first find its phase space portrait. After having this function on the phase space, we then use Eqs.~\eqref{A_operator} and~\eqref{A_kernel} to quantize the function into an operator $\hat{A}_f^\varpi$. This operator is then applied back to a function on the configuration space.}
 %   \label{Quantization Flowchart}
%\end{figure}

\subsection{Semi-classical portrait}

Let $\vap$ and $\vap^\prime$ be two weight functions on $\Gamma$, normalized as $\vap(\bm{0},\bm{0}) = \vap^\prime(\bm 0,\bm 0) = 1$, such that $\hat{\sfM}^{\vap^\prime}$ is a positive operator of unit trace, that is, a density operator. An operator $\hat{A}$ on $\mathcal{H}$ is mapped onto the function on $\Gamma$
\begin{align}
    \label{semiClass}
    \check{A}(\vecm_{\pmb \ell},\vectheta) := \mathrm{Tr}\left[\hat{A}\, \hat U(\vecm_{\pmb \ell},\bm \theta)\hat{\sfM}^{\varpi^{\prime}}\,\hat U^{\dagger}(\vecm_{\pmb \ell},\bm \theta) \right] = \mathrm{Tr}\left[\hat{A}\,\hat{\sfM}^{\varpi^{\prime}}(\vecm_{\pmb \ell},\bm \theta)\right]\,.
\end{align}
When $\hat{A}$ is itself the quantization $\mathrm{Op}_f^{\vap}$ of a classical function $f$, this defines the semi-classical portrait of $f$,
\begin{align}
    f(\vecm_{\pmb \ell},\bm \theta) \mapsto \check{f}(\vecm_{\pmb \ell},\bm \theta) \equiv \widecheck{\mathrm{Op}}_f^{\vap,\vap^\prime}(\vecm_{\pmb \ell},\bm \theta) = \mathrm{Tr}\left[\mathrm{Op}_f^{\vap}\, \hat{\sfM}^{\varpi^{\prime}}(\vecm_{\pmb \ell},\bm \theta)\right],
\end{align}
which is the convolution
\begin{align*}
   \widecheck{\mathrm{Op}}_f^{\vap,\vap^\prime}(\vecm_{\pmb \ell},\bm \theta) &= \frac{1}{\ell_1\ell_2}\sum_{\vecm_{\pmb \ell}^{\prime}}\int \ud^2\bm \theta^{\prime}\, f(\vecm_{\pmb \ell}-\vecm_{\pmb \ell}^{\prime},\bm \theta-\bm \theta^{\prime})\, \mathrm{Tr}\left[\sfMv\, \hat{\sfM}^{\varpi^{\prime}}(\vecm_{\pmb \ell}^{\prime},\bm \theta^{\prime})\right] \,,
\end{align*}
and simplifies to
\begin{align}
\label{semiclassical simple}
    \widecheck{\mathrm{Op}}_f^{\vap,\vap^\prime}(\vecm_{\pmb \ell},\bm \theta) = \frac{1}{(\ell_1\ell_2)^2} \sum_{\vecm^{\prime}_{\pmb\ell},\vecn\in\Z^2_{\pmb\ell}}\int \ud^2\vectheta^{\prime} \int\ud^2\pmb\gamma\, f(\vecm_{\pmb \ell}^{\prime},\vectheta^{\prime})\, \vap(\vecn,\pmb\gamma)\,\vap^\prime(-\vecn,-\pmb\gamma) \nonumber \\\times\  e^{\ii [(\vecm_{\pmb\ell}^{\prime}-\vecm_{\pmb\ell})\cdot \pmb\gamma +\vecn\cdot(\vectheta-\vectheta^{\prime})]}.
\end{align}
Substituting $f(\vecm_{\pmb\ell},\vectheta) =1$ into Eq.~\eqref{semiclassical simple} gives back the constant function $1$. As an example, the semi-classical portrait is computed for the function
\begin{align}
\label{Pdist}
    f(\vecm,\vectheta) = N\,e^{-\left(\frac{m_1^2}{2\sigma_1^2} + \frac{m_2^2}{2\sigma_2^2}\right)}\cos^2\left(k_1\frac{2\pi }{\ell_1}\theta_1\right)\cos^2\left(k_2\frac{2\pi }{\ell_2}\theta_2\right),
\end{align}
where $N$ is a normalization factor and the integers $k_{1/2}$ control the number of zeros of $f(\vecm,\vectheta)$. This function is non-negative and normalized to one, and is therefore a probability distribution on $\Gamma$. The weight functions are chosen as
\begin{align}
\label{Pdist_weight}
    \vap(\vecm,\vectheta) = \vap^\prime(-\vecm,-\vectheta) = e^{-\frac{4\pi}{\ell_1\ell_2}\left(\frac{m_1^2}{2\sigma_1^2}+\frac{m_2^2}{2\sigma_2^2}\right)} \cos^2\left(c_1\theta_1+c_2\theta_2\right).
\end{align}
The parameters of the distribution and of the weight functions control the number and the positions of the zeros, so that the semi-classical portrait shows how the zeros of a classical distribution are transferred through the quantization.
\begin{figure*}[t]
\centering
(a)\\[1mm]
\includegraphics[width=0.8\textwidth]{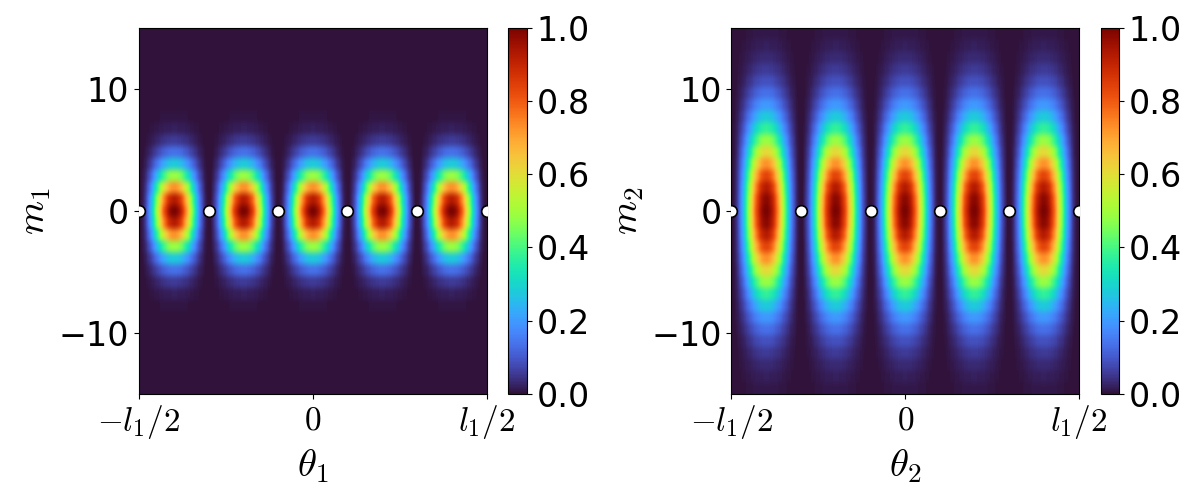}\\[3mm]
(b)\\[1mm]
\includegraphics[width=0.8\textwidth]{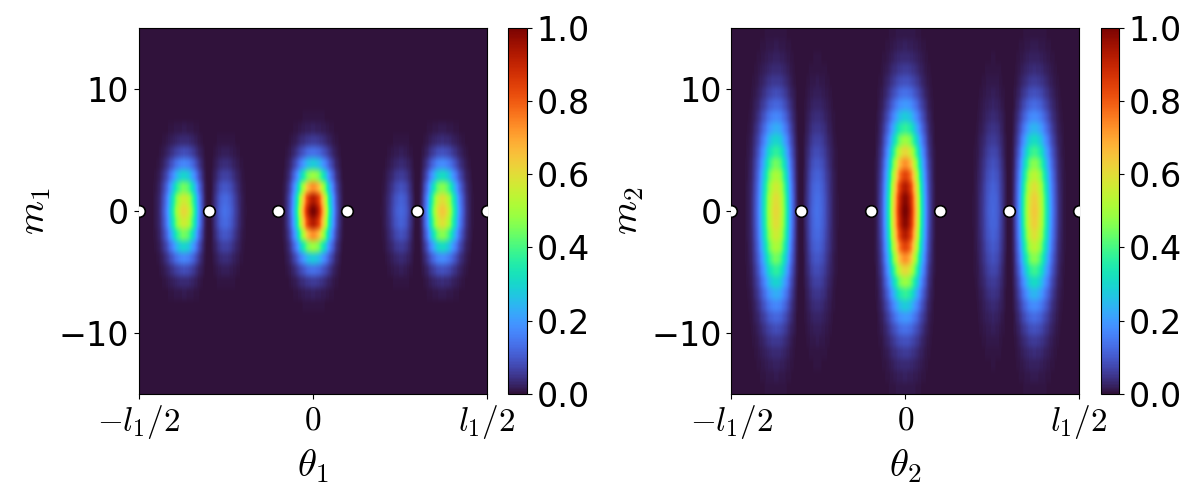}
\caption{(a)~Function $f(\vecm,\vectheta)$ of Eq.~\eqref{Pdist}, with $k_{1/2}=5$, $\sigma_1=2.5$ and $\sigma_2=5$. The four-dimensional distribution is represented by its two marginal distributions on the planes $(m_1,\theta_1)$ and $(m_2,\theta_2)$, obtained by summing over the two other variables. Along the lines $m_{1/2}=0$, the distribution has six zeros. (b)~Semi-classical portrait of $f(\vecm,\vectheta)$, computed with the weight $\vap(\vecm,\vectheta)=\vap^\prime(-\vecm,-\vectheta)$ of Eq.~\eqref{Pdist_weight}. The distribution has zeros along several directions of the phase space, of which those along $m_{1/2}=0$ are the most visible; the semi-classical portrait preserves these zeros.}
\label{semiclassical zeroes}
\end{figure*}
\noindent Figure~\ref{semiclassical zeroes}(a) shows the probability distribution, with its zeros labeled in white for visibility, and Fig.~\ref{semiclassical zeroes}(b) its semi-classical portrait, in which the zeros along $m_{1/2} = 0$ are preserved. The profile along $\theta_1$, shown in Fig.~\ref{semiclassical profile}, makes this conservation of the zeros explicit.
\begin{figure}[t]
\centering
\includegraphics[width=0.9\linewidth]{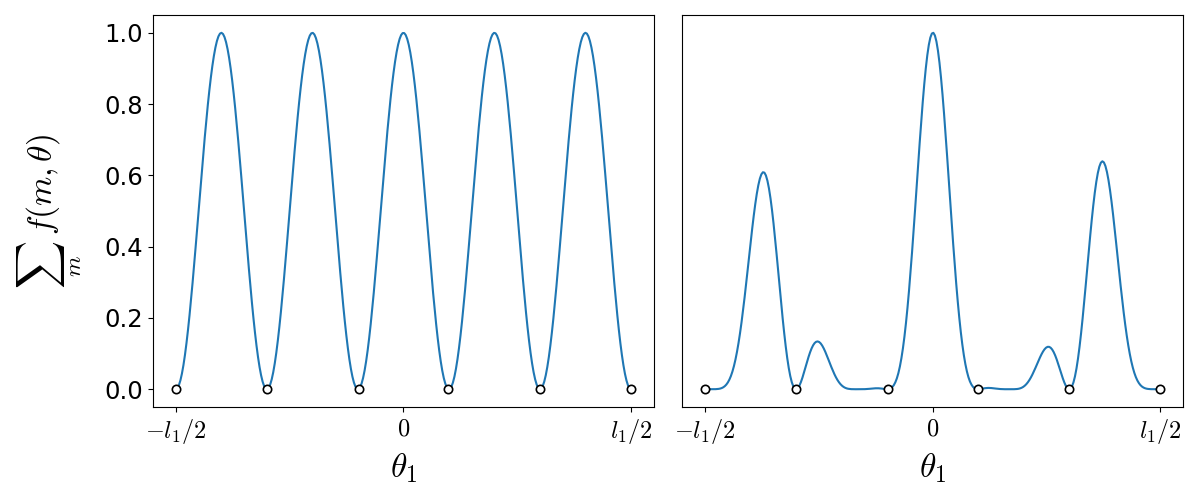}
\caption{Profile along $\theta_1$ of the function $f(\vecm,\vectheta)$ of Eq.~\eqref{Pdist} and of its semi-classical portrait, computed with the weight of Eq.~\eqref{Pdist_weight} and the parameters of Fig.~\ref{semiclassical zeroes}. The semi-classical portrait has a reduced amplitude, but its zeros appear at the same positions and in the same number as those of $f(\vecm,\vectheta)$. The profile along $\theta_2$ differs only by its amplitude.}
\label{semiclassical profile}
\end{figure}

\subsection{Wigner and Husimi distributions}
\label{section5}

In the plane, the Wigner distribution is the expectation value of the displaced parity operator, and the Husimi distribution is the expectation value of the displaced projector onto the vacuum. Both constructions follow from covariant integral quantization with two particular weight functions.

\subsubsection{Wigner distribution}

The unit weight
\begin{align}
    \varpi_P(\vecm_{\pmb \ell},\bm \theta) = 1
\end{align}
gives, through Eq.~\eqref{affbop}, the operator
\begin{align}
   \hat{\sfM}^1 = \frac{1}{\ell_1\ell_2}\sum_{\vecm_{\pmb \ell} \in \Z^2_{\pmb\ell}}\int_{\T^2_{\pmb \ell}}\ud^2\bm \theta \ \hat U(\vecm_{\pmb \ell},\bm \theta)\, ,
\end{align}
which acts on functions on the torus as four times the parity operator of Eq.~\eqref{Par_oper}, $(\hat{\sfM}^1\phi)(\pmb\gamma)=4\phi(-\pmb\gamma)$. Indeed, the kernel of Proposition~(i) contains, for each circle, the sum $\frac{1}{\ell_j}\sum_{m_j\in\Z}e^{\ii\pi m_j(\gamma_j+\gamma_j^{\prime})/\ell_j}=2\,\delta(\gamma_j+\gamma_j^{\prime})$, valid for $\gamma_j$ and $\gamma_j^{\prime}$ in $[-\ell_j/2,\ell_j/2)$. The Weyl transport of the parity operator then gives the Wigner distribution on $\Gamma = \Z^2_{\pmb\ell} \times \T^2_{\pmb \ell}$,
\begin{align}
   W_{\ket{\phi}}(\vecm_{\pmb \ell},\vectheta) = \frac{1}{4\ell_1\ell_2}\int_{-\ell_1}^{\ell_1}\ud z_1\int_{-\ell_2}^{\ell_2}\ud z_2\, e^{\ii\vecm_{\pmb \ell}\cdot\bm z} \,\overline{\phi\left(\vectheta+\frac{\bm z}{2}\right)}\phi\left(\vectheta-\frac{\bm z}{2}\right)\,,
\end{align}
normalized as $\sum_{\vecm_{\pmb \ell}\in\Z^2_{\pmb\ell}}\int_{\T^2_{\pmb \ell}}\ud^2\vectheta\,W_{\ket{\phi}}(\vecm_{\pmb \ell},\vectheta)=1$.
\begin{prop}
    The Wigner distribution on $\Gamma = \Z^2_{\pmb\ell}\times\T^2_{\pmb \ell}$ is the expectation value of the Weyl transport of the parity operator $\hat{\mathrm{P}}$,
    \begin{align}
         W_{\ket{\phi}}(\vecm_{\pmb \ell},\bm \theta) = \frac{1}{\ell_1\ell_2}\left<\phi|\hat{\mathrm{P}}(\vecm_{\pmb \ell},\bm \theta)\phi\right>, \qquad
         \hat{\mathrm{P}}(\vecm_{\pmb \ell},\bm \theta) = \hat U(\vecm_{\pmb \ell},\bm \theta)\hat{\mathrm{P}}\,\hat U^{\dagger}(\vecm_{\pmb \ell},\bm \theta)\,.
    \end{align}
\end{prop}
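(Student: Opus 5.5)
The plan is to compute $\left<\phi|\hat{\mathrm{P}}(\vecm_{\pmb \ell},\bm \theta)\phi\right>$ directly in the $\pmb\gamma$-representation and to show that it equals $\ell_1\ell_2$ times the expression for $W_{\ket{\phi}}$ given above. A single change of variables should then identify the two.

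First I determine the action of $\hat U^\dagger(\vecm_{\pmb \ell},\bm\theta)$. From the definition in Eq.~\eqref{eq:U_def}, and consistently with the composition rules of Appendix~\ref{weyl}, the inverse should be $\hat U(-\vecm_{\pmb \ell},-\bm\theta)$. Explicitly,
\begin{equation*}
(\hat U^\dagger\psi)(\pmb\gamma)=e^{-\ii\frac{\vecm_{\pmb \ell}\cdot\bm\theta}{2}}e^{-\ii\vecm_{\pmb \ell}\cdot\pmb\gamma}\psi(\pmb\gamma+\bm\theta).
\end{equation*}
Unitarity can be checked directly, since $\hat U$ is a phase times a translation. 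Composing in turn with the parity of Eq.~\eqref{Par_oper} and with $\hat U$ then gives a closed form for the transported parity,
\begin{equation*}
\big(\hat{\mathrm{P}}(\vecm_{\pmb \ell},\bm \theta)\phi\big)(\pmb\gamma)=e^{2\ii\vecm_{\pmb \ell}\cdot(\pmb\gamma-\bm\theta)}\,\phi(2\bm\theta-\pmb\gamma).
\end{equation*}
I would double-check the collected phases $e^{-\ii\vecm\cdot\bm\theta/2}$ carefully, because a sign slip there would spoil the final exponent.

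Next I take the inner product:
\begin{equation*}
\left<\phi|\hat{\mathrm{P}}(\vecm_{\pmb \ell},\bm \theta)\phi\right>=\int_{\T^2_{\pmb\ell}}\ud^2\pmb\gamma\,\overline{\phi(\pmb\gamma)}\,e^{2\ii\vecm_{\pmb \ell}\cdot(\pmb\gamma-\bm\theta)}\phi(2\bm\theta-\pmb\gamma).
\end{equation*}
I then substitute $\pmb\gamma=\bm\theta+\bm z/2$. This gives $\ud^2\pmb\gamma=\tfrac14\ud^2\bm z$, the exponential becomes $e^{\ii\vecm_{\pmb \ell}\cdot\bm z}$, and the last factor becomes $\phi(\bm\theta-\bm z/2)$. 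Dividing by $\ell_1\ell_2$ then reproduces exactly $\frac{1}{4\ell_1\ell_2}\int\ud^2\bm z\, e^{\ii\vecm_{\pmb \ell}\cdot\bm z}\,\overline{\phi(\bm\theta+\bm z/2)}\,\phi(\bm\theta-\bm z/2)$.

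The step requiring care is the domain of the $\bm z$-integration. As $\gamma_j$ runs over one period of length $\ell_j$, the variable $z_j$ runs over an interval of length $2\ell_j$. The integrand is $2\ell_j$-periodic in $z_j$: the factors $\phi(\bm\theta\pm\bm z/2)$ are $2\ell_j$-periodic, and $e^{\ii 2\pi m_j z_j/\ell_j}$ is even $\ell_j$-periodic. Hence this interval may be shifted to $[-\ell_j,\ell_j]$, which gives the stated formula.

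As a consistency check, I would compare with the earlier observation that $\hat{\sfM}^1=4\hat{\mathrm{P}}$. By Eq.~\eqref{semiClass}, this makes $W_{\ket{\phi}}$ equal to $\frac{1}{4\ell_1\ell_2}$ times the semi-classical portrait of $\ket{\phi}\bra{\phi}$ for the constant weight. The normalization then follows by summing over $\vecm_{\pmb\ell}$, which produces $\delta$-functions in $\bm z$ on the $2\ell_j$-periodic domain, and integrating over $\bm\theta$, which recovers $\Vert\phi\Vert^2=1$. I expect the main obstacle to be this bookkeeping of the doubled period together with the factor $4$, rather than any conceptual step.
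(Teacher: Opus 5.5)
Your argument is correct, and it is more explicit than what the paper does.

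\textbf{Checks on your computation.} Your closed form $(\hat{\mathrm{P}}(\vecm_{\pmb\ell},\bm\theta)\phi)(\pmb\gamma)=e^{2\ii\vecm_{\pmb\ell}\cdot(\pmb\gamma-\bm\theta)}\phi(2\bm\theta-\pmb\gamma)$ has the right phases. The substitution $\pmb\gamma=\bm\theta+\bm z/2$ gives exactly the prefactor $1/4$ and the integrand of the stated formula. The shift of each $z_j$-interval to $[-\ell_j,\ell_j]$ is justified by the $2\ell_j$-periodicity you identify.

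\textbf{Comparison with the paper.} The paper never computes $\hat U\hat{\mathrm{P}}\hat U^{\dagger}$. It obtains the parity operator from the covariant integral quantization with unit weight. It applies the identity $\frac{1}{\ell_j}\sum_{m_j\in\Z}e^{\ii\pi m_j(\gamma_j+\gamma_j')/\ell_j}=2\delta(\gamma_j+\gamma_j')$, valid for $\gamma_j,\gamma_j'$ in the fundamental interval, to the kernel of Proposition~(i) to get $\hat{\sfM}^1=4\hat{\mathrm{P}}$. It then identifies the Wigner distribution with the expectation value of the Weyl-transported parity.

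That route explains why the parity operator appears. It also ties $W$ to the semi-classical portrait with constant weight, which you use only as a consistency check. However, it leaves the explicit integral formula unchecked. Your direct computation is elementary, avoids the distributional kernel, and actually proves the formula. The factor $4$ reappears as the Jacobian of the half-angle substitution.

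\textbf{A precision for your normalization check.} On $[-\ell_j,\ell_j]$ the Dirac comb produced by $\sum_{m_j}$ is supported at $z_j=0$ and at $z_j=\pm\ell_j$. Hence $\sum_{\vecm_{\pmb\ell}}W_{\ket{\phi}}(\vecm_{\pmb\ell},\bm\theta)$ is the average of $|\phi|^2$ over the four points $\bm\theta+(\epsilon_1\ell_1/2,\epsilon_2\ell_2/2)$ with $\epsilon_j\in\{0,1\}$, not $|\phi(\bm\theta)|^2$. Keeping only $\bm z=\bm 0$ would give a total weight $1/4$ instead of $\Vert\phi\Vert^2=1$.
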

\noindent In the plane, the Wigner distribution satisfies the overlap formula
\begin{align}
     2\pi\int_{\R^2} \ud x\,\ud p\ W_{\hat\rho}(x,p)W_{\hat\rho'}(x,p)=\mathrm{Tr}(\hat{\rho}\hat{\rho}'),
\end{align}
with $\hbar=1$. When the Wigner distributions of the states, operations and measurements of a circuit are non-negative, this formula allows the circuit to be simulated efficiently by sampling~\cite{mari_positive_2012}, so that Wigner negativity is a necessary resource for a quantum computational advantage.
%\nicolas{Do not forget to cite some works, and I would advise to put the complete expression. }

\subsubsection{Husimi distribution}

The Husimi distribution follows from another choice of the weight function, in which the parity operator is replaced by the projector onto a fiducial state. On the double discrete cylinder, the role of the vacuum is played by the fiducial state
\begin{equation}
\label{eq:fiducial_vacuum}
\ket{\psi_0}=\ket{\phi_0}\otimes\ket{\phi_0},\qquad \ket{\phi_0}=A\sum_{j\in\Z}e^{-j^{2}/2}\ket{j},\qquad A=\Big(\sum_{j\in\Z}e^{-j^2}\Big)^{-1/2},
\end{equation}
where $\ket{j}$ denotes the Fourier basis state $\chi$ of integer label $j$ on each circle of the torus, so that $\ket{\psi_0}=A^2\sum_{m_1,m_2\in\Z}e^{-(m_1^2+m_2^2)/2}\ket{m_1,m_2}$. The state $\ket{\phi_0}$ is a Gaussian superposition of the integer states of a discrete cylinder, centered at the height zero, and its wavefunction on the torus is a periodized Gaussian (Sec.~\ref{Fid_vectors}). The weight function
\begin{equation}
\label{eq:husimi_weight}
\vap_0(\vecm_{\pmb \ell},\bm \theta)=\bra{\psi_0}\hat U^{\dagger}(\vecm_{\pmb \ell},\bm \theta)\ket{\psi_0}
\end{equation}
is normalized as $\vap_0(\bm 0,\bm 0)=1$. Since $\mathrm{Tr}[\hat U^{\dagger}(\vecm_{\pmb \ell}',\bm\theta')\hat U(\vecm_{\pmb \ell},\bm\theta)]=\ell_1\ell_2\,\delta_{\vecm_{\pmb \ell},\vecm_{\pmb \ell}'}\,\delta^2(\bm\theta-\bm\theta')$ (Appendix~\ref{weyl}), Eq.~\eqref{affbop} gives $\mathrm{Tr}[\hat U^{\dagger}(\vecm_{\pmb \ell},\bm\theta)\hat{\sfM}^{\vap}]=\vap(\vecm_{\pmb \ell},\bm\theta)$ for any weight, so that the weight of Eq.~\eqref{eq:husimi_weight} yields the projector onto the fiducial state,
\begin{equation}
\hat{\sfM}^{\vap_0}=\ket{\psi_0}\bra{\psi_0}.
\end{equation}
The Weyl transport of this projector, $\hat{\sfM}^{\vap_0}(\vecm_{\pmb \ell},\bm\theta)=\ket{\psi_{0;\vecm_{\pmb \ell},\bm\theta}}\bra{\psi_{0;\vecm_{\pmb \ell},\bm\theta}}$, is the projector onto the displaced fiducial state $\ket{\psi_{0;\vecm_{\pmb \ell},\bm\theta}}=\hat U(\vecm_{\pmb \ell},\bm\theta)\ket{\psi_0}$, which is a coherent state of the double discrete cylinder. In the same way as the Wigner distribution is the expectation value of the displaced parity operator, the Husimi distribution of a state $\ket{\psi}$ is the expectation value of the displaced projector onto the fiducial state,
\begin{equation}
\label{HusimiDef}
\begin{split}
Q_{\ket{\psi}}(\vecm_{\pmb \ell},\bm \theta)&=\bra{\psi}\hat{\sfM}^{\vap_0}(\vecm_{\pmb \ell},\bm\theta)\ket{\psi}=\bra{\psi}\hat U(\vecm_{\pmb \ell},\bm\theta)\ket{\psi_0}\bra{\psi_0}\hat U^{\dagger}(\vecm_{\pmb \ell},\bm\theta)\ket{\psi}\\
&=\big|\bra{\psi_{0;\vecm_{\pmb \ell},\bm\theta}}\ket{\psi}\big|^2 .
\end{split}
\end{equation}
Contrary to the Wigner distribution, the Husimi distribution is non-negative, and its non-classical features are carried by its zeros instead of its negativity. The displaced fiducial states of Eq.~\eqref{HusimiDef} are the wrapped coherent states of a particle on a circle, as shown in Sec.~\ref{sec:wrapped_cs}, and the zeros of the Husimi distribution are studied in Secs.~\ref{Stellar_section} and~\ref{sec:examples}.
%It is important to note that Eq.~\eqref{stell_rep1} contains the holomorphic portion of $\text{Tr}(\hat{\rho}\hat{\rho}^\prime)$ \nicolas{holomorphic portion??}.

%\nicolas{I will strongly suggest a schematic by bloc with the different steps. It will help the QIP community to adopt this formalism. }

\section{Modular variables and stellar representation on the double discrete cylinder}
\label{sec:modular}

This section applies the construction of Sec.~\ref{Torus_intro} to continuous variables, that is, to any pair of canonically conjugate variables obeying the Heisenberg algebra, called position and momentum in the following. The position and momentum are decomposed into integer and modular parts, which identifies the torus as the configuration space and the double discrete cylinder as the phase space. The modular wavefunction is the Zak transform of the wavefunction, and its expansion on an integer basis makes the phase-space distributions depend on the choice of the unit cell. The coherent states of the double discrete cylinder are then identified with the wrapped coherent states of a particle on a circle~\cite{Kowalski_1996}, from which the stellar function and the Husimi distribution are constructed.

\subsection{Modular position and momentum variables}
\label{sec:modular_variables}

The case of interest for continuous variables is the one in which the two periods introduced in Sec.~\ref{non-uniform 2-torus} are related by
\begin{equation}
\label{Fourier_rel}
\ell_2=\frac{2\pi}{\ell_1}.
\end{equation}
With this relation, one period fixes the other through the Fourier duality between a lattice of spacing $\ell_1$ and its reciprocal lattice, and the two periodic variables become conjugate in the same way as position and momentum. In the following, $\ell\equiv\ell_1$, so that $\ell_2=2\pi/\ell$, and the two periodic variables are renamed $\gamma_1\equiv\overline{x}$ and $\gamma_2\equiv\overline{p}$. Units are chosen such that $\hbar=1$, hence $[\hat x,\hat p]=\ii$.
The position and momentum operators are decomposed as
\begin{equation}
\label{eq:modular_decomposition}
\hat x=\ell\hat N_1+\hat{\overline{x}},\qquad \hat p=\frac{2\pi}{\ell}\hat N_2+\hat{\overline{p}},
\end{equation}
where $\hat N_1$ and $\hat N_2$ have integer spectra and the modular variables take the values $\overline{x}\in[-\ell/2,\ell/2)$ and $\overline{p}\in[-\pi/\ell,\pi/\ell)$. Since the product of the two periods equals $2\pi$, the modular operators commute, $[\hat{\overline{x}},\hat{\overline{p}}]=0$~\cite{aharonov_modular_1969,busch_what_1986}, although $[\hat x,\hat p]=\ii$. This property follows from the two displacement operators $e^{2\ii\pi\hat x/\ell}$ and $e^{-\ii\ell\hat p}$. With $\hat A=2\ii\pi\hat x/\ell$ and $\hat B=-\ii\ell\hat p$, the commutator $[\hat A,\hat B]=2\pi[\hat x,\hat p]=2\ii\pi$ is a scalar, and the relation $e^{\hat A}e^{\hat B}=e^{\hat B}e^{\hat A}e^{[\hat A,\hat B]}$, valid whenever $[\hat A,\hat B]$ commutes with $\hat A$ and $\hat B$, gives
\begin{equation}
\label{eq:commuting_displacements}
e^{2\ii\pi\hat x/\ell}\,e^{-\ii\ell\hat p}=e^{2\ii\pi}\,e^{-\ii\ell\hat p}\,e^{2\ii\pi\hat x/\ell}=e^{-\ii\ell\hat p}\,e^{2\ii\pi\hat x/\ell}.
\end{equation}
Each unitary determines its modular operator uniquely on the chosen interval, so that $\hat{\overline{x}}$ and $\hat{\overline{p}}$ are functions of two commuting unitaries, and therefore commute. The commuting pair $(\hat{\overline{x}},\hat{\overline{p}})$ admits the common eigenstates
\begin{equation}
\label{eq:zak_state}
\ket{\overline{x},\overline{p}}=\sqrt{\ell}\sum_{n\in\Z}e^{\ii n\ell\overline{p}}\ket{x=\overline{x}+n\ell},
\end{equation}
which are also eigenstates of the two displacement operators,
\begin{equation}
\label{eq:displacement_eigenvalues}
e^{2\ii\pi\hat x/\ell}\ket{\overline{x},\overline{p}}=e^{2\ii\pi\overline{x}/\ell}\ket{\overline{x},\overline{p}},\qquad e^{-\ii\ell\hat p}\ket{\overline{x},\overline{p}}=e^{-\ii\ell\overline{p}}\ket{\overline{x},\overline{p}}.
\end{equation}
The modular states $\ket{\overline{x},\overline{p}}$ are normalized as $\bra{\overline{x},\overline{p}}\ket{\overline{x}',\overline{p}'}=2\pi\,\delta(\overline{x}-\overline{x}')\delta(\overline{p}-\overline{p}')$ and resolve the identity as $\frac{1}{2\pi}\int_{-\ell/2}^{\ell/2}\!\!\dd\overline{x}\int_{-\pi/\ell}^{\pi/\ell}\!\!\dd\overline{p}\,\ket{\overline{x},\overline{p}}\bra{\overline{x},\overline{p}}=\bu$ (Appendix~\ref{app:modular}). The modular basis therefore satisfies the same orthogonality and completeness conditions as the basis of the torus introduced in Sec.~\ref{non-uniform 2-torus}, up to the factor $2\pi$, with the integration running over the unit cell $[-\ell/2,\ell/2)\times[-\pi/\ell,\pi/\ell)$ of area $2\pi$.

\subsection{The modular wavefunction as a Zak transform and quasi-periodicity conditions}

The modular wavefunction of a state $\ket\psi$,
\begin{equation}
\label{eq:modular_wf}
\psi(\overline{x},\overline{p})=\bra{\overline{x},\overline{p}}\ket{\psi}=\sqrt{\ell}\sum_{n\in\Z}\psi(\overline{x}+n\ell)\,e^{-\ii n\ell\overline{p}},
\end{equation}
is, up to the factor $\sqrt\ell$, the Zak transform of the wavefunction $\psi(x)$~\cite{zak_kq-representation_1972,ketterer_quantum_2016}. For a fixed $\overline{x}$, Eq.~\eqref{eq:modular_wf} is a Fourier series in $\overline{p}$ whose coefficients are the samples of $\psi(x)$ on the comb $\overline{x}+\ell\Z$. Inverting this Fourier series recovers the wavefunction,
\begin{equation}
\label{eq:inverse_zak}
\psi(\overline{x}+n\ell)=\frac{\sqrt\ell}{2\pi}\int_{-\pi/\ell}^{\pi/\ell}\dd\overline{p}\;e^{\ii n\ell\overline{p}}\,\psi(\overline{x},\overline{p}),
\end{equation}
and the resolution of the identity gives the decomposition of the state on the modular basis,
\begin{equation}
\label{eq:zak_decomposition}
\begin{gathered}
\ket\psi=\frac{1}{2\pi}\int_{-\ell/2}^{\ell/2}\!\!\dd\overline{x}\int_{-\pi/\ell}^{\pi/\ell}\!\!\dd\overline{p}\;\psi(\overline{x},\overline{p})\ket{\overline{x},\overline{p}},\\
\frac{1}{2\pi}\int_{-\ell/2}^{\ell/2}\!\!\dd\overline{x}\int_{-\pi/\ell}^{\pi/\ell}\!\!\dd\overline{p}\,|\psi(\overline{x},\overline{p})|^2=\int_\R|\psi(x)|^2\dd x =1.
\end{gathered}
\end{equation}
The Zak transform is thus a unitary map from $L^2(\R)$ onto functions on the unit cell, in the same way as the Fourier transform is a unitary map from $L^2(\R)$ onto itself.
 The substitution $n\to n+1$ in Eq.~\eqref{eq:modular_wf} and the periodicity of $e^{-\ii n\ell\overline{p}}$ give the quasi-periodicity conditions
\begin{equation}
\label{eq:quasi_periodicity}
\psi(\overline{x}+\ell,\overline{p})=e^{\ii\ell\overline{p}}\,\psi(\overline{x},\overline{p}),\qquad \psi\Big(\overline{x},\overline{p}+\frac{2\pi}{\ell}\Big)=\psi(\overline{x},\overline{p}).
\end{equation}

\subsection{Integer basis, configuration spaces and phase space}
\label{sec:config}

The Fourier series of the modular states on the unit cell defines the integer basis $\ket{k,q}$, $k,q\in\Z$:
\begin{equation}
\label{eq:integer_basis}
\begin{gathered}
\ket{\overline{x},\overline{p}}=\sum_{k,q\in\Z}e^{-2\ii\pi k\overline{x}/\ell}\,e^{-\ii\ell q\overline{p}}\ket{k,q},\\
\ket{k,q}=\frac{1}{2\pi}\int_{-\ell/2}^{\ell/2}\!\!\dd\overline{x}\int_{-\pi/\ell}^{\pi/\ell}\!\!\dd\overline{p}\;e^{2\ii\pi k\overline{x}/\ell}\,e^{\ii\ell q\overline{p}}\ket{\overline{x},\overline{p}}.
\end{gathered}
\end{equation}
For the special case of Eq.~\eqref{Fourier_rel}, the integer labels $(m_1,m_2)$ of Sec.~\ref{non-uniform 2-torus} become the labels $(k,q)$ of the integer basis, and the translation $\vectheta$ entering the operators of Sec.~\ref{sec:operators} becomes a translation $\bm b$ of the modular variables. In the position representation, inserting Eq.~\eqref{eq:zak_state} into Eq.~\eqref{eq:integer_basis} and using $\int_{-\pi/\ell}^{\pi/\ell}\dd\overline{p}\,e^{\ii(n+q)\ell\overline{p}}=\frac{2\pi}{\ell}\delta_{n,-q}$ gives
\begin{equation}
\label{eq:integer_basis_position}
\ket{k,q}=\frac{1}{\sqrt\ell}\int_{-\ell/2}^{\ell/2}\dd\overline{x}\;e^{2\ii\pi k\overline{x}/\ell}\ket{x=\overline{x}-q\ell}.
\end{equation}
The integer $q$ labels the cell of width $\ell$ centered at $x=-q\ell$, and the integer $k$ labels the Fourier harmonic within this cell. Since $e^{2\ii\pi kq}=1$, the wavefunction of $\ket{k,q}$ is the plane wave $\ell^{-1/2}e^{2\ii\pi kx/\ell}$ of momentum $2\pi k/\ell$, truncated to the cell centered at $x=-q\ell$. The state $\ket{k,q}$ therefore specifies the position with resolution $\ell$ and the momentum with resolution $2\pi/\ell$, and each state of the integer basis occupies a phase-space area $\ell\times2\pi/\ell=2\pi$, that is one Planck cell. Equivalently, the integer basis reflects the factorization $L^2(\R)\simeq L^2([-\ell/2,\ell/2))\otimes\ell^2(\Z)$ of the Hilbert space into a fine position within a cell, expanded on Fourier harmonics labeled by $k$, and a coarse position labeled by the cell index $q=-N_1$. The integer basis is the common eigenbasis of two commuting integer operators $\hat K$ and $\hat Q$, defined by
\begin{equation}
\label{eq:KQ}
\hat K\ket{k,q}=k\ket{k,q},\qquad \hat Q\ket{k,q}=q\ket{k,q},\qquad [\hat K,\hat Q]=0 .
\end{equation}
The operator $\hat Q=-\hat N_1$ counts the cells, and $\hat K$ counts the Fourier harmonics within a cell. The two displacement operators of Eq.~\eqref{eq:commuting_displacements} act on the integer basis as ladder operators,
\begin{equation}
\label{eq:KQ_ladder}
e^{2\ii\pi\hat{\overline{x}}/\ell}\ket{k,q}=\ket{k+1,q},\qquad e^{\ii\ell\hat{\overline{p}}}\ket{k,q}=\ket{k,q+1},
\end{equation}
as follows from Eq.~\eqref{eq:integer_basis_position}, with $e^{2\ii\pi\hat x/\ell}=e^{2\ii\pi\hat{\overline{x}}/\ell}$, $e^{\ii\ell\hat p}=e^{\ii\ell\hat{\overline{p}}}$ and $e^{\ii\ell\hat p}\ket{x}=\ket{x-\ell}$. The pairs $(\hat{\overline{x}},\hat K)$ and $(\hat{\overline{p}},\hat Q)$ are thus the angle and integer operators of two discrete cylinders, in the same way as the angle and the integer momentum of a particle on a circle. The coefficients of a state $\ket\psi$ in this orthonormal basis are
\begin{equation}
\label{eq:psi_kq}
\begin{gathered}
\psi_{kq}=\bra{k,q}\ket{\psi}=\frac{1}{\sqrt\ell}\int_{-\ell/2}^{\ell/2}\dd\overline{x}\;\psi(\overline{x}-q\ell)\,e^{-2\ii\pi k\overline{x}/\ell},\\
\sum_{k,q\in\Z}|\psi_{kq}|^2=\int_\R|\psi(x)|^2\dd x=1 .
\end{gathered}
\end{equation}

\noindent The coefficients $\psi_{kq}$ are the Fourier coefficients of the modular wavefunction $\psi(\overline{x},\overline{p})$ on the unit cell. Their Fourier series defines the function
\begin{equation}
\label{eq:periodic_representative}
\psi_{\mathrm{per}}(\overline{x},\overline{p})=\sum_{k,q\in\Z}\psi_{kq}\,e^{2\ii\pi k\overline{x}/\ell}\,e^{\ii\ell q\overline{p}},
\end{equation}
which is periodic in both variables and therefore a function on the torus $\T^2_{\pmb\ell}$. A Fourier series is periodic, whereas $\psi(\overline{x},\overline{p})$ is only quasi-periodic in $\overline{x}$. The function $\psi_{\mathrm{per}}(\overline{x},\overline{p})$ therefore coincides with $\psi(\overline{x},\overline{p})$ inside the unit cell, but jumps at the edges $\overline{x}=\pm\ell/2$, where $\psi(\ell/2,\overline{p})=e^{\ii\ell\overline{p}}\psi(-\ell/2,\overline{p})$ differs from $\psi(-\ell/2,\overline{p})$. The size of this jump is set by the value of $\psi(\overline{x},\overline{p})$ at the edges, and slows the decay of the coefficients $\psi_{kq}$ with $k$ down to $1/k$, as for the Fourier series of any function with a discontinuity. The map $\ket\psi\mapsto\psi_{\mathrm{per}}$ is unitary from $L^2(\R)$ onto $L^2(\T^2_{\pmb\ell})$, equipped with the inner product
\begin{align}
    \bra{\phi}\ket{\psi}=\frac{1}{2\pi}\int_{-\ell/2}^{\ell/2}\!\!\dd\overline{x}\int_{-\pi/\ell}^{\pi/\ell}\!\!\dd\overline{p}\,\overline{\phi_{\mathrm{per}}(\overline{x},\overline{p})}\,\psi_{\mathrm{per}}(\overline{x},\overline{p}),
\end{align}
and it is through this map that the functions on the torus of Sec.~\ref{non-uniform 2-torus} represent continuous-variable states. Consequently, the integer basis, and the coherent states and Husimi distribution built on the integer basis below, depend on where the edges of the unit cell are placed along $\overline{x}$. The edges are placed at $\overline{x}=\pm\ell/2$ throughout. The choice of the unit cell along $\overline{p}$ plays no role, since $\psi(\overline{x},\overline{p})$ is periodic in $\overline{p}$.

The torus $\T^2_{\pmb\ell}=\S_{r_1}\times\S_{r_2}$ of the modular variables is the product of two circles of circumferences $\ell$ and $2\pi/\ell$, that is of radii $r_1=\ell/(2\pi)$ and $r_2=1/\ell$. Its dual is the grid $\Z^2_{\pmb\ell}=\frac{2\pi}{\ell}\Z\times\ell\Z$, which carries the discrete variables $2\pi k/\ell$ and $\ell q$: the functions $e^{\ii(2\pi k\overline{x}/\ell+\ell q\overline{p})}$, with $(k,q)\in\Z^2$, are exactly the plane waves that are periodic on $\T^2_{\pmb\ell}$. The phase space
\begin{equation}
\label{phasespace_def}
\Gamma=\Z^2_{\pmb\ell}\times\T^2_{\pmb\ell}
\end{equation}
is thus the product of a configuration space and of its dual, in the same way as the usual phase space $\R^2=\R\times\R$ is the product of the real line and of its dual. Grouping each circle with its dual discrete variable, $\Gamma\simeq\big(\S_{r_1}\times\frac{2\pi}{\ell}\Z\big)\times\big(\S_{r_2}\times\ell\Z\big)$ is the product of two discrete cylinders, which gives its name to the double discrete cylinder. The four variables $(\overline{x},\overline{p},k,q)$ can be split in four ways into a configuration space and its dual, which are listed in Table~\ref{configuration space breakdown}. The two mixed bases are defined from the integer basis by partial Fourier series,
\begin{equation}
\label{eq:mixed_bases}
\begin{gathered}
\ket{\overline{x},q}=\sum_{k\in\Z}e^{-2\ii\pi k\overline{x}/\ell}\ket{k,q}=\sqrt\ell\,\ket{x=\overline{x}-q\ell},\\
\ket{k,\overline{p}}=\sum_{q\in\Z}e^{-\ii\ell q\overline{p}}\ket{k,q},
\end{gathered}
\end{equation}
where the second equality for $\ket{\overline{x},q}$ follows from Eq.~\eqref{eq:integer_basis_position} and from the relation $\sum_{k\in\Z}e^{2\ii\pi k(\overline{x}'-\overline{x})/\ell}=\ell\,\delta(\overline{x}'-\overline{x}),$ valid for $\overline{x}$ and $\overline{x}'$ in the unit cell. The configuration space $\S_{r_1}\times\ell\Z$ is therefore the real line cut into cells of width $\ell$, and the position eigenstates are recovered up to normalization. Summing the remaining index in either mixed basis gives back the modular states, $\ket{\overline{x},\overline{p}}=\sum_q e^{-\ii\ell q\overline{p}}\ket{\overline{x},q}=\sum_k e^{-2\ii\pi k\overline{x}/\ell}\ket{k,\overline{p}}$, consistently with Eq.~\eqref{eq:zak_state}.
\begin{table*}[t]
\caption{Splittings of the phase space $\Gamma$ of Eq.~\eqref{phasespace_def} into a configuration space and its dual space, with the corresponding bases. The two remaining splittings are obtained by exchanging the configuration space and the dual space in each row. All splittings share the same phase space $\Gamma$.}
\label{configuration space breakdown}
\begin{ruledtabular}
\begin{tabular}{cccc}
Configuration space & Basis & Dual space & Basis \\
\hline
$\T^2_{\pmb\ell}=\S_{r_1}\times\S_{r_2}$ & $\ket{\overline{x},\overline{p}}$ & $\Z^2_{\pmb\ell}=\frac{2\pi}{\ell}\Z\times\ell\Z$ & $\ket{k,q}$ \\
$\S_{r_1}\times\ell\Z$ & $\ket{\overline{x},q}$ & $\frac{2\pi}{\ell}\Z\times\S_{r_2}$ & $\ket{k,\overline{p}}$
\end{tabular}
\end{ruledtabular}
\end{table*}

\noindent The torus $\T^2_{\pmb\ell}$ is taken as the configuration space in the following, so that states are represented by the periodic functions $\psi_{\mathrm{per}}(\overline{x},\overline{p})$ of Eq.~\eqref{eq:periodic_representative}, and their dual variables are the integers $(k,q)$. The vectors
\begin{equation}
\bm z=\binom{\overline{x}}{\overline{p}},\qquad \vecm_{\pmb\ell}=\binom{2\pi k/\ell}{\ell q}
\end{equation}
denote a point of the torus and a point of the grid $\Z^2_{\pmb\ell}$, respectively, so that $\vecm_{\pmb\ell}\cdot\bm z=2\pi k\overline{x}/\ell+\ell q\overline{p}$, and a translation on the torus is denoted by $\bm b$.

\subsection{Wrapped coherent states}
\label{sec:wrapped_cs}
\label{Fid_vectors}

\noindent The coherent states of the double discrete cylinder, on which the Husimi distribution of Eq.~\eqref{HusimiDef} is built, are obtained by displacing the fiducial state $\ket{\psi_0}$ of Eq.~\eqref{eq:fiducial_vacuum} with the operators $\hat U(\vecm_{\pmb\ell},\bm b)$. Any nonzero fiducial vector on the torus generates a family of coherent states that resolves the identity; the general construction, the reproducing kernels and the periodization of fiducial vectors are given in Appendix~\ref{app:periodization}. The fiducial state of Eq.~\eqref{eq:fiducial_vacuum} singles out the wrapped coherent states of a particle on a circle~\cite{Kowalski_1996}.

For a particle on a circle, with angle operator $\hat\phi$ and integer operator $\hat N$ satisfying $[\hat\phi,\hat N]=\ii$, the wrapped coherent states are the eigenstates of the operator
\begin{equation}
\label{eq:Z_op}
\hat Z=e^{\ii\hat\phi-\hat N}=e^{\ii\hat\phi}e^{-\hat N-1/2},\qquad \hat Z\ket{n}=e^{-(n+1/2)}\ket{n+1},
\end{equation}
where $\ket{n}$ are the eigenstates of $\hat N$~\cite{Kowalski_1996}. The operator $\hat Z$ is not unitary, and is therefore not a displacement operator. The wrapped coherent states read
\begin{equation}
\label{eq:wrapped_cs}
\begin{gathered}
\ket{m,\phi}=A\,e^{-m^2/2}\sum_{k\in\Z}e^{-k^2/2}\,e^{k(m-\ii\phi)}\ket{k},\\
\hat Z\ket{m,\phi}=e^{-(m-\ii\phi)}\ket{m,\phi},\qquad A=\Big(\sum_{k\in\Z}e^{-k^2}\Big)^{-1/2},
\end{gathered}
\end{equation}
where the height $m$ and the angle $\phi$ combine into the complex eigenvalue label $m-\ii\phi$, and $\ket{m=0,\phi=0}=\ket{\phi_0}$ is the fiducial state of Eq.~\eqref{eq:fiducial_vacuum}. On the double discrete cylinder, the two angles are $\phi_1=2\pi\overline{x}/\ell$ and $\phi_2=\ell\overline{p}$, and the two integer operators are the operators $\hat K$ and $\hat Q$ of Eq.~\eqref{eq:KQ}. The coherent states are the joint eigenstates of $\hat Z_1=e^{2\ii\pi\hat{\overline{x}}/\ell-\hat K}$ and $\hat Z_2=e^{\ii\ell\hat{\overline{p}}-\hat Q}$,
\begin{equation}
\label{eq:coherent}
\ket{m_1,\overline{x};m_2,\overline{p}}=A^2e^{-(m_1^2+m_2^2)/2}\sum_{k,q\in\Z}e^{-(k^2+q^2)/2}
 e^{k(m_1-2\ii\pi\overline{x}/\ell)}\,e^{q(m_2-\ii\ell\overline{p})}\ket{k,q},
\end{equation}
where the integers $m_1,m_2\in\Z$, called heights, label the points of the grid $\Z^2_{\pmb\ell}$ (Appendix~\ref{app:coherent}). The phase space is made of the position cylinder, with points $(m_1,\overline{x})$, and of the momentum cylinder, with points $(m_2,\overline{p})$. At fixed heights, the points $(\overline{x},\overline{p})$ form a copy of the torus, called the slice $(m_1,m_2)$ of $\Gamma$. The torus representation of the coherent states illustrates their construction. For integer heights, the function on the torus that represents the coherent state of Eq.~\eqref{eq:coherent} is the displaced fiducial vector (Appendix~\ref{app:coherent})
\begin{equation}
\label{eq:cs_fiducial}
\begin{gathered}
\bra{\overline{x}',\overline{p}'}\ket{m_1,\overline{x};m_2,\overline{p}}=e^{-\ii\vecm_{\pmb\ell}\cdot\bm b/2}\,\big(\hat U(\vecm_{\pmb\ell},\bm b)\psi_0\big)(\bm z'),\\
\psi_0=A^2\,\textsf{Per}\,\left(G_{\bm r}\right),\qquad G_{\bm r}(x,y)=e^{-x^2/2r_1^2-y^2/2r_2^2},
\end{gathered}
\end{equation}
with $\bm z^{\prime}=(\overline{x}^{\prime},\overline{p}^{\prime})$, $\bm b=(\overline{x},\overline{p})$, $\vecm_{\pmb\ell}=(2\pi m_1/\ell,\ell m_2)$, the displacement operator $\hat U$ of Eq.~\eqref{eq:U_def}, and the periodization operator $\textsf{Per}$, which sums the copies of a function translated by all the periods of the torus (Appendix~\ref{app:periodization}). The fiducial vector $\psi_0$ is thus a periodized Gaussian, or wrapped Gaussian, whose widths equal the radii $r_1=\ell/(2\pi)$ and $r_2=1/\ell$ of the two circles of the torus. Figure~\ref{Wrapped_gauss_visual} shows an example of a wrapped Gaussian on the torus: a Gaussian envelope, stratified by narrower Gaussian packets, which are the copies translated by the periods of the torus (Appendix~\ref{app:periodization}). The coherent states of Eq.~\eqref{eq:coherent} are normalized, peaked at $(\overline{x},\overline{p})$, and resolve the identity (Appendix~\ref{app:coherent}),
\begin{equation}
\label{eq:resolution}
\frac{1}{2\pi}\sum_{m_1,m_2\in\Z}\int_{-\ell/2}^{\ell/2}\!\!\dd\overline{x}\int_{-\pi/\ell}^{\pi/\ell}\!\!\dd\overline{p}\;\ket{m_1,\overline{x};m_2,\overline{p}}\bra{m_1,\overline{x};m_2,\overline{p}}=\bu .
\end{equation}
\begin{figure*}[t]
\centering
\includegraphics[width=0.5\textwidth]{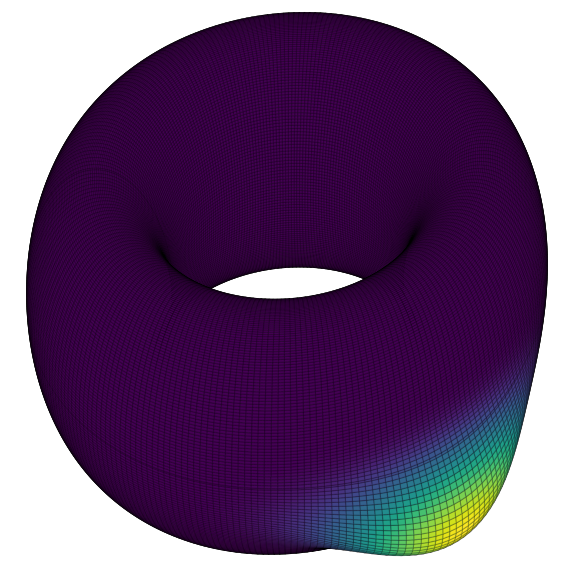}\\[1.2cm]
\begin{minipage}[t]{0.48\textwidth}
\centering
\includegraphics[width=\textwidth]{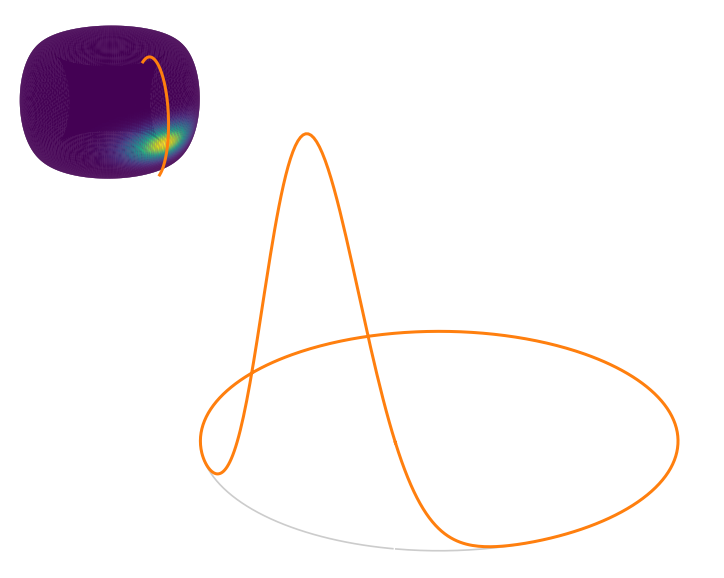}\\
(a)
\end{minipage}\hfill
\begin{minipage}[t]{0.48\textwidth}
\centering
\includegraphics[width=\textwidth]{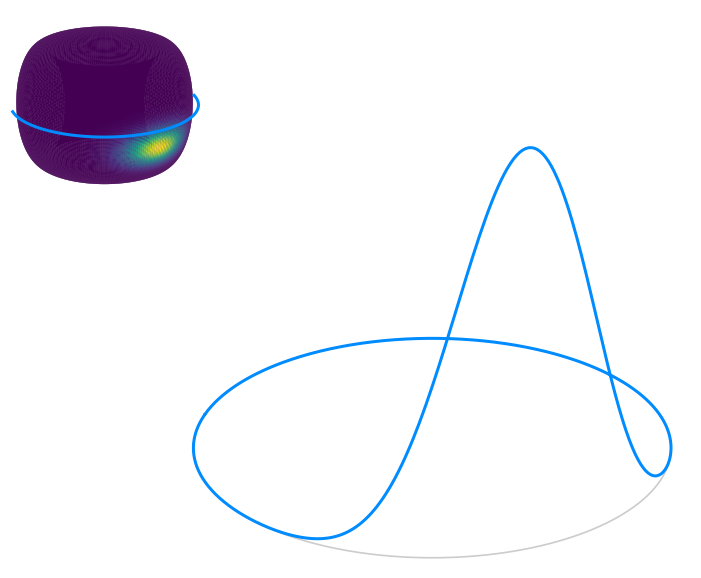}\\
(b)
\end{minipage}
\caption{Wrapped Gaussian of Eq.~\eqref{wrapped_gaussian}, a periodized Gaussian of the type used as fiducial vector of the coherent states on the torus, with the period $\ell=\sqrt{\pi}$. Since the periodization operator is not unitary, the function is normalized after periodization, and its modulus is represented. Top: modulus on the torus. (a)~Cut along $\overline{x}$, with $\sigma_1=0.1$. (b)~Cut along $\overline{p}$, with $\sigma_2=0.2$.}
\label{Wrapped_gauss_visual}
\end{figure*}

\subsection{Stellar function and Husimi distribution}
\label{Stellar_section}
\label{sec:stellar_husimi}

The Wigner distribution on the double cylinder~\cite{fabre_wigner_2020} is adapted to translationally invariant states and removes the redundancy of their planar representation. The stellar function and the Husimi distribution are now constructed on the same phase space from the wrapped coherent states. In the plane, the number of zeros of the stellar function defines the stellar rank, which quantifies the non-Gaussian character of a state~\cite{chabaud_stellar_2020}. On the double discrete cylinder, the relevant zeros are those lying on the lattice of heights of the phase space, which leads to the definition of a stellar rank on $\Gamma$.

\subsubsection{Stellar function}

The stellar function $F_{\ket{\psi}}(z_1,z_2)$ of a state $\ket{\psi}$ is the overlap between $\ket{\psi}$ and the coherent states, without the Gaussian prefactor of Eq.~\eqref{eq:coherent},
\begin{equation}
\label{eq:stellar}
F_{\ket{\psi}}(z_1,z_2)=e^{(m_1^2+m_2^2)/2}\bra{m_1,\overline{x};m_2,\overline{p}}\ket{\psi}=A^2\sum_{k,q\in\Z}\psi_{kq}\,e^{-(k^2+q^2)/2}\,e^{kz_1+qz_2},
\end{equation}
with the coefficients $\psi_{kq}$ of Eq.~\eqref{eq:psi_kq} and the complex variables
\begin{equation}
\label{eq:z1z2}
z_1=m_1+\frac{2\ii\pi}{\ell}\overline{x},\qquad z_2=m_2+\ii\ell\overline{p} .
\end{equation}
As in the plane~\cite{chabaud_stellar_2020}, removing the Gaussian prefactor leaves a function that depends holomorphically on the complex variables $(z_1,z_2)$, whose real parts are the heights and whose imaginary parts are the angles.

\subsubsection{Husimi distribution and its marginals}

The Husimi distribution of Eq.~\eqref{HusimiDef}, built on the coherent states of Eq.~\eqref{eq:coherent}, follows from the stellar function as
\begin{equation}
\label{eq:husimi}
Q_{\ket{\psi}}(m_1,\overline{x};m_2,\overline{p})=\big|\bra{m_1,\overline{x};m_2,\overline{p}}\ket{\psi}\big|^2=e^{-m_1^2-m_2^2}\,|F_{\ket{\psi}}(z_1,z_2)|^2,
\end{equation}
and is normalized by Eq.~\eqref{eq:resolution} as
\begin{equation}
\frac{1}{2\pi}\sum_{m_1,m_2\in\Z}\int_{-\ell/2}^{\ell/2}\!\!\dd\overline{x}\int_{-\pi/\ell}^{\pi/\ell}\!\!\dd\overline{p}\,Q_{\ket{\psi}}(m_1,\overline{x};m_2,\overline{p})=1 .
\end{equation}
The weight carried by the slice $(m_1,m_2)$ is (Appendix~\ref{app:coherent})
\begin{equation}
\label{eq:slice_weight}
\begin{split}
w_{\ket{\psi}}(m_1,m_2)&=\frac{1}{2\pi}\int_{-\ell/2}^{\ell/2}\!\!\dd\overline{x}\int_{-\pi/\ell}^{\pi/\ell}\!\!\dd\overline{p}\;Q_{\ket{\psi}}(m_1,\overline{x};m_2,\overline{p})\\
&=A^4\sum_{k,q\in\Z}|\psi_{kq}|^2\,e^{-(k-m_1)^2-(q-m_2)^2}.
\end{split}
\end{equation}
Summing $Q_{\ket{\psi}}$ over one of the two cylinders defines the marginal distributions on the position and momentum cylinders,
\begin{equation}
\label{eq:cylinder_husimi}
\begin{gathered}
Q_1(m_1,\overline{x})=\frac{1}{2\pi}\sum_{m_2\in\Z}\int_{-\pi/\ell}^{\pi/\ell}\dd\overline{p}\;Q_{\ket{\psi}}(m_1,\overline{x};m_2,\overline{p}),\\
Q_2(m_2,\overline{p})=\frac{1}{2\pi}\sum_{m_1\in\Z}\int_{-\ell/2}^{\ell/2}\dd\overline{x}\;Q_{\ket{\psi}}(m_1,\overline{x};m_2,\overline{p}),
\end{gathered}
\end{equation}
normalized as $\sum_{m_1\in\Z}\int_{-\ell/2}^{\ell/2}\dd\overline{x}\,Q_1(m_1,\overline{x})=\sum_{m_2\in\Z}\int_{-\pi/\ell}^{\pi/\ell}\dd\overline{p}\,Q_2(m_2,\overline{p})=1$. These two marginals represent a state on two cylinders, which can be drawn, whereas $\Gamma$ itself is four-dimensional. In general, they are marginals of the Husimi distribution and not Husimi distributions of the state on each cylinder. For a separable state, $\psi_{kq}=f_kg_q$, the Husimi distribution factorizes into the Husimi distributions of $f$ on the position cylinder and of $g$ on the momentum cylinder, which are proportional to $Q_1(m_1,\overline{x})$ and $Q_2(m_2,\overline{p})$, and the double discrete cylinder splits into two independent discrete cylinders. The marginals are needed for non-separable states such as the vacuum, whose modular wavefunction couples the two modular variables beyond the single-cell approximation. Integrating $Q_1$ and $Q_2$ over the angles gives the weights of the heights,
\begin{equation}
\label{eq:height_weights}
\begin{gathered}
P_1(m_1)=\sum_{m_2\in\Z}w_{\ket{\psi}}(m_1,m_2)=A^2\sum_{k,q\in\Z}|\psi_{kq}|^2e^{-(k-m_1)^2},\\
P_2(m_2)=A^2\sum_{k,q\in\Z}|\psi_{kq}|^2e^{-(q-m_2)^2}.
\end{gathered}
\end{equation}

\subsubsection{Zeros of the stellar function and stellar rank}

The stellar function $F_{\ket{\psi}}(z_1,z_2)$ is an entire function of $(z_1,z_2)\in\C^2$, $2\ii\pi$-periodic in each variable, whose zeros are counted with the argument principle of complex analysis (Appendix~\ref{app:lemma}). The zeros relevant to the Husimi distribution are those lying on $\Gamma$, with integer real parts $\mathrm{Re}\,z_1=m_1$ and $\mathrm{Re}\,z_2=m_2$, and their number, counted with their multiplicities, defines the stellar rank on $\Gamma$. Most of the examples of Sec.~\ref{sec:examples} reduce to the theta sums
\begin{equation}
\label{eq:theta_sum}
S_a(\zeta)=\sum_{k\in\Z}e^{-ak^2}e^{k\zeta},\qquad a>0,\quad \zeta\in\C,
\end{equation}
for which the argument principle gives a simple result: $S_a(\zeta)$ vanishes if and only if $\zeta=(2n+1)a+\ii(2p+1)\pi$, with $n,p\in\Z$, so that each period cell of the complex plane contains exactly one zero, which is simple and located at the center of the cell (Lemma~\ref{lemma:zeros} of Appendix~\ref{app:lemma}).

\section{Stellar representation of position eigenstates, of the vacuum and of GKP states}
\label{sec:examples}

This section applies the stellar representation on the double discrete cylinder of Sec.~\ref{Stellar_section} to several examples. Position eigenstates and ideal GKP states, whose modular wavefunctions are concentrated on circles or points of the torus, have stellar functions given in closed form by the theta sums of Eq.~\eqref{eq:theta_sum}. The vacuum, which is not translationally invariant, requires the exact coefficients in the integer basis. Physical GKP states are treated in the large-comb approximation, in which their stellar functions factorize.

\subsection{Position eigenstates: zeros between the lattice sites}

The position eigenstate
\begin{equation}
\ket{x=n\ell+\overline{x}_0}=\frac{\sqrt\ell}{2\pi}\int_{-\pi/\ell}^{\pi/\ell}\dd\overline{p}\;e^{-\ii n\ell\overline{p}}\ket{\overline{x}_0,\overline{p}}
\end{equation}
has the coefficients $\psi_{kq}=\ell^{-1/2}e^{-2\ii\pi k\overline{x}_0/\ell}\delta_{q,-n}$ (Appendix~\ref{app:modular}) and the stellar function
\begin{equation}
F_{\ket{\psi}}(z_1,z_2)=\frac{A^2}{\sqrt\ell}\,e^{-n^2/2}\,e^{-nz_2}\,S_{1/2}\Big(z_1-\frac{2\ii\pi}{\ell}\overline{x}_0\Big).
\end{equation}
The Husimi distribution of the position eigenstate,
\begin{equation}
Q_{\ket{\psi}}(m_1,\overline{x};m_2,\overline{p})=\frac{A^4}{\ell}\,e^{-(m_2+n)^2}\,e^{-m_1^2}\Big|S_{1/2}\Big(m_1+\frac{2\ii\pi}{\ell}(\overline{x}-\overline{x}_0)\Big)\Big|^2,
\end{equation}
is uniform in $\overline{p}$ and Gaussian in $m_2$ around $-n$. By Lemma~\ref{lemma:zeros}, the zeros of $Q_{\ket{\psi}}$ lie at the half-integer heights $m_1$ and at $\overline{x}=\overline{x}_0+\ell/2$, opposite to the position of the state on the position cylinder: the position eigenstate has no zero on $\Gamma$.

\subsection{Vacuum and Gaussian states centered at the origin}
\label{sec:coherent_state}

\subsubsection{Single-cell approximation}

A Gaussian state of width $\sigma$ centered at the origin,
\begin{equation}
\label{eq:coherent_wf}
\psi(x)=(\pi\sigma^2)^{-1/4}\,e^{-x^2/2\sigma^2},
\end{equation}
reduces to the vacuum for $\sigma=1$, in units where $\hbar=1$. The Gaussian state is not invariant under discrete translations. When $\sigma\ll\ell$, only the cell centered at the origin contributes to the Zak transform of Eq.~\eqref{eq:modular_wf}, and the modular wavefunction reduces to (Appendix~\ref{app:coherent_state})
\begin{equation}
\label{eq:coherent_single_cell}
\psi(\overline{x},\overline{p})\simeq\sqrt\ell\,(\pi\sigma^2)^{-1/4}\,G_\sigma(\overline{x})\,H(\overline{p}),
\end{equation}
where $G_\sigma(\overline{x})=e^{-\overline{x}^2/2\sigma^2}$ and $H(\overline{p})=1$ is a rectangular function of height one covering the whole momentum cell, of width $2\pi/\ell$. Although the Gaussian state is not translationally invariant, its single-cell modular wavefunction is separable. This single-cell form is periodic rather than quasi-periodic in $\overline{x}$, and violates Eq.~\eqref{eq:quasi_periodicity} by terms of relative order $e^{-\ell^2/8\sigma^2}$. In particular, the single-cell form does not vanish, whereas the exact modular wavefunction vanishes at the corner $(\overline{x},\overline{p})=(\ell/2,\pi/\ell)$ of the unit cell, where the terms $n$ and $-1-n$ of the Zak transform cancel.

\subsubsection{Exact coefficients and stellar function}

The coefficients in the integer basis are obtained exactly from Eq.~\eqref{eq:psi_kq}, since each cell contains a portion of the Gaussian (Appendix~\ref{app:coherent_state}):
\begin{equation}
\label{eq:psi_kq_coherent}
\begin{gathered}
\psi_{kq}=(\pi\sigma^2)^{-1/4}\sigma\sqrt{\frac{\pi}{2\ell}}\;e^{-\frac{2\pi^2\sigma^2}{\ell^2}k^2}\Big[\mathrm{erf}\big(\xi^+_{q}+\ii\beta_k\big)-\mathrm{erf}\big(\xi^-_{q}+\ii\beta_k\big)\Big],\\
\xi^\pm_q=\Big(\pm\frac12-q\Big)\frac{\ell}{\sqrt2\,\sigma},\qquad \beta_k=\frac{\sqrt2\pi\sigma k}{\ell}.
\end{gathered}
\end{equation}
In the single-cell approximation, the flat momentum function $H(\overline{p})$ only retains $q=0$, and the stellar function
\begin{equation}
\label{eq:coherent_stellar}
\begin{gathered}
F_{\ket{\psi}}(z_1,z_2)\simeq A^2(\pi\sigma^2)^{-1/4}\sigma\sqrt{\frac{\pi}{2\ell}}\sum_{k\in\Z}e^{-ak^2}e^{kz_1}\\
\times\Big[\mathrm{erf}\Big(\frac{\ell}{2\sqrt2\,\sigma}+\ii\beta_k\Big)+\mathrm{erf}\Big(\frac{\ell}{2\sqrt2\,\sigma}-\ii\beta_k\Big)\Big],\qquad a=\frac12+\frac{2\pi^2\sigma^2}{\ell^2},
\end{gathered}
\end{equation}
depends on $z_1$ only. The Husimi distribution is then uniform in $\overline{p}$ and Gaussian along the heights of the momentum cylinder, $P_2(m_2)\propto e^{-m_2^2}$, while it extends over $|m_1|\lesssim\ell/(2\pi\sigma)$ on the position cylinder. For $\sigma\ll\ell$, the bracket of Eq.~\eqref{eq:coherent_stellar} is equal to $2$ up to terms of order $e^{-\ell^2/8\sigma^2}$, and $F_{\ket{\psi}}(z_1,z_2)\propto S_a(z_1)$. By Lemma~\ref{lemma:zeros}, the zeros of $F_{\ket{\psi}}$ then form the circles $m_1=(2n+1)a$, $\overline{x}=\ell/2$, for all $(m_2,\overline{p})$. These heights are not integers for generic widths, and $Q_{\ket{\psi}}$ has no zero on $\Gamma$ in the single-cell approximation.

\subsubsection{Zeros of the Husimi distribution of the vacuum}

\noindent Beyond the single-cell approximation, the tails of the Gaussian in the neighboring cells, $q\neq0$, couple the two modular variables through the $k$-dependence of the error functions in Eq.~\eqref{eq:psi_kq_coherent}. Since $\psi(x)$ is real and even, the stellar function is real on the circles $\overline{x}\in\{0,\ell/2\}$ of the slices $(m_1,0)$ and on the circles $\overline{p}\in\{0,\pi/\ell\}$ of the slices $(0,m_2)$ (Appendix~\ref{app:coherent_state}). On each of these circles, $F_{\ket{\psi}}(z_1,z_2)$ is therefore a real function of the angle that parametrizes the circle, $\overline{p}$ or $\overline{x}$. A continuous real function that takes values of both signs along a circle vanishes at least twice on it, and $Q_{\ket{\psi}}=e^{-m_1^2-m_2^2}|F_{\ket{\psi}}|^2$ vanishes at the same points. In the single-cell approximation, $F_{\ket{\psi}}$ depends on $z_1$ only, and is therefore constant along the circles $\overline{x}\in\{0,\ell/2\}$ of the slices $(m_1,0)$, which are parametrized by $\overline{p}$. The coupling between the two modular variables makes this real function vary with $\overline{p}$, and it vanishes wherever it changes sign. Such zeros persist under small changes of the widths.

For the vacuum with the period of the square GKP code, $\sigma=1$ and $\ell=2\sqrt\pi$, the marginal distributions are concentrated on a few heights of both cylinders, as shown in Fig.~\ref{fig:coherent}(a)--(c), and the Husimi distribution has zeros on $\Gamma$, located numerically with the method of Appendix~\ref{app:numerics}. The first zeros lie on the circle $\overline{x}=\ell/2$ of the slices $(\pm2,0)$, next to the circle of zeros of the single-cell approximation at the height $a$, as shown in Fig.~\ref{fig:coherent}(d) and~(e). On this circle, $z_1=2+\ii\pi$ and $z_2=\ii\ell\overline{p}$, and the real function $F_{\ket{\psi}}(2+\ii\pi,\ii\ell\overline{p})$ changes sign twice as $\overline{p}$ runs over the unit cell, whereas it is constant in the single-cell approximation, as shown in Fig.~\ref{fig:coherent}(f). These zeros leave the explored range of heights as $\sigma/\ell$ decreases. For $\sigma=1$ and $\ell=2\sqrt\pi$, the single-cell approximation is poor, since the vacuum is not negligible at the edges $\overline{x}=\pm\ell/2$ of the unit cell, and the positions of the zeros depend on the choice of the unit cell: shifting the edges of the unit cell moves the zeros to other slices, although zeros are found for all the tested positions of the edges. Beyond the slices $(\pm2,0)$ shown in Fig.~\ref{fig:coherent}(d,e), zeros are also found on slices of smaller weight. The vacuum is a Gaussian state, whose Husimi distribution has no zero in the plane, but the vacuum is not a coherent state of the double discrete cylinder, and its Husimi distribution on this phase space can vanish. This property echoes the role of the vacuum as a resource in architectures based on GKP states~\cite{calcluth_vacuum_2023}.

\begin{figure*}[t]
\centering
\includegraphics[width=\textwidth]{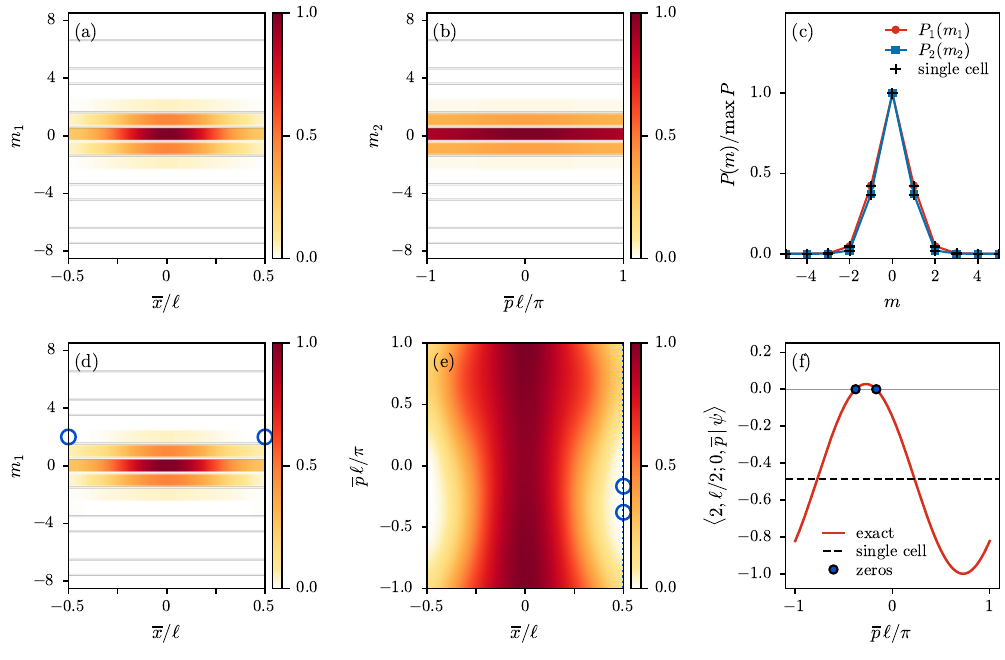}
\caption{Husimi distribution of the vacuum, $\sigma=1$, on the double discrete cylinder with $\ell=2\sqrt\pi$, that is $\sigma/\ell\simeq0.28$, computed from the exact coefficients of Eq.~\eqref{eq:psi_kq_coherent}. Each color map is normalized to its maximum. (a,b)~Marginal distributions on the position and momentum cylinders, $Q_1(m_1,\overline{x})$ and $Q_2(m_2,\overline{p})$, unrolled into planes. (c)~Weights of the heights, $P_1(m_1)$ and $P_2(m_2)$, normalized to their maximum: exact values as lines with symbols, and single-cell approximation as crosses. (d)~Cut $Q_{\ket{\psi}}(m_1,\overline{x};0,-0.379\pi/\ell)$ through a zero, marked by the open circles at $m_1=2$ on the edges $\overline{x}=\pm\ell/2$, which are identified. (e)~Slice $(m_1,m_2)=(2,0)$: the two zeros, marked by open circles, lie on the circle $\overline{x}=\ell/2$, at $\overline{p}\simeq-0.379\,\pi/\ell$ and $-0.166\,\pi/\ell$. (f)~Amplitude $\bra{2,\ell/2;0,\overline{p}}\ket{\psi}$ on this circle, which is real, normalized to its maximum modulus: the exact amplitude changes sign twice, which gives the two zeros of panel~(e), whereas the single-cell amplitude is constant. In the single-cell approximation, the circles of zeros lie at the non-integer heights $m_1=a\simeq2.07$, $3a\simeq6.21,\dots$, and the Husimi distribution has no zero on $\Gamma$.}
\label{fig:coherent}
\end{figure*}

\subsection{Ideal GKP states: no zero for the logical states, lattice zeros for their superpositions}
\label{sec:ideal_gkp}

\subsubsection{Ideal logical states as points of the torus}

The Gottesman--Kitaev--Preskill (GKP) code encodes a logical qubit into the infinite-dimensional Hilbert space of an oscillator, and corrects the small displacement errors that dominate in bosonic systems~\cite{gottesman_encoding_2001}. The ideal, infinitely squeezed logical states are combs of position eigenstates,
\begin{equation}
\label{eq:ideal_gkp}
  \ket{0}_{x}=\sum_{s\in\Z}\ket{x=s\ell-\ell/4},\qquad \ket{1}_{x}=\sum_{s\in\Z}\ket{x=s\ell+\ell/4},
\end{equation}
where $\ell$ is the period of each comb. With $\bra{p}\ket{x}=e^{-\ii px}/\sqrt{2\pi}$, the logical states are also combs in the momentum representation, with spacing $2\pi/\ell$, the state $\ket{1}_x$ carrying the phase pattern $(-1)^n$ with respect to $\ket{0}_x$:
\begin{equation}
  \ket{0}_{x}=\ket{+}_{p}=\frac{\sqrt{2\pi}}{\ell}\sum_{n\in\Z}\ii^{\,n}\ket{\tfrac{2\pi}{\ell}n}_{p},\qquad
  \ket{1}_{x}=\ket{-}_{p}=\frac{\sqrt{2\pi}}{\ell}\sum_{n\in\Z}(-1)^{n}\,\ii^{\,n}\ket{\tfrac{2\pi}{\ell}n}_{p}.
\end{equation}
The common phase $\ii^{\,n}$ results from the shift of both combs by $-\ell/4$, $\ket{\pm}_p=e^{\ii\ell\hat p/4}\ket{\pm}^{(0)}_p$, where $\ket{\pm}^{(0)}_p=\frac{\sqrt{2\pi}}{\ell}\sum_n(\pm1)^n\ket{\frac{2\pi}{\ell}n}_p$ are the combs associated with the unshifted positions $s\ell$ and $s\ell+\ell/2$. The square GKP code corresponds to $\ell=2\sqrt\pi$, for which the spacing $\ell/2$ between the peaks of $\ket{0}_x$ and $\ket{1}_x$ equals the spacing $2\pi/\ell$ of the momentum combs. In the modular representation, each comb reduces to a single point of the torus~\cite{fabre_wigner_2020},
\begin{equation}
\ket{0}_{x}=\ell^{-1/2}\ket{\overline{x}=-\ell/4,\overline{p}=0},\qquad \ket{1}_x=\ell^{-1/2}\ket{\overline{x}=\ell/4,\overline{p}=0},
\end{equation}
with the coefficients $\psi_{kq}=\ell^{-1/2}e^{-2\ii\pi kx_0/\ell}$, where $x_0=\mp\ell/4$ (Appendix~\ref{app:modular}). The two logical states sit at the middle of the two halves of the unit cell along the modular position, with the common modular momentum $\overline{p}=0$, as represented in Fig.~\ref{fig:idealGKP}.
\begin{figure}[t]
\centering
\includegraphics[width=\linewidth]{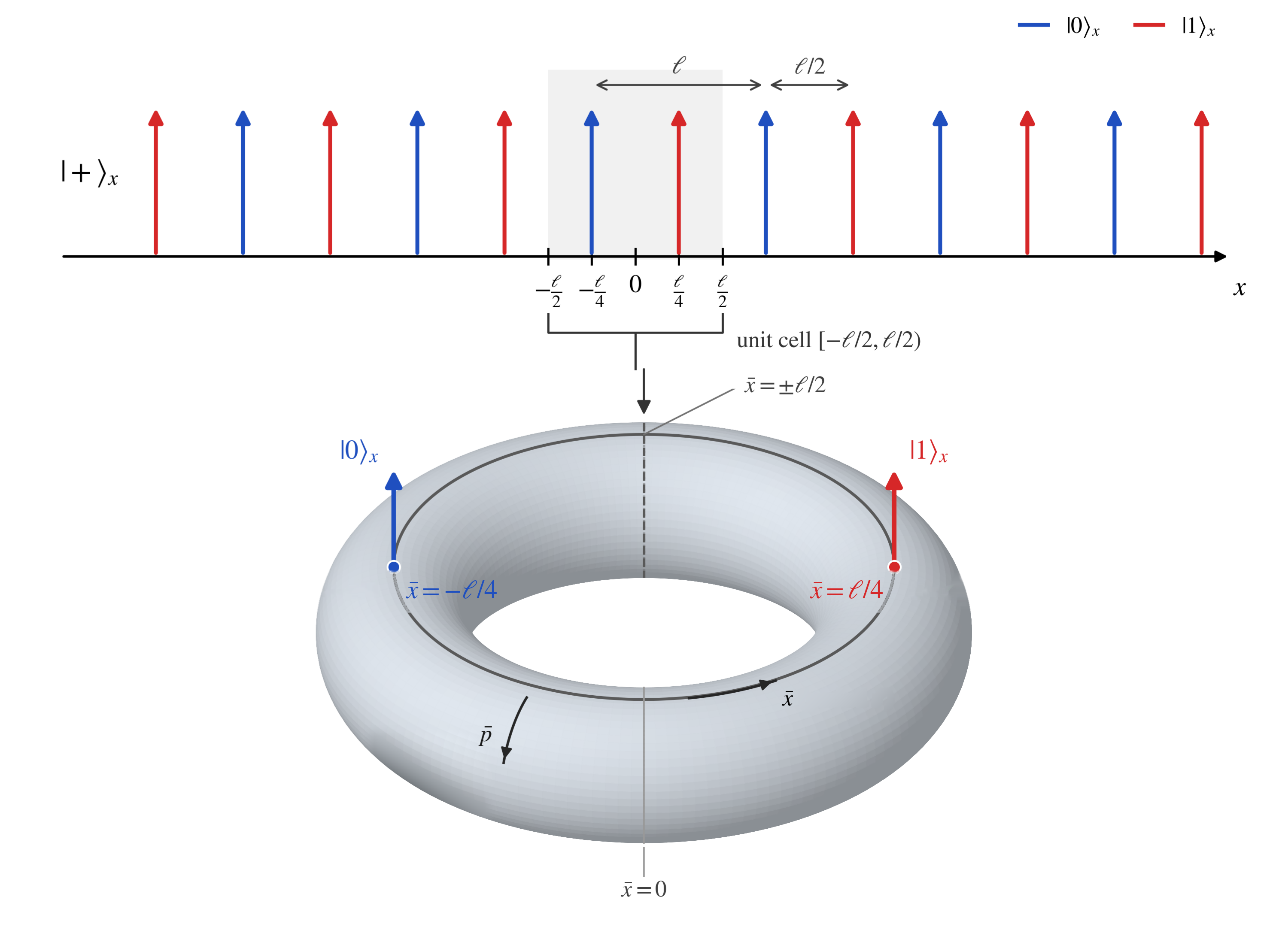}
\caption{Ideal GKP state represented as a comb of Dirac peaks, whose positions depend on the logical state. The comb of period $\ell$ is mapped onto a unit cell $[-\ell/2,\ell/2)$, replicated along the real line as a unit cell in crystallography, and the unit cell is represented as a circle. Combined with the circle of the reciprocal lattice, which carries the modular momentum $\overline{p}$, the unit cell forms the torus of the modular variables. The state shown is $\ket{+}_x$, the superposition of $\ket{0}_x$, located at $\overline{x}=-\ell/4$, and of $\ket{1}_x$, located at $\overline{x}=\ell/4$.}
\label{fig:idealGKP}
\end{figure}
The stellar functions of the ideal logical states factorize,
\begin{equation}
\label{eq:stellar_ideal}
\begin{gathered}
F_{\ket{0}_x}(z_1,z_2)=\frac{A^2}{\sqrt\ell}\,S_{1/2}\Big(z_1+\frac{\ii\pi}{2}\Big)S_{1/2}(z_2),\\
F_{\ket{1}_x}(z_1,z_2)=\frac{A^2}{\sqrt\ell}\,S_{1/2}\Big(z_1-\frac{\ii\pi}{2}\Big)S_{1/2}(z_2).
\end{gathered}
\end{equation}
By Lemma~\ref{lemma:zeros}, the zeros of $F_{\ket{0}_x}$ and $F_{\ket{1}_x}$ lie at half-integer heights: the ideal logical states have no zero on $\Gamma$. This absence of zeros is consistent with the Wigner distribution of the ideal logical states on the double cylinder, which is a positive point-like distribution~\cite{fabre_wigner_2020}, and extends to the double discrete cylinder the absence of zeros of phase eigenstates on the single discrete cylinder~\cite{fabre_majorana_2023}. The non-physical character of the ideal logical states is reflected by the spreading of $Q_1(m_1,\overline{x})$ and $Q_2(m_2,\overline{p})$ over all heights: the quasi-periodicity $S_{1/2}(\zeta+1)=e^{1/2+\zeta}S_{1/2}(\zeta)$ makes $e^{-m^2}|S_{1/2}(m+\ii\phi)|^2$ independent of $m$, so that $Q_1$ and $Q_2$ are identical on all heights, as shown in Fig.~\ref{fig:large_comb}(a,b).

\subsubsection{Superpositions of ideal logical states}

For the superpositions $\ket{\pm}_x=(\ket{0}_x\pm\ket{1}_x)/\sqrt2$, the stellar functions read
\begin{equation}
\label{eq:stellar_ideal_pm}
F_{\ket{\pm}_x}(z_1,z_2)=\frac{A^2}{\sqrt{2\ell}}\Big[S_{1/2}\Big(z_1+\frac{\ii\pi}{2}\Big)\pm S_{1/2}\Big(z_1-\frac{\ii\pi}{2}\Big)\Big]S_{1/2}(z_2),
\end{equation}
and only even values of $k$, for $\ket{+}_x$, or odd values of $k$, for $\ket{-}_x$, survive in the sum over $k$. As a result, the zeros of $F_{\ket{\pm}_x}$ lie exactly on $\Gamma$, at $\overline{x}\in\{0,\ell/2\}$, for every odd height $m_1$ for $\ket{+}_x$ and every even height $m_1$ for $\ket{-}_x$, and for all $(m_2,\overline{p})$ (Appendix~\ref{app:lemma}). These infinitely many zeros correspond to the negativity of the Wigner distribution of the superposition states~\cite{fabre_wigner_2020}.

\subsection{Physical GKP logical states}
\label{sec:physical}

\subsubsection{Comb of Gaussian peaks under a Gaussian envelope}

The ideal GKP states are non-physical, since their preparation would require an infinite energy. Finitely squeezed GKP states are described by a comb of Gaussian peaks of width $\sigma$ under a Gaussian envelope of width $\kappa$,
\begin{equation}
\label{eq:physical_gkp}
\psi_{x_0}(x)=N\,e^{-x^2/2\kappa^2}\sum_{s\in\Z}e^{-(x-s\ell-x_0)^2/2\sigma^2},
\end{equation}
with $x_0=-\ell/4$ for $\ket{0}_x$ and $x_0=+\ell/4$ for $\ket{1}_x$, and a normalization constant $N$. The stellar function and the Husimi distribution of $\ket{\psi_{x_0}}$ are denoted $F_{\ket{\psi_{x_0}}}(z_1,z_2)$ and $Q_{\ket{\psi_{x_0}}}(m_1,\overline{x};m_2,\overline{p})$. The modular wavefunction follows from the Zak transform of Eq.~\eqref{eq:modular_wf},
\begin{equation}
\label{eq:zak_gkp}
\psi_{x_0}(\overline{x},\overline{p})=\sqrt\ell\,N\sum_{n,s\in\Z}e^{-(\overline{x}+n\ell)^2/2\kappa^2}\,e^{-(\overline{x}+n\ell-s\ell-x_0)^2/2\sigma^2}\,e^{-\ii n\ell\overline{p}}.
\end{equation}
The envelope depends both on the modular position $\overline{x}$ and on the cell index $n$, which is conjugate to $\overline{p}$: the envelope couples the two modular variables, and the modular wavefunction $\psi_{x_0}(\overline{x},\overline{p})$ is not separable in general. The analysis below is restricted to the large-comb approximation, in which this coupling is neglected.

\subsubsection{Large-comb approximation: separable modular wavefunction}
\label{sec:large_comb}

When the envelope contains many peaks, $\kappa\gg\ell$, the envelope varies slowly over one cell and can be evaluated at the cell index, $e^{-(\overline{x}+n\ell)^2/2\kappa^2}\simeq e^{-n^2\ell^2/2\kappa^2}$. The modular wavefunction then factorizes into two wrapped Gaussians (Appendix~\ref{app:gkp_coefficients}),
\begin{equation}
\label{eq:large_comb}
\begin{gathered}
\psi_{x_0}(\overline{x},\overline{p})\simeq N'\,g_\sigma(\overline{x}-x_0)\,g_{1/\kappa}(\overline{p}),\\
g_\sigma(\overline{x})=\sum_{s\in\Z}e^{-(\overline{x}+s\ell)^2/2\sigma^2},\\
g_{1/\kappa}(\overline{p})=\sum_{j\in\Z}e^{-\kappa^2(\overline{p}+2\pi j/\ell)^2/2},
\end{gathered}
\end{equation}
where $N'$ is a normalization constant. For $\sigma\ll\ell$ and $1/\kappa\ll\pi/\ell$, a single term of each sum contributes in the unit cell, and the modular wavefunction reduces to a product of two Gaussians~\cite{fabre_wigner_2020},
\begin{equation}
\label{eq:physGKP}
\psi_{x_0}(\overline{x},\overline{p})\simeq N''\,G_{\sigma}(\overline{x}-x_0)\,G_{1/\kappa}(\overline{p}),\qquad G_s(y)=e^{-y^2/2s^2},
\end{equation}
with $\overline{x}$ and $\overline{p}$ restricted to the unit cell. The large-comb form is periodic in $\overline{p}$, as required, but periodic rather than quasi-periodic in $\overline{x}$. The large-comb form therefore violates the first condition of Eq.~\eqref{eq:quasi_periodicity} by terms of the order of the modular wavefunction at the edges of the unit cell, of relative order $e^{-(\ell/2-|x_0|)^2/2\sigma^2}=e^{-\ell^2/32\sigma^2}$, which are also the terms neglected below. Placing the edges of the unit cell at $\overline{x}=\pm\ell/2$, at equal distance from the peaks of both logical states, minimizes these terms.

The coefficients of the product of Gaussians of Eq.~\eqref{eq:physGKP} contain error functions, which account for the truncation of each Gaussian at the edges of the unit cell (Appendix~\ref{app:gkp_coefficients}):
\begin{equation}
\label{eq:psi_kq_erf}
\begin{split}
\psi_{kq}&=\mathcal{N}\,c_k\,d_q,\\
c_k&=e^{-\frac{2\pi^2\sigma^2}{\ell^2}k^2}e^{-2\ii\pi kx_0/\ell}\;\frac12\Big[\mathrm{erf}\Big(\frac{\ell/2-x_0}{\sqrt2\,\sigma}+\ii\frac{\sqrt2\pi\sigma k}{\ell}\Big)\\
&\qquad+\mathrm{erf}\Big(\frac{\ell/2+x_0}{\sqrt2\,\sigma}-\ii\frac{\sqrt2\pi\sigma k}{\ell}\Big)\Big],\\
d_q&=e^{-\frac{\ell^2q^2}{2\kappa^2}}\;\frac12\Big[\mathrm{erf}\Big(\frac{\pi\kappa}{\sqrt2\,\ell}+\ii\frac{\ell q}{\sqrt2\,\kappa}\Big)+\mathrm{erf}\Big(\frac{\pi\kappa}{\sqrt2\,\ell}-\ii\frac{\ell q}{\sqrt2\,\kappa}\Big)\Big].
\end{split}
\end{equation}
For the wrapped Gaussians of Eq.~\eqref{eq:large_comb}, the parts cut at the edges of the unit cell are exactly compensated by the neighboring terms of the sums, and the half-sums of error functions are replaced by one (Appendix~\ref{app:gkp_coefficients}):
\begin{equation}
\label{eq:psi_kq_lc}
\psi_{kq}=\mathcal{N}\,e^{-\frac{2\pi^2\sigma^2}{\ell^2}k^2}\,e^{-2\ii\pi kx_0/\ell}\,e^{-\frac{\ell^2q^2}{2\kappa^2}},\qquad
\mathcal{N}^{-2}=\sum_{k\in\Z}e^{-\frac{4\pi^2\sigma^2}{\ell^2}k^2}\sum_{q\in\Z}e^{-\frac{\ell^2q^2}{\kappa^2}}\simeq\frac{\kappa}{2\sigma}.
\end{equation}
The two sets of coefficients differ by terms bounded by $e^{-\ell^2/32\sigma^2}$ and $e^{-\pi^2\kappa^2/2\ell^2}$ (Appendix~\ref{app:gkp_coefficients}). These terms decay only as $1/|k|$ and $1/|q|$, as for the jump discussed in Sec.~\ref{sec:config}, whereas the coefficients of Eq.~\eqref{eq:psi_kq_lc} decay as Gaussians. The zeros are therefore shifted by exponentially small amounts at low heights, whereas the terms neglected in Eq.~\eqref{eq:psi_kq_lc} dominate the stellar function at heights $|m_1|$ of the order of $\ell^2/(8\pi\sigma^2)$ and beyond. Eq.~\eqref{eq:psi_kq_lc} is used below.

\subsubsection{Factorized stellar function and extent along the heights}

In the large-comb approximation, the stellar function factorizes (Appendix~\ref{app:gkp_coefficients}),
\begin{equation}
\label{eq:stellar_lc}
F_{\ket{\psi_{x_0}}}(z_1,z_2)=A^2\mathcal{N}\;S_a\Big(z_1-\frac{2\ii\pi}{\ell}x_0\Big)\,S_b(z_2),\qquad a=\frac12+\frac{2\pi^2\sigma^2}{\ell^2},\qquad b=\frac12+\frac{\ell^2}{2\kappa^2},
\end{equation}
and reduces, up to normalization, to the ideal case of Eq.~\eqref{eq:stellar_ideal} for $\sigma\to0$ and $\kappa\to\infty$. Contrary to the ideal states, the physical states occupy a finite range of heights. With Eq.~\eqref{eq:psi_kq_lc}, the weights of Eq.~\eqref{eq:height_weights} become (Appendix~\ref{app:gkp_coefficients})
\begin{equation}
\label{eq:extent}
\begin{gathered}
P_1(m_1)\simeq P_1(0)\,\exp\Big(-\frac{c\,m_1^2}{1+c}\Big),\\
P_2(m_2)\simeq P_2(0)\,\exp\Big(-\frac{d\,m_2^2}{1+d}\Big),\\
c=\frac{4\pi^2\sigma^2}{\ell^2},\qquad d=\frac{\ell^2}{\kappa^2},
\end{gathered}
\end{equation}
so that $Q_{\ket{\psi_{x_0}}}$ extends over $|m_1|\lesssim\sqrt{(1+c)/c}\simeq\ell/(2\pi\sigma)$ on the position cylinder and over $|m_2|\lesssim\sqrt{(1+d)/d}\simeq\kappa/\ell$ on the momentum cylinder. Narrow peaks spread the state over many heights of the position cylinder, and a broad envelope over many heights of the momentum cylinder. The ideal limit $\sigma\to0$, $\kappa\to\infty$ recovers the uniform spreading over all heights, as illustrated in Fig.~\ref{fig:large_comb}(c), which compares the weights of the heights of the physical and ideal states. The widths of the peaks and of the envelope thus become visible in the Husimi distribution, as they shape the Wigner distribution on the double cylinder~\cite{fabre_wigner_2020}.

\subsubsection{Circles of zeros and fine-tuned widths}

Lemma~\ref{lemma:zeros} applied to Eq.~\eqref{eq:stellar_lc} locates all the zeros of $F_{\ket{\psi_{x_0}}}(z_1,z_2)$ (Appendix~\ref{app:gkp_coefficients}). On the position cylinder, the zeros form the circles
\begin{equation}
\label{eq:zeros_lc_position}
m_1=(2n+1)\Big(\frac12+\frac{2\pi^2\sigma^2}{\ell^2}\Big),\qquad \overline{x}=-x_0\ (\mathrm{mod}\ \ell),\qquad\text{for all }(m_2,\overline{p}),
\end{equation}
and on the momentum cylinder the circles
\begin{equation}
\label{eq:zeros_lc_momentum}
m_2=(2n'+1)\Big(\frac12+\frac{\ell^2}{2\kappa^2}\Big),\qquad \overline{p}=\pi/\ell\ (\mathrm{mod}\ 2\pi/\ell),\qquad\text{for all }(m_1,\overline{x}),
\end{equation}
with $n,n'\in\Z$. On each cylinder, the zeros sit opposite to the peak of the state, which lies at $\overline{x}=x_0$ and $\overline{p}=0$. The heights of Eqs.~\eqref{eq:zeros_lc_position} and~\eqref{eq:zeros_lc_momentum} are shifted from the half-integers $n+\frac12$ and $n'+\frac12$ by $(2n+1)\,2\pi^2\sigma^2/\ell^2$ and $(2n'+1)\,\ell^2/2\kappa^2$, respectively. For generic widths, none of these heights is an integer, and the Husimi distribution of a logical state has no zero on $\Gamma$ in the large-comb approximation.

Zeros on $\Gamma$ appear for fine-tuned widths. The height $(2n+1)a$ of Eq.~\eqref{eq:zeros_lc_position} equals an integer $\mu_1$ for the odd integer $\nu_1=2n+1$ when $a=\mu_1/\nu_1$, and the height $(2n'+1)b$ of Eq.~\eqref{eq:zeros_lc_momentum} equals an integer $\mu_2$ for the odd integer $\nu_2=2n'+1$ when $b=\mu_2/\nu_2$, that is, when
\begin{equation}
\label{eq:tuned_widths}
\sigma=\frac{\ell}{2\pi}\sqrt{\frac{2\mu_1-\nu_1}{\nu_1}},\qquad \kappa=\ell\sqrt{\frac{\nu_2}{2\mu_2-\nu_2}},
\end{equation}
with $2\mu_1>\nu_1$ and $2\mu_2>\nu_2$. Writing $a=u_1/v_1$ and $b=u_2/v_2$ in lowest terms, the Husimi distribution $Q_{\ket{\psi_{x_0}}}$ then vanishes on the circle $\overline{x}=-x_0$ of every slice with height $m_1=(2j+1)\,u_1$, $j\in\Z$, and on the circle $\overline{p}=\pi/\ell$ of every slice with height $m_2=(2j'+1)\,u_2$, $j'\in\Z$ (Appendix~\ref{app:gkp_coefficients}). For instance, the widths $\sigma=\ell/(2\sqrt3\,\pi)$ and $\kappa=3\ell$ give $a=2/3$ and $b=5/9$, and bring circles of zeros of $Q_{\ket{0}_x}$ onto $\Gamma$, as shown in Fig.~\ref{fig:large_comb}(d,e). The cuts of Fig.~\ref{fig:large_comb}(f) show that the zero is reached at the tuned height $m_1=2$, and not at the neighboring heights $m_1=1$ and $m_1=3$.

\begin{figure*}[t]
\centering
\includegraphics[width=\textwidth]{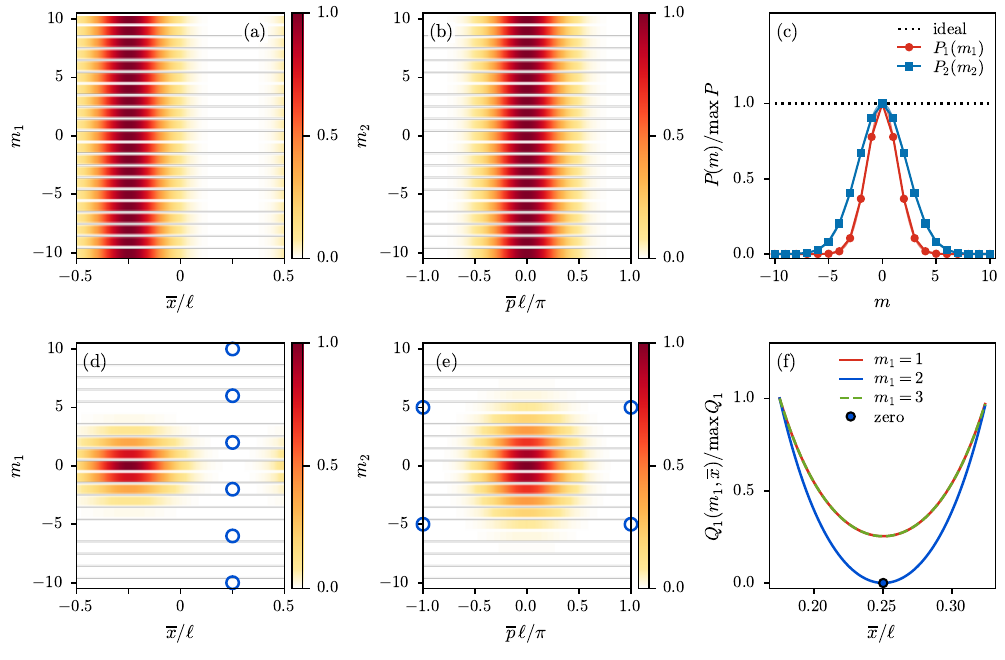}
\caption{Husimi distribution of the $\ket{0}_x$ GKP state on the two cylinders, unrolled into planes of the modular variable and of the height. (a,b)~Ideal state: $Q_1(m_1,\overline{x})$ and $Q_2(m_2,\overline{p})$ are identical on all heights. (d,e)~Physical state in the large-comb approximation, with the tuned widths $\sigma=\ell/(2\sqrt3\,\pi)\simeq0.092\,\ell$ and $\kappa=3\ell$, for which $a=2/3$ and $b=5/9$: $Q_1(m_1,\overline{x})$ and $Q_2(m_2,\overline{p})$ are confined to a finite range of heights, and vanish on the circles $\overline{x}=\ell/4$ at $m_1=\pm2,\pm6,\pm10$ and $\overline{p}=\pi/\ell$ at $m_2=\pm5$ (open circles; the edges $\overline{p}=\pm\pi/\ell$ are identified). Each plane is normalized to its maximum value. (c)~Weights of the heights, $P_1(m_1)$ and $P_2(m_2)$, normalized to their maximum, for the ideal (dotted) and physical (symbols) states; thin lines: Eq.~\eqref{eq:extent}. (f)~Cuts of $Q_1(m_1,\overline{x})$ around $\overline{x}=\ell/4$ for $m_1=1,2,3$, normalized to their maximum in the window: only the cut at $m_1=2$ reaches zero. At the zeros, $|F_{\ket{0}_x}(z_1,z_2)|$ is below $10^{-16}$ times $\sum_{k,q\in\Z}|\psi_{kq}|e^{-(k^2+q^2)/2}e^{km_1+qm_2}$, compared with $3\times10^{-2}$ at $(m_1,\overline{x})=(3,\ell/4)$.}
\label{fig:large_comb}
\end{figure*}

\subsection{Superpositions of physical GKP logical states}
\label{sec:superpositions}

\subsubsection{Overlap between finitely squeezed logical states}

The finitely squeezed logical states are not orthogonal. In the large-comb approximation, Eq.~\eqref{eq:psi_kq_lc} gives (Appendix~\ref{app:overlap}, where the normalization constant $N$ of Eq.~\eqref{eq:physical_gkp} is also given)
\begin{equation}
\label{eq:overlap_lc}
{}_x\!\bra{0}\ket{1}_x=\frac{\sum_k(-1)^ke^{-ck^2}}{\sum_ke^{-ck^2}}=\frac{S_c(\ii\pi)}{S_c(0)}\simeq2\,e^{-\ell^2/16\sigma^2},\qquad c=\frac{4\pi^2\sigma^2}{\ell^2}.
\end{equation}
The exact overlap, obtained from Eq.~\eqref{eq:physical_gkp} without the large-comb approximation, is derived in Appendix~\ref{app:overlap} and reduces to the same leading term for $\sigma\ll\ell\ll\kappa$. Both overlaps are real and positive. The normalized superpositions $\ket{\pm}_x\propto\ket{0}_x\pm\ket{1}_x$ have the stellar functions
\begin{equation}
\label{eq:stellar_pm}
F_{\ket{\pm}_x}(z_1,z_2)=\frac{F_{\ket{\psi_{-\ell/4}}}(z_1,z_2)\pm F_{\ket{\psi_{+\ell/4}}}(z_1,z_2)}{\sqrt{2\left(1\pm{}_x\!\bra{0}\ket{1}_x\right)}} .
\end{equation}

\subsubsection{Drift of the lattice zeros in the large-comb approximation}

In the large-comb approximation, the momentum factor $S_b(z_2)$ of Eq.~\eqref{eq:stellar_lc} is the same for both logical states, and
\begin{equation}
\label{eq:stellar_pm_lc}
F_{\ket{\pm}_x}(z_1,z_2)=\frac{A^2\mathcal{N}}{\sqrt{2\left(1\pm{}_x\!\bra{0}\ket{1}_x\right)}}\Big[S_a\Big(z_1+\frac{\ii\pi}{2}\Big)\pm S_a\Big(z_1-\frac{\ii\pi}{2}\Big)\Big]S_b(z_2).
\end{equation}
The momentum factor $S_b(z_2)$ gives the circles of Eq.~\eqref{eq:zeros_lc_momentum}, common to both superpositions. The position factor retains even or odd values of $k$ only, and its zeros form the circles $\overline{x}\in\{0,\ell/2\}$, for all $(m_2,\overline{p})$, at the heights (Appendix~\ref{app:lemma})
\begin{equation}
\label{eq:zeros_pm_lc}
m_1=(2n+1)\Big(1+\frac{4\pi^2\sigma^2}{\ell^2}\Big)\ \text{for}\ \ket{+}_x,\qquad m_1=2n\Big(1+\frac{4\pi^2\sigma^2}{\ell^2}\Big)\ \text{for}\ \ket{-}_x .
\end{equation}
The lattice zeros of the ideal superposition states thus drift away from $\Gamma$ as the peaks acquire a finite width, by an amount that grows with the height. The only exception is the family $n=0$ of $\ket{-}_x$: the Husimi distribution $Q_{\ket{-}_x}$ vanishes on the circles $\overline{x}=0$ and $\overline{x}=\ell/2$ of all the slices $(0,m_2)$, for all widths.

Part of this family survives beyond the large-comb approximation, as a consequence of parity. The parity $x\to-x$ exchanges the two logical states of Eq.~\eqref{eq:physical_gkp}, $\psi_{x_0}(-x)=\psi_{-x_0}(x)$, so that the state $\ket{-}_x$ is odd. The change of variable $\overline{x}\to-\overline{x}$ in Eq.~\eqref{eq:psi_kq} then gives $\psi_{-k,-q}=-\psi_{kq}$, and Eq.~\eqref{eq:stellar} gives $F_{\ket{-}_x}(-z_1,-z_2)=-F_{\ket{-}_x}(z_1,z_2)$, without any approximation. Since $F_{\ket{-}_x}$ is $2\ii\pi$-periodic in each variable, it vanishes at the four points where $(-z_1,-z_2)=(z_1,z_2)$ modulo $2\ii\pi$, that is, at $z_1\in\{0,\ii\pi\}$ and $z_2\in\{0,\ii\pi\}$, or $(\overline{x},\overline{p})\in\{0,\ell/2\}\times\{0,\pi/\ell\}$ on the slice $(0,0)$. These four zeros on $\Gamma$ are protected by parity for all widths, including beyond the large-comb approximation.

%%%%%%%%%%%%%%%%%%%%%%%%%%%%%%%%%%%%%%%%%%

%%%%%%%%%%%%%%%%%%%%%%%%%%%%%%%%%%%%%%%%%%

\section{Conclusion}
\label{sec:conclusion}
\noindent Covariant integral quantization has been developed on the double discrete cylinder, the product of a two-dimensional grid and of a torus. A weight function on this phase space defines a quantization operator, whose integral kernel follows from a partial Fourier transform of the weight, and the integral kernel of the operator associated with a classical function follows from the symplectic Fourier transform of this function, together with semi-classical portraits; the unit weight yields the parity operator and the Wigner distribution, and coherent states yield the Husimi distribution. Modular variables, defined by the decomposition of the position and momentum modulo a period and its reciprocal period, make the torus the configuration space of continuous variables, through the Zak transform and the integer basis. Since the integer basis is a periodic expansion of a quasi-periodic modular wavefunction, the phase-space distributions depend on the choice of the unit cell in the modular position. The wrapped coherent states of the circle coincide with the coherent states generated by a periodized Gaussian fiducial vector, whose widths equal the radii of the torus, and define a stellar function whose zeros on the lattice of heights give a stellar rank on the double discrete cylinder.

Position eigenstates and ideal GKP logical states have no zero on the lattice of heights, their zeros lying at half-integer heights, whereas the superpositions of ideal logical states have infinitely many zeros on the lattice, in correspondence with the positivity and the negativity of their Wigner distributions on the double cylinder. The vacuum, which is Gaussian in the plane, has zeros on the lattice, located beyond the single-cell approximation at positions that depend on the unit cell, so that zeros on the double discrete cylinder do not coincide with non-Gaussianity in the plane. For finitely squeezed GKP states, the stellar function and the overlap between logical states have been obtained in closed form in the large-comb approximation. When the envelope contains many peaks, the modular wavefunction factorizes, the Husimi distribution extends over a number of heights inversely proportional to the width of the peaks on the position cylinder and proportional to the width of the envelope on the momentum cylinder, and the zeros form whole circles located analytically, which reach the lattice only for fine-tuned widths. The lattice zeros of the superpositions drift away from the lattice with the width of the peaks, except for four zeros of the minus superposition, which are protected by parity for all widths. Zeros on the lattice thus emerge from the ideal GKP states once the vacuum and finite squeezing enter the description, in line with the role of the vacuum in GKP-based architectures that are hard to simulate.

Several extensions follow from this work. Beyond the large-comb approximation, the envelope pulls each peak of a physical GKP state towards the origin by an amount proportional to its distance from the origin, so that the modular position of a peak depends on the cell index, which is conjugate to the modular momentum. The envelope thus couples the two modular variables, and the stellar function no longer factorizes. Its zeros are then expected to form isolated points on the slices of the phase space, which would lie on the lattice of heights without any fine tuning of the widths and persist under small changes of the state; locating these zeros and counting them as a function of the squeezing would quantify how the resources of physical GKP states emerge from the ideal ones. Relating this stellar rank to the negativity of the Wigner distribution on the same phase space, and to the cost of classically simulating GKP-based circuits~\cite{calcluth_efficient_2022,calcluth_vacuum_2023}, is a natural continuation. Making the construction independent of the choice of the unit cell is an open question, which may require coherent states adapted to the quasi-periodicity of the modular wavefunction. Other weight functions of the covariant integral quantization provide further families of phase-space distributions, whose zeros and negativities can be compared with those of the Husimi and Wigner distributions. The formalism also applies to GKP codes on non-rectangular lattices, such as the hexagonal lattice~\cite{gottesman_encoding_2001,albert_performance_2018,conrad_gottesman-kitaev-preskill_2022}, for which the modular wavefunction is not separable even for ideal states, to rotor codes~\cite{vuillot_homological_2024}, to Josephson circuits, and to twisted photons, where the separability of the phase-space distribution into two discrete cylinders could serve as a witness of the entanglement between orbital angular momentum modes.

\begin{acknowledgments}
N. Fabre acknowledges useful discussions with Jack Davis, Ulysse Chabaud and Pérola Milman. The artificial intelligence assistant \textit{Claude Opus 5.5} was used exclusively for checking spelling, typos and grammar. All scientific content was verified by the authors, who take full responsibility for the manuscript.
\end{acknowledgments}

%%%%%%%%%%%%%%%%%%%%%%%%%%%%%%%%%%%%%%%%%%
\appendix

\section{Weyl--Heisenberg group and coherent states on the torus}
%\subsection{Overview:  Weyl Operator Acting on $\mathcal{H}$}
\label{weyl}
The Weyl--Heisenberg group $\mathrm{H}^{\mathrm{sd}}_2$ for the torus $\T^2_{\pmb \ell}$ is the set of 5-tuples $(s, \vecm_{\pmb \ell},\vectheta)\in \R\times\Z^2_{\pmb\ell} \times \T^2_{\pmb\ell}$ equipped with the internal law
\begin{equation}
\begin{split}
&(s,\vecm_{\pmb \ell},\vectheta)\,(s^{\prime}, \vecm_{\pmb \ell}^{\prime},\vectheta^{\prime})\\&= \left(s+s^{\prime} +\frac{\vecm_{\pmb \ell} \cdot \vectheta^{\prime}-\vecm_{\pmb \ell}^{\prime}\cdot\vectheta}{2},\vecm_{\pmb \ell}+\vecm_{\pmb \ell}^{\prime}\ ,\vectheta+\vectheta^{\prime} \, \mathrm{mod}\,\bm\ell\right)\,,
\end{split}
\end{equation}
where the translations are taken modulo the periods $\bm\ell$ of Eq.~\eqref{ellvec}. The neutral element is $e=(0,\bm 0,\bm 0)$ and the inverse is given by $$(s,\vecm_{\pmb \ell},\vectheta)^{-1}= (-s,-\vecm_{\pmb \ell} \ ,-\vectheta\,  \mathrm{mod}\,\bm\ell)\,. $$
The representation $\hat V$ of $\mathrm{H}^{\mathrm{sd}}_2$ on $\mathcal{H}$ is defined by
\begin{equation}
\label{repV}
\mathcal{H}\ni \psi(\bg)\mapsto ({\hat V}(s,\vecm_{\pmb \ell},\vectheta)\psi)(\pmb{\gamma})=e^{\ii\,s}e^{-\ii\,\frac{\vecm_{\pmb \ell}\cdot\vectheta}{2}}e^{\ii\,\vecm_{\pmb\ell}\cdot\pmb{\gamma}}\psi(\pmb{\gamma}-\vectheta)\,.
\end{equation}
This representation is unitary and irreducible.
Due to the central parameter $s\in \R$, it is not square integrable on the group equipped with its semi-discrete Haar measure $\ud(s,\vecm_{\pmb \ell},\vectheta)$, that is, there is no $\psi\in L^2(\T^2_{\pmb \ell},\ud^2\pmb{\gamma})$ such that
\begin{equation}
    \sum_{\vecm_{\pmb \ell} \in \Z^2_{\pmb\ell}}\int_{\R}\int_{\T^2_{\pmb \ell}}\ud s\,\ud^2 \vectheta\,\vert(\psi,{\hat V}(s,\vecm_{\pmb \ell},\vectheta)\psi)\vert^2 < \infty\,.
\end{equation} 
However, it is square integrable on the coset space $\R\backslash\mathrm{H}^{\mathrm{sd}}_2$ of $\mathrm{H}^{\mathrm{sd}}_2$ by its center. This coset space is identified with the semi-discrete double cylinder viewed as a phase space, $\mathcal{C}^2_\mathrm{sd}=\Z^2_{\pmb\ell}\times\T^2_{\pmb \ell}$. The
representation of Eq.~\eqref{repV} induces the unitary (Weyl-like) operator $\hat U$ defined on $\mathcal{C}^2_\mathrm{sd}$ by the section
$s = 0$ on the group, that is, $\hat U(\vecm_{\pmb \ell}, \vectheta) := \hat V(0,\vecm_{\pmb \ell}, \vectheta)$:
\begin{equation}
    \label{UIR}
({\hat U}(\vecm_{\pmb \ell},\vectheta)\psi)(\pmb{\gamma})=e^{-\ii\,\frac{\vecm_{\pmb \ell}\cdot\vectheta}{2}}\,e^{\ii\,\vecm_{\pmb \ell}\cdot\pmb{\gamma}}\psi(\pmb{\gamma}-\vectheta)\,.
\end{equation}
This unitary operator $\hat U(\vecm_{\pmb \ell},\vectheta)$ combines the translation and modulation operators $\hat T_{\vectheta}$ and $\hat E_{\vecm_{\pmb \ell}}$ with a phase that depends on $\vecm_{\pmb \ell}$ and $\vectheta$. The multiplication rules of the group $\mathrm{H}^{\mathrm{sd}}_2$ show that the adjoint of the operator $\hat U(\vecm_{\pmb \ell},\vectheta)$ is
\begin{align}
    \hat U^{\dagger}(\vecm_{\pmb \ell},\vectheta) = \hat U(-\vecm_{\pmb \ell},-\vectheta).
\end{align}
\noindent Direct algebraic manipulations show that, for any unit-norm $\psi$ in $\mathcal{H}$, the family
\begin{equation}
\{{\hat U}(\vecm_{\pmb \ell},\vectheta)\psi\equiv \psi_{(\vecm_{\pmb \ell},\vectheta)}\, , \, (\vecm_{\pmb \ell},\vectheta)\in\Gamma\}
\end{equation}
is an overcomplete family, that is, it resolves the identity on $\mathcal{H}$ in the sense of
\begin{equation}
\label{resUnity}
\frac{1}{\ell_1\ell_2}\sum_{\vecm_{\pmb \ell}\in\Z^2_{\pmb\ell}}\int_{\T^2_{\pmb \ell}} \ud^2\vectheta\, |\psi_{(\vecm_{\pmb \ell},\vectheta)}\rangle\langle \psi_{(\vecm_{\pmb \ell},\vectheta)}| = \bu\,, \quad \Vert \psi \Vert = 1\, .
\end{equation}
\noindent This fundamental property extends to the $\hat U$-transported versions of a density operator $\hat\rho = \sum_i p_i |\psi_i\rangle\langle\psi_i|$ (a mixed quantum state) on $\mathcal{H}$, where $\mathcal{J}\ni i\mapsto p_i$ is a discrete probability distribution on an index set $\mathcal{J}$ (a continuous distribution can be treated in the same way), and all the states $|\psi_i\rangle\in \mathcal{H}$ are unit norm:
\begin{equation}
\label{densop}
\frac{1}{\ell_1\ell_2}\sum_{\vecm_{\pmb \ell}\in\Z^2_{\pmb\ell}}\,\int_{\T^2_{\pmb \ell}} \ud^2\vectheta \ \hat\rho(\vecm_{\pmb \ell},\vectheta)= \bu\, , \quad  \hat\rho(\vecm_{\pmb \ell},\vectheta):={\hat U}(\vecm_{\pmb \ell},\vectheta)\hat\rho {\hat U}^{\dag}(\vecm_{\pmb \ell},\vectheta) \,. 
\end{equation}
This implies that, for any $\phi$ and $\psi$ in $\mathcal{H}$,
\begin{equation}
\label{resunZSpp}
\frac{1}{\ell_1\ell_2}\sum_{\vecm_{\pmb \ell}\in\Z^2_{\pmb\ell}}\,\int_{\T^2_{\pmb \ell}} \ud^2\vectheta\,\lg\phi | \hat\rho(\vecm_{\pmb \ell},\vectheta)|\psi\rangle= \lg\phi|\psi\rg\,.
\end{equation}
\noindent In Eq.~\eqref{resUnity}, the state $\psi_{(\vecm_{\pmb \ell},\vectheta)}$ is a coherent state for the group $\mathrm{H}^{\mathrm{sd}}_2$ in the sense of Perelomov, and the unit-norm state $\psi$ is called a fiducial vector.
The projection of $\phi\in \mathcal{H}$ on $\psi_{\vecm_{\pmb \ell},\vectheta}$, namely
\begin{equation}
\label{pspphi}
\begin{split}
\langle\,\psi_{\vecm_{\pmb \ell},\vectheta}|\phi\rangle
&= \int_{\T^2_{\pmb \ell}} \ud^2\pmb{\gamma}\, \overline{(\hat U(\vecm_{\pmb \ell},\vectheta)\psi)(\pmb{\gamma})}\,\phi(\pmb{\gamma})\\
&=  \int_{\T^2_{\pmb \ell}} \ud^2\pmb{\gamma}\,e^{\ii\,\frac{\vecm_{\pmb \ell}\cdot\vectheta}{2}}e^{-\ii\,\vecm_{\pmb \ell}\cdot\pmb{\gamma}}\,\overline{\psi(\pmb{\gamma}-\vectheta)}\,\phi(\pmb{\gamma})\, ,
\end{split}
\end{equation}
is the phase-space representation of $\phi$ with respect to the family of coherent states $\psi_{(\vecm_{\pmb \ell},\vectheta)}$, and $\vert\langle\,\psi_{\vecm_{\pmb \ell},\vectheta}|\phi\rangle\vert^2$ is the corresponding Husimi function. More generally,
\begin{align}
  \mathrm{Tr}\left(|\phi\rg\lg \phi| \hat\rho(\vecm_{\pmb \ell},\vectheta)\right)= \lg\phi|\hat\rho(\vecm_{\pmb \ell},\vectheta)|\phi\rg  
\end{align}
is the Husimi function of $\phi$ in the phase-space representation provided by the family $\{\hat\rho(\vecm_{\pmb \ell},\vectheta):(\vecm_{\pmb \ell},\vectheta)\in \Gamma\}$.
The present study is restricted to coherent states; the extension of the results to transported density operators is straightforward.

\subsection{Properties of the coherent-state representation on $\mathcal{H}$}

The following relations are used in the analysis of the Weyl operator $\hat U(\vecm_{\pmb \ell},\vectheta)$:
%\begin{align*}
%    \hat U^{\dagger}(\vecm_{\pmb \ell},\vectheta)\psi(\pmb{\gamma}) &= e^{-\ii\frac{\vecm_{\pmb \ell}\cdot\vectheta}{2}}e^{-\ii\,\vecm_{\pmb \ell}\cdot\pmb{\gamma}}\psi(\pmb{\gamma}+\vectheta)\\
%   &= \hat U(-\vecm_{\pmb \ell},-\vectheta).
%\end{align*}
\begin{align*}
    \hat U(\vecm_{\pmb \ell},\vectheta)\hat U(\vecm_{\pmb \ell}^{\prime},\vectheta^{\prime}) &= e^{\ii\frac{(\vecm_{\pmb \ell}\cdot\vectheta^{\prime} - \vecm_{\pmb \ell}^{\prime}\cdot\vectheta)}{2}}\hat U(\vecm_{\pmb \ell}+\vecm_{\pmb \ell}^{\prime},\vectheta+\vectheta^{\prime}),\\
    \hat U(\vecm_{\pmb \ell},\vectheta)\hat U(\vecm_{\pmb \ell}^{\prime},\vectheta^{\prime})\hat U^\dagger(\vecm_{\pmb \ell},\vectheta) &=  e^{\ii (\vecm_{\pmb \ell}\cdot\vectheta^\prime - \vecm_{\pmb \ell}^\prime \cdot\vectheta)}\hat U(\vecm_{\pmb \ell}^{\prime},\vectheta^{\prime}),\\
    \mathrm{Tr}[\hat U(\vecm_{\pmb \ell},\vectheta)] &= \ell_1\ell_2\, \delta_{\vecm_{\pmb \ell}, \bm{0}}\,\delta^2(\vectheta),
\end{align*}
where $\bm{0}$ is the zero vector. On $L^2(\T^2_{\pmb \ell},\ud^2\pmb{\gamma})$, any nonzero square-integrable function $\psi(\pmb{\gamma})$ is a valid fiducial vector and yields a family of coherent states. The most useful fiducial vectors are normalized and well localized in both angular position and angular momentum. Such fiducial vectors are obtained from square-integrable functions on the plane $\R^2$ by periodizing them over the unit cell $[-\frac{\ell_1}{2},\frac{\ell_1}{2})\times[-\frac{\ell_2}{2},\frac{\ell_2}{2})$ (Appendix~\ref{app:periodization}).

\section{Modular states and integer basis}
\label{app:modular}

\subsection{Orthogonality of the modular states}
The orthogonality of the states of Eq.~\eqref{eq:zak_state} follows from $\bra{x}\ket{x'}=\delta(x-x')$ and from the Poisson summation formula $\sum_{n\in\Z}e^{-\ii n\ell\overline{p}}=\frac{2\pi}{\ell}\sum_{j\in\Z}\delta(\overline{p}-2\pi j/\ell)$:
\begin{equation}
\begin{split}
\bra{\overline{x},\overline{p}}\ket{\overline{x}',\overline{p}'}&=\ell\sum_{n,n'\in\Z}e^{-\ii n\ell\overline{p}+\ii n'\ell\overline{p}'}\bra{\overline{x}+n\ell}\ket{\overline{x}'+n'\ell}\\
&=\ell\,\delta(\overline{x}-\overline{x}')\sum_{n\in\Z}e^{-\ii n\ell(\overline{p}-\overline{p}')}\\
&=2\pi\,\delta(\overline{x}-\overline{x}')\,\delta(\overline{p}-\overline{p}'),
\end{split}
\end{equation}
where, for $\overline{x}$ and $\overline{x}'$ in the unit cell, $\bra{\overline{x}+n\ell}\ket{\overline{x}'+n'\ell}$ vanishes unless $n=n'$, and only the term $j=0$ of the Poisson sum contributes for $\overline{p}$ and $\overline{p}'$ in the unit cell.

\subsection{Orthonormality of the integer basis}
With the position representation of Eq.~\eqref{eq:integer_basis_position},
\begin{equation}
\bra{k,q}\ket{k',q'}=\frac{\delta_{qq'}}{\ell}\int_{-\ell/2}^{\ell/2}\dd\overline{x}\,e^{2\ii\pi(k'-k)\overline{x}/\ell}=\delta_{kk'}\delta_{qq'},
\end{equation}
since two different cells do not overlap. The overlap $\bra{k,q}\ket{\overline{x},\overline{p}}=e^{-2\ii\pi k\overline{x}/\ell}e^{-\ii\ell q\overline{p}}$ reproduces the first relation of Eq.~\eqref{eq:integer_basis}. Projecting $\ket{\psi}$ onto $\ket{k,q}$ gives Eq.~\eqref{eq:psi_kq}, and the Parseval identity on each cell, summed over the cells, gives $\sum_{k,q\in\Z}|\psi_{kq}|^2=\int_\R|\psi(x)|^2\dd x$.

\subsection{Winding of the modular wavefunction}

The quasi-periodicity of Eq.~\eqref{eq:quasi_periodicity} forces the modular wavefunction $\psi(\overline{x},\overline{p})$ to vanish somewhere in the unit cell. The argument relies on the winding number of the phase of $\psi(\overline{x},\overline{p})$ along the boundary of the unit cell. Assume first that $\psi(\overline{x},\overline{p})$ does not vanish on this boundary, so that its phase $\varphi=\arg\psi$ can be followed continuously along it. The winding number is
\begin{equation}
\label{eq:winding_def}
\mathcal{W}=\frac{1}{2\pi}\oint_{\partial C}\dd\varphi,
\end{equation}
where $\partial C$ is the boundary of the unit cell $C=[-\ell/2,\ell/2)\times[-\pi/\ell,\pi/\ell)$, traversed counterclockwise in the $(\overline{x},\overline{p})$ plane. The boundary consists of four edges:
\begin{enumerate}
\item the bottom edge, $\overline{p}=-\pi/\ell$, with $\overline{x}$ running from $-\ell/2$ to $\ell/2$;
\item the right edge, $\overline{x}=\ell/2$, with $\overline{p}$ running from $-\pi/\ell$ to $\pi/\ell$;
\item the top edge, $\overline{p}=\pi/\ell$, with $\overline{x}$ running from $\ell/2$ to $-\ell/2$;
\item the left edge, $\overline{x}=-\ell/2$, with $\overline{p}$ running from $\pi/\ell$ to $-\pi/\ell$.
\end{enumerate}
The phase change along the boundary is the sum of the phase changes along the four edges.

The bottom and top edges cancel. By the periodicity of Eq.~\eqref{eq:quasi_periodicity} in $\overline{p}$, $\psi(\overline{x},\pi/\ell)=\psi(\overline{x},-\pi/\ell)$, so that the phase takes the same values on both edges. Since the two edges are traversed in opposite directions, their phase changes are opposite and cancel.

The left and right edges do not cancel. The quasi-periodicity in $\overline{x}$, taken at $\overline{x}=-\ell/2$, gives
\begin{equation}
\label{eq:winding_edges}
\psi(\ell/2,\overline{p})=e^{\ii\ell\overline{p}}\,\psi(-\ell/2,\overline{p}),
\end{equation}
so that the phase on the right edge equals the phase on the left edge plus $\ell\overline{p}$, up to a constant multiple of $2\pi$, which is fixed by continuity. When $\overline{p}$ runs from $-\pi/\ell$ to $\pi/\ell$, the extra phase $\ell\overline{p}$ increases by $\ell\times2\pi/\ell=2\pi$. The phase change along the right edge, traversed upwards, therefore exceeds the phase change along the left edge, traversed upwards, by $2\pi$. Since the left edge is traversed downwards, the two contributions add up to
\begin{equation}
\label{eq:winding_result}
\oint_{\partial C}\dd\varphi=\int_{-\pi/\ell}^{\pi/\ell}\ell\,\dd\overline{p}=2\pi,\qquad \mathcal{W}=1 .
\end{equation}

A nonzero winding number implies a zero inside the unit cell. If $\psi(\overline{x},\overline{p})$ did not vanish anywhere in the cell, the boundary loop could be shrunk continuously to a point while keeping $\psi\neq0$ along the loop. During this deformation the winding number would stay an integer and vary continuously, hence remain constant. It would then equal the winding number of a vanishingly small loop around a point where $\psi\neq0$, which is zero, in contradiction with $\mathcal{W}=1$. A continuous modular wavefunction therefore vanishes at least once inside the unit cell. If $\psi(\overline{x},\overline{p})$ already vanishes on the boundary, as for the vacuum at the corner $(\overline{x},\overline{p})=(\ell/2,\pi/\ell)$, the existence of a zero is immediate.

\subsection{Position eigenstates and combs}
Eq.~\eqref{eq:zak_state} gives
\begin{equation}
\begin{split}
\int_{-\pi/\ell}^{\pi/\ell}\dd\overline{p}\,e^{-\ii n\ell\overline{p}}\ket{\overline{x}_0,\overline{p}}&=\sqrt{\ell}\sum_{n'\in\Z}\ket{x=\overline{x}_0+n'\ell}\int_{-\pi/\ell}^{\pi/\ell}\dd\overline{p}\,e^{\ii(n'-n)\ell\overline{p}}\\
&=\frac{2\pi}{\sqrt\ell}\ket{x=\overline{x}_0+n\ell},
\end{split}
\end{equation}
which is the position eigenstate used in the main text. At $\overline{p}=0$, Eq.~\eqref{eq:zak_state} gives the comb $\sum_{s\in\Z}\ket{x=s\ell+x_0}=\ell^{-1/2}\ket{\overline{x}=x_0,\overline{p}=0}$. With $\bra{k,q}\ket{\overline{x},\overline{p}}=e^{-2\ii\pi k\overline{x}/\ell}e^{-\ii\ell q\overline{p}}$, the coefficients are $\psi_{kq}=\ell^{-1/2}e^{-2\ii\pi k\overline{x}_0/\ell}\delta_{q,-n}$ for the position eigenstate, whose cell is centered at $-q\ell=n\ell$, and $\psi_{kq}=\ell^{-1/2}e^{-2\ii\pi kx_0/\ell}$ for the comb.

\section{Coherent states from periodized fiducial vectors}
\label{app:periodization}

\subsection{Coherent-state transform and resolution of the identity}
Coherent states on the phase space $\Gamma$ are obtained from a fiducial vector $\psi\in L^2(\T^2_{\pmb\ell})$ by the displacement operators $\hat U(\vecm_{\pmb\ell},\bm b)$ of Eq.~\eqref{eq:U_def}, which translate a function by $\bm b$ on the torus and multiply it by the plane wave of wave vector $\vecm_{\pmb\ell}$. The coherent states $\psi_{\vecm_{\pmb\ell},\bm b}=\hat U(\vecm_{\pmb\ell},\bm b)\psi$ define the transform
\begin{equation}
\label{coherentstate}
\phi\;\mapsto\;\bra{\psi_{\vecm_{\pmb\ell},\bm b}}\ket{\phi}\equiv\Phi(\vecm_{\pmb\ell},\bm b),
\end{equation}
which maps a function $\phi$ on the torus to a function $\Phi$ on the double discrete cylinder $\Gamma$. For a Gaussian-like fiducial vector, $|\Phi(\vecm_{\pmb\ell},\bm b)|^2$ is the Husimi distribution of $\phi$ on $\Gamma$, in the same way as the projection on the coherent states $\ket{[\alpha]}$ gives the Husimi distribution on the usual phase space $\R^2$. With the inner product of Eq.~\eqref{eq:periodic_representative}, the coherent states resolve the identity for any nonzero fiducial vector,
\begin{equation}
\label{eq:cs_resolution}
\sum_{k,q\in\Z}\int_{\T^2_{\pmb\ell}}\frac{\dd^2\bm b}{2\pi}\,\ket{\psi_{\vecm_{\pmb\ell},\bm b}}\bra{\psi_{\vecm_{\pmb\ell},\bm b}}=\|\psi\|^2\,\bu,
\end{equation}
where $\|\psi\|^2=\bra{\psi}\ket{\psi}$. This relation follows from the Parseval identity on the torus: for a fixed $\bm b$, $\Phi(\vecm_{\pmb\ell},\bm b)$ is the Fourier coefficient at $(k,q)$ of the product $\overline{\psi(\bm z-\bm b)}\,\phi(\bm z)$, so that the sum over $(k,q)$ of $|\Phi|^2$ equals $\frac{1}{2\pi}\int\dd^2\bm z\,|\psi(\bm z-\bm b)|^2|\phi(\bm z)|^2$, and the integral over $\bm b$ then gives $\|\psi\|^2\|\phi\|^2$.

\subsection{Reproducing kernel}
For a given fiducial vector $\psi$, the reproducing kernel is the overlap of two coherent states,
\begin{equation}
\label{reprodKer}
\begin{split}
K(\vecm_{\pmb\ell},\bm b;\vecm'_{\pmb\ell},\bm b')&=\bra{\psi_{\vecm_{\pmb\ell},\bm b}}\ket{\psi_{\vecm_{\pmb\ell}',\bm b'}}\\
&=\frac{1}{2\pi}\int_{\T^2_{\pmb\ell}}\dd^2\bm z\;\overline{(\hat U(\vecm_{\pmb\ell},\bm b)\psi)(\bm z)}\,(\hat U(\vecm_{\pmb\ell}',\bm b')\psi)(\bm z).
\end{split}
\end{equation}
Inserting the resolution of the identity of Eq.~\eqref{eq:cs_resolution} into Eq.~\eqref{coherentstate} shows that every transformed function obeys the reproducing property
\begin{equation}
\label{eq:reproducing_property}
\Phi(\vecm_{\pmb\ell},\bm b)=\frac{1}{\|\psi\|^2}\sum_{k',q'\in\Z}\int_{\T^2_{\pmb\ell}}\frac{\dd^2\bm b'}{2\pi}\,K(\vecm_{\pmb\ell},\bm b;\vecm_{\pmb\ell}',\bm b')\,\Phi(\vecm_{\pmb\ell}',\bm b').
\end{equation}
The functions $\Phi$ are therefore not arbitrary functions on $\Gamma$: their values at different points are correlated through the kernel, which plays the role of the Gaussian overlap $|\bra{[\alpha]}\ket{[\beta]}|^2=e^{-|\alpha-\beta|^2}$ of the usual coherent states. These properties are summarized in the following proposition.
\begin{prop}
\label{main_properties}
For any elements $\phi, \psi\in\,\mathcal{H}$, such that $\Vert\psi\Vert=1$, the map
\begin{equation}
\phi\rightarrow\langle\psi_{(\vecm_{\pmb \ell},\bm b)}\vert\phi\rangle \equiv \Phi(\vecm_{\pmb \ell},\bm b)
\end{equation}
satisfies the following properties:
\begin{itemize}
\item it is an isometry,
\begin{align}
\label{isom}
\Vert\phi\Vert^2=\frac{1}{2\pi}\int_{\T^2_{\pmb \ell}} \ud^2\bm z\ \vert\phi(\bm z)\vert^2= \sum_{\vecm_{\pmb \ell}\in\Z^2_{\pmb\ell}}\int_{\T^2_{\pmb \ell}} \frac{\ud^2\bm b}{2\pi}\,\vert{\langle\psi_{(\vecm_{\pmb \ell},\bm b)}|\phi\rangle\vert}^2\,;
\end{align}
\item it can be inverted on its range,
\begin{equation}
\label{invert}
\phi(\bm z)= \sum_{\vecm_{\pmb \ell}\in\Z^2_{\pmb\ell}}\int_{\T^2_{\pmb \ell}} \frac{\ud^2\bm b}{2\pi}\,\langle\psi_{(\vecm_{\pmb \ell},\bm b)}|\phi\rangle \,\psi_{(\vecm_{\pmb \ell},\bm b)}(\bm z)\, ;
\end{equation}
\item its range lies in the reproducing kernel space characterized by
\begin{equation}
\label{reprod}
\langle\psi_{(\vecm_{\pmb \ell},\bm b)}|\phi\rangle= \sum_{\vecm_{\pmb \ell}^\prime\in\Z^2_{\pmb\ell}}\int_{\T^2_{\pmb \ell}} \frac{\ud^2\bm b^\prime}{2\pi}\,
K_{\psi}((\vecm_{\pmb \ell},\bm b), (\vecm_{\pmb \ell}^{\prime},\bm b^{\prime}))\,\Phi(\vecm_{\pmb \ell}^{\prime},\bm b^{\prime})\,,
\end{equation}
\end{itemize}
with the kernel $K_{\psi}((\vecm_{\pmb \ell},\bm b), (\vecm_{\pmb \ell}^{\prime},\bm b^{\prime}))=\langle\psi_{(\vecm_{\pmb \ell},\bm b)}|\psi_{(\vecm_{\pmb \ell}^{\prime},\bm b^{\prime})}\rangle$.
\end{prop}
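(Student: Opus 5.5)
The plan is to derive all three properties from a single identity, the resolution of the identity of Eq.~\eqref{eq:cs_resolution}, specialized to a unit-norm fiducial vector. The first step is therefore to establish that identity carefully in the weak form
\begin{equation*}
\sum_{k,q\in\Z}\int_{\T^2_{\pmb\ell}}\frac{\dd^2\bm b}{2\pi}\,\overline{\Phi_1(\vecm_{\pmb\ell},\bm b)}\,\Phi_2(\vecm_{\pmb\ell},\bm b)=\|\psi\|^2\,\langle\phi_1|\phi_2\rangle,
\end{equation*}
for $\Phi_i=\langle\psi_{(\vecm_{\pmb\ell},\bm b)}|\phi_i\rangle$. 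I would expand Eq.~\eqref{coherentstate} with Eq.~\eqref{eq:U_def}. For fixed $\bm b$, up to the unimodular phase $e^{\ii\vecm_{\pmb\ell}\cdot\bm b/2}$, $\Phi(\vecm_{\pmb\ell},\bm b)$ is the Fourier coefficient at $(k,q)$ of the function $g_{\bm b}(\bm z)=\overline{\psi(\bm z-\bm b)}\,\phi(\bm z)$ on the torus, with the normalization fixed by the inner product $\frac{1}{2\pi}\int\dd^2\bm z$. The Parseval identity on $\T^2_{\pmb\ell}$, in polarized form, then gives
\begin{equation*}
\sum_{k,q}\overline{\Phi_1}\,\Phi_2=\frac{1}{2\pi}\int\dd^2\bm z\,|\psi(\bm z-\bm b)|^2\,\overline{\phi_1(\bm z)}\,\phi_2(\bm z).
\end{equation*}
The phases cancel because both $\Phi_1$ and $\Phi_2$ carry the same one. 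Integrating over $\bm b$ with Fubini and using the translation invariance of $\dd^2\bm b$ on the torus yields the factor $\|\psi\|^2$, which equals $1$ here.

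The isometry of Eq.~\eqref{isom} is the diagonal case $\phi_1=\phi_2=\phi$, combined with the definition of the norm through the inner product on the torus. The inversion formula of Eq.~\eqref{invert} follows by reading the polarized identity as an operator identity. For an arbitrary test vector $\chi$ one has
\begin{equation*}
\langle\chi|\phi\rangle=\sum\int\frac{\dd^2\bm b}{2\pi}\,\langle\chi|\psi_{(\vecm_{\pmb\ell},\bm b)}\rangle\,\langle\psi_{(\vecm_{\pmb\ell},\bm b)}|\phi\rangle,
\end{equation*}
so the vector $\sum\int\frac{\dd^2\bm b}{2\pi}\,\Phi(\vecm_{\pmb\ell},\bm b)\,\psi_{(\vecm_{\pmb\ell},\bm b)}$, defined as a weak integral, equals $\phi$. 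Evaluating it in the $\bm z$ representation gives Eq.~\eqref{invert}, understood in the $L^2$ sense.

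The reproducing property of Eq.~\eqref{reprod} is obtained by taking the inner product of the inversion formula with $\psi_{(\vecm_{\pmb\ell},\bm b)}$. Moving this inner product inside the sum and the integral produces exactly the kernel $K_\psi=\langle\psi_{(\vecm_{\pmb\ell},\bm b)}|\psi_{(\vecm'_{\pmb\ell},\bm b')}\rangle$ of Eq.~\eqref{reprodKer}. Interchanging the inner product with the integral is justified by Cauchy--Schwarz: $|K_\psi|\le 1$ for a unit-norm fiducial vector, and $\Phi$ is square-summable by the isometry.

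The main obstacle is analytic rather than algebraic. The function $g_{\bm b}$ is only in $L^1$ for generic $\phi,\psi\in L^2$, so Parseval cannot be applied to it directly. I would first prove the identity for bounded (for instance trigonometric-polynomial) $\psi$ and $\phi$, where $g_{\bm b}\in L^2$ and Fubini--Tonelli applies to the nonnegative diagonal case. I would then extend to all of $\mathcal{H}$ by density, using the continuity of both sides in $\phi$ and $\psi$; for the diagonal case Tonelli already permits infinite values, which closes the argument. The polarized version then follows from the diagonal one by the polarization identity.
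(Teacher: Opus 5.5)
Your proposal is correct and follows essentially the same route as the paper. The paper applies Parseval on the torus at fixed $\bm b$ to $\overline{\psi(\bm z-\bm b)}\,\phi(\bm z)$, integrates over $\bm b$ by translation invariance to obtain Eq.~\eqref{eq:cs_resolution}, and reads off inversion and the reproducing property from it; your polarization and integrability remarks add rigor it omits, and the density step can be dropped, since Tonelli gives $\int\dd^2\bm b\,\|g_{\bm b}\|^2=2\pi\|\psi\|^2\|\phi\|^2<\infty$, hence $g_{\bm b}\in L^2$ for almost every $\bm b$. One correction for the last step: $|K_\psi|\le1$ with $\Phi\in L^2(\Gamma)$ does not give absolute convergence over the infinite set $\Gamma$; use instead that $K_\psi((\vecm_{\pmb\ell},\bm b),\cdot)$ has unit $L^2(\Gamma)$ norm by the isometry applied to $\psi_{(\vecm_{\pmb\ell},\bm b)}$, or simply note that the reproducing formula is your weak identity with $\chi=\psi_{(\vecm_{\pmb\ell},\bm b)}$, so no interchange is needed.
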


\subsection{Periodization of fiducial vectors}
To construct coherent states and reproducing kernels from fiducial functions on $\T^2_{\pmb\ell}$, these functions must be periodic. A function $\psi$ on $\R^2$, such as a Gaussian, is turned into a function on the torus by the periodization operator
\begin{equation}
\label{periodization}
(\textsf{Per}\,\psi)(\overline{x},\overline{p})=2\pi\sum_{n_1,n_2\in\Z}\psi\Big(\overline{x}+\ell n_1,\,\overline{p}+\frac{2\pi}{\ell}n_2\Big),
\end{equation}
which sums the copies of $\psi$ translated by all the periods of the torus. With the Fourier transform $\mathcal F[\psi](\bm\omega)=\int_{\R^2}\dd^2\bm z\,\psi(\bm z)\,e^{-\ii\bm\omega\cdot\bm z}$, the Poisson summation formula applied in each variable gives
\begin{equation}
\label{eq:poisson_periodization}
2\pi\sum_{n_1,n_2\in\Z}\psi\Big(\overline{x}+\ell n_1,\,\overline{p}+\frac{2\pi}{\ell}n_2\Big)=\sum_{k,q\in\Z}\mathcal F[\psi](\vecm_{\pmb\ell})\,e^{\ii\vecm_{\pmb\ell}\cdot\bm z},
\end{equation}
where the prefactor $2\pi$ compensates the factor $1/(\ell\times2\pi/\ell)=1/(2\pi)$ of the two-dimensional Poisson formula. Periodization on the torus $\T^2_{\pmb\ell}$ is thus dual to sampling on the grid $\Z^2_{\pmb\ell}$: the Fourier coefficients of $\textsf{Per}\,\psi$, with the inner product of Eq.~\eqref{eq:periodic_representative}, are the values of $\mathcal F[\psi]$ at the grid points $\vecm_{\pmb\ell}$. When the width of a Gaussian is small compared with the periods, $\textsf{Per}\,\psi$ is concentrated around the origin of the unit cell and reduces to the term $n_1=n_2=0$ up to exponentially small corrections. The periodization operator is not unitary, so that a periodized function must be normalized again.

\subsection{Wrapped Gaussian}
For a Gaussian without covariance between its two variables, centered at $(\mu_1,\mu_2)$ with the widths $\sigma_1$ and $\sigma_2$,
\begin{align}
\label{multivarGauss_simple}
     \psi(x,y) = \frac{1}{\sqrt{2\pi}\sigma_1\sigma_2} e^{-\left(\frac{(x - \mu_1)^2}{2\sigma_1^2} + \frac{(y - \mu_2)^2}{2\sigma_2^2} \right)}\,,
\end{align}
the periodization of Eq.~\eqref{periodization} gives the wrapped Gaussian
\begin{equation}
    \label{wrapped_gaussian}
    \begin{split}
    (\textsf{Per}\,\psi)(\bm z) = \frac{\sqrt{2\pi}}{\sigma_1\sigma_2}\, e^{-\left(\frac{(\overline{x} - \mu_1)^2}{2\sigma_1^2} + \frac{(\overline{p} - \mu_2)^2}{2\sigma_2^2}\right)}&\, \vartheta_3\left(\frac{\ii\ell(\overline{x}-\mu_1)}{2\pi\sigma_1^2},\frac{\ii \ell^2}{2\pi\sigma_1^2}\right)\\
    &\times\vartheta_3\left(\frac{\ii(\overline{p}-\mu_2)}{\ell\sigma_2^2},\frac{2\ii \pi}{\ell^2\sigma_2^2}\right),
    \end{split}
\end{equation}
where
\begin{equation}
\label{jacobi3}
    \vartheta_3(q,s) = \sum_{n=-\infty}^{\infty} e^{2\ii n \pi q}e^{\ii\pi sn^2}
\end{equation}
is the third Jacobi theta function. The wrapped Gaussian is a Gaussian envelope stratified by narrower Gaussian packets, which come from the theta functions, as shown in Fig.~\ref{Wrapped_gauss_visual}. The fiducial vector $\psi_0$ of the wrapped coherent states of Eq.~\eqref{eq:cs_fiducial} is the wrapped Gaussian with $\mu_1=\mu_2=0$, $\sigma_1=r_1$ and $\sigma_2=r_2$.

\section{Properties of the wrapped coherent states}
\label{app:coherent}

\subsection{Eigenvalue equation}
The commutator $[\ii\hat\phi,-\hat N]=-\ii[\hat\phi,\hat N]=1$ is a scalar, and the Baker--Campbell--Hausdorff formula gives $e^{\ii\hat\phi-\hat N}=e^{\ii\hat\phi}e^{-\hat N}e^{-1/2}$. Since $e^{\ii\hat\phi}\ket{n}=\ket{n+1}$, the operator $\hat Z$ acts as $\hat Z\ket{n}=e^{-(n+1/2)}\ket{n+1}$. For the unnormalized state $\ket{\zeta}=\sum_ke^{-k^2/2}e^{k\zeta}\ket{k}$, with $\zeta\in\C$, the relabeling $k+1\to k$ gives
\begin{equation}
\hat Z\ket{\zeta}=\sum_{k\in\Z}e^{-k^2/2-k-1/2+k\zeta}\ket{k+1}=\sum_{k\in\Z}e^{-(k-1)^2/2-k+1/2+(k-1)\zeta}\ket{k}=e^{-\zeta}\ket{\zeta},
\end{equation}
since $-(k-1)^2/2-k+1/2=-k^2/2$. The wrapped coherent state of Eq.~\eqref{eq:wrapped_cs} is $\ket{\zeta}$ with $\zeta=m-\ii\phi$, multiplied by $Ae^{-m^2/2}$.

\subsection{Normalization and integer heights}
For $m\in\Z$, the relabeling $k\to k+m$ gives $\sum_ke^{-k^2}e^{2km}=e^{m^2}\sum_ke^{-(k-m)^2}=e^{m^2}A^{-2}$, which fixes the prefactor $Ae^{-m^2/2}$ independently of $m$. For a non-integer $m$, the sum $\sum_ke^{-(k-m)^2}$ depends on the fractional part of $m$, and the normalization prefactor would depend on $m$. The states $\ket{m,\phi}$ are therefore normalized for integer heights only, which restricts the height of the cylinder to $\Z$. On the double discrete cylinder, the heights $(m_1,m_2)$ of the coherent states of Eq.~\eqref{eq:coherent} label the point $\vecm_{\pmb\ell}=(2\pi m_1/\ell,\ell m_2)$ of the grid $\Z^2_{\pmb\ell}$ of Sec.~\ref{sec:config}, where the integer labels of the grid are renamed $(m_1,m_2)$ to distinguish them from the labels $(k,q)$ of the integer basis.

\subsection{Coherent states as Weyl-transported periodized Gaussians}
The overlap $\bra{\overline{x}',\overline{p}'}\ket{k,q}=e^{2\ii\pi k\overline{x}'/\ell}e^{\ii\ell q\overline{p}'}$ and Eq.~\eqref{eq:coherent} give the function on the torus that represents the coherent state,
\begin{equation}
\label{eq:app_cs_wf}
\bra{\overline{x}',\overline{p}'}\ket{m_1,\overline{x};m_2,\overline{p}}=A^2e^{-(m_1^2+m_2^2)/2}\,S_{1/2}\Big(m_1+\frac{2\ii\pi}{\ell}(\overline{x}'-\overline{x})\Big)\,S_{1/2}\big(m_2+\ii\ell(\overline{p}'-\overline{p})\big),
\end{equation}
with the theta sum $S_{1/2}$ of Eq.~\eqref{eq:theta_sum}. For an integer height $m$ and a real angle $u$, the relabeling $k\to k+m$ gives
\begin{equation}
\label{eq:app_shift_height}
e^{-m^2/2}S_{1/2}(m+\ii u)=\sum_{k\in\Z}e^{-(k-m)^2/2}e^{\ii ku}=e^{\ii mu}\,S_{1/2}(\ii u).
\end{equation}
With $u_1=2\pi(\overline{x}'-\overline{x})/\ell$ and $u_2=\ell(\overline{p}'-\overline{p})$, the phase $e^{\ii m_1u_1+\ii m_2u_2}$ equals $e^{\ii\vecm_{\pmb\ell}\cdot(\bm z'-\bm b)}$, with $\vecm_{\pmb\ell}=(2\pi m_1/\ell,\ell m_2)$, $\bm z'=(\overline{x}',\overline{p}')$ and $\bm b=(\overline{x},\overline{p})$. Eq.~\eqref{eq:app_cs_wf} therefore becomes
\begin{equation}
\begin{split}
\bra{\overline{x}',\overline{p}'}\ket{m_1,\overline{x};m_2,\overline{p}}&=e^{-\ii\vecm_{\pmb\ell}\cdot\bm b}\,e^{\ii\vecm_{\pmb\ell}\cdot\bm z'}\,\psi_0(\bm z'-\bm b)=e^{-\ii\vecm_{\pmb\ell}\cdot\bm b/2}\,\big(\hat U(\vecm_{\pmb\ell},\bm b)\psi_0\big)(\bm z'),\\
\psi_0(\overline{x},\overline{p})&=A^2S_{1/2}\Big(\frac{2\ii\pi\overline{x}}{\ell}\Big)S_{1/2}(\ii\ell\overline{p}),
\end{split}
\end{equation}
where
\begin{equation}
(\hat U(\vecm_{\pmb\ell},\bm b)\psi_0)(\bm z')=e^{-\ii\vecm_{\pmb\ell}\cdot\bm b/2}e^{\ii\vecm_{\pmb\ell}\cdot\bm z'}\psi_0(\bm z'-\bm b).
\end{equation} The Poisson summation formula gives
\begin{equation}
S_{1/2}(\ii u)=\sum_{k\in\Z}e^{-k^2/2}e^{\ii ku}=\sqrt{2\pi}\sum_{j\in\Z}e^{-(u+2\pi j)^2/2}.
\end{equation}
With $u=2\pi\overline{x}/\ell$, $(u+2\pi j)^2=(\overline{x}+j\ell)^2/r_1^2$ with $r_1=\ell/(2\pi)$, and with $u=\ell\overline{p}$, $(u+2\pi j)^2=(\overline{p}+2\pi j/\ell)^2/r_2^2$ with $r_2=1/\ell$, so that
\begin{equation}
\psi_0(\overline{x},\overline{p})=2\pi A^2\sum_{j_1,j_2\in\Z}e^{-(\overline{x}+j_1\ell)^2/2r_1^2}\,e^{-(\overline{p}+2\pi j_2/\ell)^2/2r_2^2}=A^2\,(\textsf{Per}\,G_{\bm r})(\overline{x},\overline{p}),
\end{equation}
with the periodization operator of Eq.~\eqref{periodization} and $G_{\bm r}(x,y)=e^{-x^2/2r_1^2-y^2/2r_2^2}$, which gives Eq.~\eqref{eq:cs_fiducial}. The Fourier coefficients of $\psi_0$ are $A^2e^{-(k^2+q^2)/2}$, so that $\|\psi_0\|^2=A^4\sum_ke^{-k^2}\sum_qe^{-q^2}=1$. The modulus of Eq.~\eqref{eq:app_cs_wf} is maximal at $(\overline{x}',\overline{p}')=(\overline{x},\overline{p})$, where all the terms of the theta sums are positive.

\subsection{Resolution of the identity and slice weights}
Inserting Eq.~\eqref{eq:coherent} into the left-hand side of Eq.~\eqref{eq:resolution} and using the orthogonality relations
\begin{equation}
\int_{-\ell/2}^{\ell/2}\dd\overline{x}\,e^{-2\ii\pi(k-k')\overline{x}/\ell}=\ell\,\delta_{kk'},\qquad \int_{-\pi/\ell}^{\pi/\ell}\dd\overline{p}\,e^{-\ii\ell(q-q')\overline{p}}=\frac{2\pi}{\ell}\delta_{qq'}
\end{equation}
gives
\begin{equation}
\begin{split}
&\frac{1}{2\pi}\sum_{m_1,m_2\in\Z}\int_{-\ell/2}^{\ell/2}\!\!\dd\overline{x}\int_{-\pi/\ell}^{\pi/\ell}\!\!\dd\overline{p}\,\ket{m_1,\overline{x};m_2,\overline{p}}\bra{m_1,\overline{x};m_2,\overline{p}}\\
&\qquad=A^4\sum_{k,q\in\Z}\Big[\sum_{m_1}e^{-(m_1-k)^2}\Big]\Big[\sum_{m_2}e^{-(m_2-q)^2}\Big]\ket{k,q}\bra{k,q}=\bu,
\end{split}
\end{equation}
where $e^{-m_1^2+2km_1-k^2}=e^{-(m_1-k)^2}$ and $\sum_{m}e^{-(m-k)^2}=A^{-2}$ for integer $k$. The same calculation performed on $\bra{\psi}\cdot\ket{\psi}$ at fixed $(m_1,m_2)$, without the sums over the heights, gives the slice weight of Eq.~\eqref{eq:slice_weight}. Summing over $m_2$ with $\sum_{m_2}e^{-(m_2-q)^2}=A^{-2}$ gives $P_1(m_1)$ in Eq.~\eqref{eq:height_weights}, and $P_2(m_2)$ follows in the same way.

\section{Zeros of the stellar function}
\label{app:lemma}
\label{sec:zeros}

\subsection{Number of zeros and stellar rank}
Since the coefficients of Eq.~\eqref{eq:stellar} are suppressed by $e^{-(k^2+q^2)/2}$, the stellar function $F_{\ket{\psi}}(z_1,z_2)$ is an entire function of $(z_1,z_2)\in\C^2$, $2\ii\pi$-periodic in each variable. Its zeros are counted in the traditional way of complex analysis. At fixed $z_2$, the function $z_1\mapsto F_{\ket{\psi}}(z_1,z_2)$ is holomorphic, and by the argument principle the number of its zeros inside a closed contour $\mathcal C$ of the complex plane, counted with their multiplicities, is
\begin{equation}
\label{eq:argument_principle}
N_{\mathcal C}=\frac{1}{2\ii\pi}\oint_{\mathcal C}\dd z_1\,\frac{\partial_{z_1}F_{\ket{\psi}}(z_1,z_2)}{F_{\ket{\psi}}(z_1,z_2)},
\end{equation}
that is, the number of turns of the phase of $F_{\ket{\psi}}$ along $\mathcal C$, and the same holds in the variable $z_2$. The zeros relevant to the Husimi distribution are those lying on $\Gamma$, that is, with integer real parts $\mathrm{Re}\,z_1=m_1$ and $\mathrm{Re}\,z_2=m_2$. The stellar rank on $\Gamma$ is the number of these zeros, counted with their multiplicities.

For a separable state, the stellar function factorizes as $F_{\ket{\psi}}(z_1,z_2)=F_1(z_1)F_2(z_2)$, and the zeros of each factor are isolated points of $\C$. By periodicity, a factor with one zero has infinitely many zeros. A zero on $\Gamma$ requires one of the two factors to vanish at an integer height, which is not the case for generic parameters; when it does, the zeros on $\Gamma$ form whole circles of the phase space, and the stellar rank is infinite. For a non-separable state, the zero set of $F_{\ket{\psi}}$ is an analytic curve of $\C^2$, which meets each slice $(m_1,m_2)$ at isolated points of the torus, since the slice and the zero set are both two-dimensional surfaces of the four-dimensional space $\C^2$. These zeros persist under small changes of the state. On a slice, a zero is detected by the turn of the phase of $F_{\ket{\psi}}$ along a small loop around it in the $(\overline{x},\overline{p})$ plane,
\begin{equation}
\label{eq:index}
\mathrm{ind}(m_1,m_2;\overline{x},\overline{p})=\frac{1}{2\pi}\oint\dd\arg F_{\ket{\psi}}(z_1,z_2)=\pm1
\end{equation}
for a simple zero, which is the quantity used to locate the zeros numerically (Appendix~\ref{app:numerics}).

\subsection{Zeros of the theta sums}
The theta sums of Eq.~\eqref{eq:theta_sum} are third Jacobi theta functions,
\begin{equation}
\label{eq:theta_jacobi}
S_a(\zeta)=\sum_{k\in\Z}e^{-ak^2}e^{k\zeta}=\vartheta_3\Big(\frac{\zeta}{2\ii\pi},\frac{\ii a}{\pi}\Big),\qquad a>0,
\end{equation}
with $\vartheta_3$ defined in Eq.~\eqref{jacobi3}. Their zeros are given by the following lemma.
\begin{lem}
\label{lemma:zeros}
$S_a(\zeta)=0$ if and only if $\zeta=(2n+1)a+\ii(2p+1)\pi$ with $n,p\in\Z$. These zeros are simple, and each period cell contains exactly one of them.
\end{lem}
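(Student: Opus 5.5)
The plan has three steps: use the two (quasi-)periodicities of $S_a$ to reduce to a single period cell, count the zeros in that cell with the argument principle of Eq.~\eqref{eq:argument_principle}, and then show that the claimed point is actually a zero.

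\emph{Functional equations.} First, $S_a(\zeta+2\ii\pi)=S_a(\zeta)$, since $e^{2\ii\pi k}=1$. Second, relabeling $k\to k+1$ in $S_a(\zeta+2a)=\sum_k e^{-ak^2+2ak}e^{k\zeta}$ and completing the square gives
\[
S_a(\zeta+2a)=e^{a+\zeta}S_a(\zeta).
\]
So the zero set is invariant under $\zeta\mapsto\zeta+2a$ and $\zeta\mapsto\zeta+2\ii\pi$, and it suffices to study one period cell, namely the rectangle $R$ with vertices $\zeta_0,\ \zeta_0+2a,\ \zeta_0+2a+2\ii\pi,\ \zeta_0+2\ii\pi$. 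We choose $\zeta_0$ so that no zero lies on $\partial R$; this is possible because zeros of a nonzero entire function are isolated, and $S_a\not\equiv0$ since $S_a(0)>0$.

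\emph{Counting.} Apply the argument principle on $\partial R$ to $S_a'/S_a$.
\begin{itemize}
\item The two horizontal edges cancel, by $2\ii\pi$-periodicity.
\item On the vertical edges, the logarithmic derivative of the second functional equation gives $\frac{S_a'}{S_a}(\zeta+2a)=\frac{S_a'}{S_a}(\zeta)+1$.
\item The right edge, traversed upward, therefore exceeds the left edge, traversed downward, by $\int_0^{2\pi}\ii\,\dd t=2\ii\pi$.
\end{itemize}
Dividing by $2\ii\pi$ gives exactly one zero in $R$, counted with multiplicity. This shows both that the zero is simple and that each cell contains exactly one. This step mirrors the winding computation of Appendix~\ref{app:modular}.

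\emph{Locating the zero.} At $\zeta_*=a+\ii\pi$ we have $S_a(a+\ii\pi)=\sum_k(-1)^k e^{-ak^2+ak}$. The map $k\mapsto 1-k$ preserves $-ak^2+ak=-ak(k-1)$ and flips the parity of $k$, so the terms cancel in pairs and the sum vanishes. By the two periodicities, every point $(2n+1)a+\ii(2p+1)\pi$ is then a zero. Conversely, any zero can be translated into the cell $R$ that contains $\zeta_*$. Uniqueness within that cell forces it to be $\zeta_*$, which gives the "only if" direction.

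The main obstacle is bookkeeping on the contour: fixing the orientation and sign of the $+1$ shift so the count comes out $+1$ rather than $-1$, and justifying the choice of $\zeta_0$ avoiding zeros on $\partial R$. I would handle the latter by noting that a shifted cell yields the same count by periodicity.
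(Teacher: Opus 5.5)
Your proof is correct and follows essentially the same route as the paper: the $2\ii\pi$-periodicity and the quasi-periodicity $S_a(\zeta+2a)=e^{a+\zeta}S_a(\zeta)$, the argument principle on a period cell where the vertical edges contribute $1$, and the vanishing at $\zeta=a+\ii\pi$ by the symmetry $k\mapsto1-k$. The differences are minor. You order the steps differently, and you derive the ``only if'' direction explicitly from uniqueness in the cell, which the paper leaves implicit. Your phrase that the right edge ``exceeds the left edge, traversed downward'' should read ``traversed upward'' (equivalently, the two edges as traversed sum to $2\ii\pi$), although your computation is right.
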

\begin{proof}
The sum $S_a(\zeta)$ is $2\ii\pi$-periodic, and the relabeling $k-1\to k$ gives the quasi-periodicity
\begin{equation}
S_a(\zeta+2a)=\sum_{k\in\Z}e^{-a(k^2-2k)}e^{k\zeta}=e^{a+\zeta}S_a(\zeta).
\end{equation}
At $\zeta=a+\ii\pi$, $S_a(\zeta)=\sum_k(-1)^ke^{-ak^2+ak}=e^{a/4}\sum_k(-1)^ke^{-a(k-1/2)^2}$, which changes sign under $k\to1-k$ and therefore vanishes. By periodicity and quasi-periodicity, $S_a(\zeta)$ vanishes at all $\zeta=(2n+1)a+\ii(2p+1)\pi$, with $n,p\in\Z$. The argument principle of Eq.~\eqref{eq:argument_principle}, applied to the period cell of vertices $\zeta_0$, $\zeta_0+2a$, $\zeta_0+2a+2\ii\pi$ and $\zeta_0+2\ii\pi$, chosen without zero on its boundary, counts the zeros inside the cell: the horizontal sides cancel by periodicity, and the vertical sides contribute $\frac{1}{2\ii\pi}\int_0^{2\ii\pi}\dd(a+\zeta)=1$ by quasi-periodicity. Each period cell thus contains exactly one zero, which is simple.
\end{proof}

\subsection{Superpositions}
For $\ket{\pm}_x$, the combination
\begin{equation}
S_a(\zeta+\ii\pi/2)\pm S_a(\zeta-\ii\pi/2)=\sum_{k\in\Z}e^{-ak^2}e^{k\zeta}\,(e^{\ii\pi k/2}\pm e^{-\ii\pi k/2})
\end{equation}
retains even or odd values of $k$ only. With $k=2j$ in the first case, $2\cos(\pi j)=2(-1)^j$, and with $k=2j+1$ in the second case, $2\ii\sin(\pi(2j+1)/2)=2\ii(-1)^j$, so that
\begin{align}
S_a\Big(\zeta+\frac{\ii\pi}{2}\Big)+S_a\Big(\zeta-\frac{\ii\pi}{2}\Big)&=2\sum_j(-1)^je^{-4aj^2}e^{2j\zeta}=2\,S_{4a}(2\zeta+\ii\pi),\\
S_a\Big(\zeta+\frac{\ii\pi}{2}\Big)-S_a\Big(\zeta-\frac{\ii\pi}{2}\Big)&=2\ii\sum_j(-1)^je^{-a(2j+1)^2}e^{(2j+1)\zeta}\nonumber\\
&=2\ii\,e^{\zeta-a}\,S_{4a}(2\zeta-4a+\ii\pi).
\end{align}
Lemma~\ref{lemma:zeros} applied to $S_{4a}$ gives $2\zeta+\ii\pi=(2n+1)4a+\ii(2p+1)\pi$, that is $\zeta=(2n+1)2a+\ii p\pi$, for the sum, and $2\zeta-4a+\ii\pi=(2n+1)4a+\ii(2p+1)\pi$, that is $\zeta=4a(n+1)+\ii p\pi$, for the difference, with $n,p\in\Z$. With $\zeta=z_1$, the zeros lie at $\overline{x}=p\ell/2$, that is at $\overline{x}\in\{0,\ell/2\}$ modulo $\ell$, and at the heights $m_1=(2n+1)2a$ for the sum and $m_1=4an$ for the difference, after the relabeling $n+1\to n$. For $a=1/2$, these heights are the odd and even integers of the ideal case, and for $a=\frac12+\frac{2\pi^2\sigma^2}{\ell^2}$ they are the heights of Eq.~\eqref{eq:zeros_pm_lc}.

\section{Vacuum and Gaussian states centered at the origin}
\label{app:coherent_state}

\subsection{Coefficients}
With Eq.~\eqref{eq:psi_kq} and the change of variable $x=\overline{x}-q\ell$, for which $e^{-2\ii\pi k\overline{x}/\ell}=e^{-2\ii\pi kx/\ell}$, the coefficients of the Gaussian state of Eq.~\eqref{eq:coherent_wf} are integrals of a Gaussian over the cell $I_q=[(-q-\frac12)\ell,(-q+\frac12)\ell)$. With $\beta_k=\sqrt2\pi\sigma k/\ell$, completing the square gives
\begin{equation}
-\frac{x^2}{2\sigma^2}-\frac{2\ii\pi kx}{\ell}=-\frac{(x+\ii\sqrt2\sigma\beta_k)^2}{2\sigma^2}-\beta_k^2,
\end{equation}
and the change of variable $v=(x+\ii\sqrt2\sigma\beta_k)/(\sqrt2\sigma)$, which maps the edges of $I_q$ onto $\xi^\pm_q+\ii\beta_k$, gives
\begin{equation}
\begin{split}
\int_{I_q}\dd x\,e^{-x^2/2\sigma^2}e^{-2\ii\pi kx/\ell}&=\sqrt2\sigma\,e^{-\beta_k^2}\int_{\xi^-_q+\ii\beta_k}^{\xi^+_q+\ii\beta_k}\dd v\,e^{-v^2}\\
&=\sqrt{\frac\pi2}\,\sigma\,e^{-\beta_k^2}\Big[\mathrm{erf}\big(\xi^+_q+\ii\beta_k\big)-\mathrm{erf}\big(\xi^-_q+\ii\beta_k\big)\Big],
\end{split}
\end{equation}
with $\beta_k^2=2\pi^2\sigma^2k^2/\ell^2$, hence Eq.~\eqref{eq:psi_kq_coherent}. Since $e^{-v^2}$ is entire, the integral depends only on its endpoints. For $q=0$, $\xi^\pm_0=\pm\ell/(2\sqrt2\,\sigma)$, and $-\mathrm{erf}(-\xi+\ii\beta)=\mathrm{erf}(\xi-\ii\beta)$ gives the bracket of Eq.~\eqref{eq:coherent_stellar}. For large $|k|$, the error functions grow as $e^{\beta_k^2}$, which compensates the Gaussian prefactor. The product is evaluated in a numerically stable way as $e^{-\beta^2}\mathrm{erf}(\xi+\ii\beta)=e^{-\beta^2}-e^{-\xi^2-2\ii\xi\beta}\,w(-\beta+\ii\xi)$ for $\xi\ge0$, with the Faddeeva function $w(z)=e^{-z^2}\mathrm{erfc}(-\ii z)$, and through $\mathrm{erf}(-z)=-\mathrm{erf}(z)$ for $\xi<0$.

\subsection{Single-cell approximation}
For $\sigma\ll\ell$, $\psi(\overline{x}+n\ell)$ is negligible for $n\neq0$ when $\overline{x}$ lies in the unit cell, and the Zak transform of Eq.~\eqref{eq:modular_wf} reduces to $\sqrt\ell\,\psi(\overline{x})$, independent of $\overline{p}$, which gives Eq.~\eqref{eq:coherent_single_cell}. The Fourier coefficients of the flat momentum function $H(\overline{p})$ are $\frac{\ell}{2\pi}\int_{-\pi/\ell}^{\pi/\ell}\dd\overline{p}\,e^{-\ii\ell q\overline{p}}=\delta_{q0}$, so that only $q=0$ contributes, which is the term $q=0$ of Eq.~\eqref{eq:psi_kq_coherent}. Inserting this term into Eq.~\eqref{eq:stellar}, where the sum over $q$ reduces to the single term $e^{-q^2/2}e^{qz_2}=1$, and combining $e^{-k^2/2}$ with $e^{-\beta_k^2}$ into $e^{-ak^2}$, gives Eq.~\eqref{eq:coherent_stellar}. The weight of the height $m_2$ of the momentum cylinder is then $P_2(m_2)=A^2\sum_k|\psi_{k0}|^2e^{-m_2^2}\propto e^{-m_2^2}$. Writing $\mathrm{erf}(t)=1-\mathrm{erfc}(t)$, the bracket of Eq.~\eqref{eq:coherent_stellar} differs from $2$ by terms bounded, once multiplied by $e^{-\beta_k^2}$, by $e^{-\ell^2/8\sigma^2}$.

\subsection{Zero of the modular wavefunction at the corner of the unit cell}
At $(\overline{x},\overline{p})=(\ell/2,\pi/\ell)$, Eq.~\eqref{eq:modular_wf} gives $\psi(\ell/2,\pi/\ell)=\sqrt\ell\sum_n(-1)^n\psi(\ell/2+n\ell)$. The terms $n$ and $-1-n$ involve $\psi(\ell/2+n\ell)$ and $\psi(-\ell/2-n\ell)$, which are equal for an even wavefunction, with the opposite signs $(-1)^n$ and $(-1)^{-1-n}=-(-1)^n$, so that the sum vanishes.

\subsection{Symmetry}
Since $\psi(x)$ is real, Eq.~\eqref{eq:psi_kq} gives $\psi^*_{kq}=\psi_{-k,q}$, hence $F_{\ket{\psi}}(z_1,z_2)^*=F_{\ket{\psi}}(-z_1^*,z_2^*)$. Since $\psi(x)$ is even, the change of variable $\overline{x}\to-\overline{x}$ gives $\psi_{-k,-q}=\psi_{kq}$, hence $F_{\ket{\psi}}(-z_1,-z_2)=F_{\ket{\psi}}(z_1,z_2)$. Combining both relations, $F_{\ket{\psi}}(z_1,z_2)^*=F_{\ket{\psi}}(z_1^*,-z_2^*)$. On the slices $m_2=0$, $z_2=\ii\ell\overline{p}$ and $-z_2^*=z_2$; on the circles $\overline{x}\in\{0,\ell/2\}$, $z_1^*=z_1$ modulo $2\ii\pi$, so that $F_{\ket{\psi}}(z_1,z_2)$ is real on these circles. The same argument, with $F_{\ket{\psi}}(z_1,z_2)^*=F_{\ket{\psi}}(-z_1^*,z_2^*)$, shows that $F_{\ket{\psi}}(z_1,z_2)$ is real on the circles $\overline{p}\in\{0,\pi/\ell\}$ of the slices $m_1=0$.

\section{Numerical location of the zeros}
\label{app:numerics}

The coefficients $\psi_{kq}$ are computed from Eq.~\eqref{eq:psi_kq} by trapezoidal quadrature over each cell with $4096$ points, keeping all the peaks of Eq.~\eqref{eq:physical_gkp} that contribute to the cell, and are normalized numerically. The sums over $k$ and $q$ are truncated at $|k|,|q|\le K$, with $K$ large enough for the neglected terms to be below $e^{-30}$ relative to the largest term. On each slice $(m_1,m_2)$, the stellar function $F_{\ket{\psi}}(z_1,z_2)$ is evaluated on a $200\times200$ grid of the torus. A plaquette of the grid contains a zero when the phase of $F_{\ket{\psi}}$ winds by $\pm2\pi$ along the boundary of the plaquette, the phase differences between neighboring points being taken in $(-\pi,\pi]$, which also gives the index of Eq.~\eqref{eq:index}. The position of each zero is then refined by minimizing the ratio $|F_{\ket{\psi}}(z_1,z_2)|/\sum_{k,q\in\Z}|\psi_{kq}|e^{-(k^2+q^2)/2}e^{km_1+qm_2}$ with the Nelder--Mead algorithm, and the zero is retained when this ratio falls below $10^{-10}$. In practice, the ratio reaches $10^{-16}$ to $10^{-18}$, at the level of the numerical precision, and the positions of the zeros do not change when the quadrature and the grid are refined. In the large-comb approximation, the zeros are given analytically by Eqs.~\eqref{eq:zeros_lc_position} and~\eqref{eq:zeros_lc_momentum}; at these points the same ratio is at the level of the numerical precision, whereas the ratio is of order $10^{-2}$ at neighboring heights.

\section{Coefficients and stellar function of physical logical states}
\label{app:gkp_coefficients}

\subsection{Zak transform}
Inserting Eq.~\eqref{eq:physical_gkp} into Eq.~\eqref{eq:modular_wf} gives Eq.~\eqref{eq:zak_gkp}.

\subsection{Large-comb approximation}
Replacing $e^{-(\overline{x}+n\ell)^2/2\kappa^2}$ by $e^{-n^2\ell^2/2\kappa^2}$ neglects the factor $e^{-(2\overline{x}n\ell+\overline{x}^2)/2\kappa^2}$, whose deviation from one is of order $|n|\ell^2/\kappa^2$, that is of order $\ell/\kappa$ over the populated cells $|n|\lesssim\kappa/\ell$. Setting $s'=n-s$, the double sum of Eq.~\eqref{eq:zak_gkp} then factorizes,
\begin{equation}
\psi_{x_0}(\overline{x},\overline{p})\simeq\sqrt\ell\,N\Big[\sum_{s'\in\Z}e^{-(\overline{x}-x_0+s'\ell)^2/2\sigma^2}\Big]
\times\Big[\sum_{n\in\Z}e^{-n^2\ell^2/2\kappa^2}e^{-\ii n\ell\overline{p}}\Big],
\end{equation}
and the Poisson summation formula gives
\begin{equation}
\sum_{n\in\Z}e^{-n^2\ell^2/2\kappa^2}e^{-\ii n\ell\overline{p}}=\frac{\sqrt{2\pi}\,\kappa}{\ell}\sum_{j\in\Z}e^{-\kappa^2(\overline{p}+2\pi j/\ell)^2/2},
\end{equation}
hence Eq.~\eqref{eq:large_comb} with $N'=\sqrt{2\pi/\ell}\,N\kappa$.

\subsection{Coefficients with error functions}
With the decomposition of Eq.~\eqref{eq:zak_decomposition} and the first relation of Eq.~\eqref{eq:integer_basis}, the coefficients are
\begin{equation}
\psi_{kq}=\frac{1}{2\pi}\int_{-\ell/2}^{\ell/2}\!\!\dd\overline{x}\int_{-\pi/\ell}^{\pi/\ell}\!\!\dd\overline{p}\,\psi(\overline{x},\overline{p})\,e^{-2\ii\pi k\overline{x}/\ell}e^{-\ii\ell q\overline{p}}.
\end{equation} For the product of Gaussians of Eq.~\eqref{eq:physGKP}, the integral over $\overline{x}$,
\begin{equation}
I_k(x_0)=\int_{-\ell/2}^{\ell/2}\dd\overline{x}\;e^{-(\overline{x}-x_0)^2/2\sigma^2}\,e^{-2\ii\pi k\overline{x}/\ell},
\end{equation}
is evaluated in four steps, with $b_k=2\pi k/\ell$.
\begin{enumerate}
\item The substitution $u=\overline{x}-x_0$ centers the Gaussian, maps the integration interval onto $[-\ell/2-x_0,\ell/2-x_0]$, and extracts the phase $e^{-\ii b_kx_0}=e^{-2\ii\pi kx_0/\ell}$. For $x_0\neq0$, the interval is asymmetric about the center of the Gaussian, so that the result does not reduce to a single error function.
\item Completing the square, $-\frac{u^2}{2\sigma^2}-\ii b_ku=-\frac{(u+\ii b_k\sigma^2)^2}{2\sigma^2}-\frac{b_k^2\sigma^2}{2}$, extracts the factor $e^{-b_k^2\sigma^2/2}=e^{-2\pi^2\sigma^2k^2/\ell^2}$, and leaves a Gaussian centered at the complex point $u=-\ii b_k\sigma^2$.
\item The substitution $v=(u+\ii b_k\sigma^2)/(\sqrt2\,\sigma)$, with $\dd u=\sqrt2\,\sigma\,\dd v$, maps the endpoints onto $t_\pm=\frac{\pm\ell/2-x_0}{\sqrt2\,\sigma}+\ii\frac{\sqrt2\pi\sigma k}{\ell}$, since $b_k\sigma/\sqrt2=\sqrt2\pi\sigma k/\ell$. Since $e^{-v^2}$ is entire, the integral depends only on its endpoints, by the Cauchy theorem.
\item With $\mathrm{erf}(z)=\frac{2}{\sqrt\pi}\int_0^z e^{-t^2}\dd t$, analytically continued to $z\in\C$, $\int_{t_-}^{t_+}\dd v\,e^{-v^2}=\frac{\sqrt\pi}{2}[\mathrm{erf}(t_+)-\mathrm{erf}(t_-)]$.
\end{enumerate}
Collecting the factors and using $-\mathrm{erf}(t_-)=\mathrm{erf}(-t_-)$,
\begin{equation}
\begin{split}
I_k(x_0)=\sqrt{\frac{\pi}{2}}\,\sigma\,e^{-\frac{2\pi^2\sigma^2k^2}{\ell^2}}e^{-\frac{2\ii\pi kx_0}{\ell}}&\Big[\mathrm{erf}\Big(\frac{\ell/2-x_0}{\sqrt2\,\sigma}+\ii\frac{\sqrt2\pi\sigma k}{\ell}\Big)\\
&+\mathrm{erf}\Big(\frac{\ell/2+x_0}{\sqrt2\,\sigma}-\ii\frac{\sqrt2\pi\sigma k}{\ell}\Big)\Big],
\end{split}
\end{equation}
which gives $c_k$ in Eq.~\eqref{eq:psi_kq_erf}. The integral over $\overline{p}$ is obtained in the same way, with $x_0\to0$, $\sigma\to1/\kappa$, $\ell/2\to\pi/\ell$ and $2\pi k/\ell\to\ell q$, and gives $d_q$. The constant $\mathcal{N}$ collects the prefactors and is fixed by the normalization.

\subsection{Proof of Eq.~(\ref{eq:psi_kq_lc})}
The coefficients of the large-comb modular wavefunction of Eq.~\eqref{eq:large_comb} follow from the first relation of Eq.~\eqref{eq:integer_basis} and from the decomposition of Eq.~\eqref{eq:zak_decomposition}. Since the modular wavefunction is a product, the double integral over the unit cell factorizes,
\begin{equation}
\label{eq:app_lc_factor}
\psi_{kq}=\frac{N'}{2\pi}\,X_k\,P_q,\qquad
\begin{gathered}
X_k=\int_{-\ell/2}^{\ell/2}\dd\overline{x}\,g_\sigma(\overline{x}-x_0)\,e^{-2\ii\pi k\overline{x}/\ell},\\
P_q=\int_{-\pi/\ell}^{\pi/\ell}\dd\overline{p}\,g_{1/\kappa}(\overline{p})\,e^{-\ii\ell q\overline{p}} .
\end{gathered}
\end{equation}
The integral $X_k$ is evaluated in three steps.
\begin{enumerate}
\item Inserting the sum that defines $g_\sigma$ in Eq.~\eqref{eq:large_comb} gives
\begin{equation}
X_k=\sum_{s\in\Z}\int_{-\ell/2}^{\ell/2}\dd\overline{x}\,e^{-(\overline{x}-x_0+s\ell)^2/2\sigma^2}e^{-2\ii\pi k\overline{x}/\ell}.
\end{equation}
\item In the term $s$, the substitution $x=\overline{x}+s\ell$ maps the unit cell onto the interval $[s\ell-\ell/2,s\ell+\ell/2)$, and leaves the exponential unchanged, since $e^{-2\ii\pi k(x-s\ell)/\ell}=e^{-2\ii\pi kx/\ell}e^{2\ii\pi ks}=e^{-2\ii\pi kx/\ell}$. Each term is thus the integral of the same Gaussian $e^{-(x-x_0)^2/2\sigma^2}e^{-2\ii\pi kx/\ell}$ over a different cell. These cells tile the real line as $s$ runs over $\Z$, so that the sum of the integrals is the integral over $\R$:
\begin{equation}
X_k=\int_\R\dd x\,e^{-(x-x_0)^2/2\sigma^2}\,e^{-2\ii\pi kx/\ell}.
\end{equation}
\item With $u=x-x_0$ and $\omega_k=2\pi k/\ell$, completing the square, $-\frac{u^2}{2\sigma^2}-\ii\omega_ku=-\frac{(u+\ii\omega_k\sigma^2)^2}{2\sigma^2}-\frac{\omega_k^2\sigma^2}{2}$, and the Gaussian integral $\int_\R\dd u\,e^{-(u+\ii\omega_k\sigma^2)^2/2\sigma^2}=\sqrt{2\pi}\,\sigma$, valid for a complex shift of the integration variable since the integrand is entire and decays at infinity, give
\begin{equation}
X_k=\sqrt{2\pi}\,\sigma\,e^{-2\pi^2\sigma^2k^2/\ell^2}\,e^{-2\ii\pi kx_0/\ell}.
\end{equation}
\end{enumerate}
The integral $P_q$ follows in the same way. Inserting the sum that defines $g_{1/\kappa}$ and substituting $p=\overline{p}+2\pi j/\ell$ in the term $j$, the phase $e^{-\ii\ell q(p-2\pi j/\ell)}=e^{-\ii\ell qp}e^{2\ii\pi qj}=e^{-\ii\ell qp}$ is unchanged, and the shifted momentum cells tile the real line, so that
\begin{equation}
P_q=\int_\R\dd p\,e^{-\kappa^2p^2/2}\,e^{-\ii\ell qp}=\frac{\sqrt{2\pi}}{\kappa}\,e^{-\ell^2q^2/2\kappa^2}.
\end{equation}
Inserting $X_k$ and $P_q$ into Eq.~\eqref{eq:app_lc_factor} gives
\begin{equation}
\psi_{kq}=\frac{N'\sigma}{\kappa}\,e^{-2\pi^2\sigma^2k^2/\ell^2}\,e^{-2\ii\pi kx_0/\ell}\,e^{-\ell^2q^2/2\kappa^2},
\end{equation}
which is Eq.~\eqref{eq:psi_kq_lc} with $\mathcal{N}=N'\sigma/\kappa=\sqrt{2\pi/\ell}\,N\sigma$, the value of $\mathcal{N}$ being fixed by the normalization $\sum_{k,q\in\Z}|\psi_{kq}|^2=1$.

\subsection{Relation with the coefficients with error functions}
The coefficients of Eq.~\eqref{eq:psi_kq_erf} correspond to the single term $s=0$ of step~1, that is, to a single Gaussian truncated to the unit cell, whose integral is $I_k(x_0)$. The terms $s\neq0$ supply the parts of the same Gaussian that lie outside the unit cell, so that the full sum gives the integral over $\R$. In terms of error functions, the integral of $e^{-v^2}$ over the whole line equals $\sqrt\pi$, so that the bracket of $I_k(x_0)$, which is equal to $\mathrm{erf}(t_+)-\mathrm{erf}(t_-)$, is replaced by $2$, and the half-sums of error functions in $c_k$ and $d_q$ are replaced by one. Writing $\mathrm{erf}(t)=1-\mathrm{erfc}(t)$, the difference between the two sets of coefficients is given by $\mathrm{erfc}$ terms whose modulus, once multiplied by the Gaussian prefactor, is bounded by $e^{-(\ell/2\mp x_0)^2/2\sigma^2}$ for $c_k$ and by $e^{-\pi^2\kappa^2/2\ell^2}$ for $d_q$. For $x_0=\mp\ell/4$, the largest of these terms is $e^{-\ell^2/32\sigma^2}$. These terms decay only as $1/|k|$ and $1/|q|$, since $e^{-\beta^2}\mathrm{erfc}(\xi+\ii\beta)\simeq e^{-\xi^2-2\ii\xi\beta}/[\sqrt\pi(\xi+\ii\beta)]$ for large $\beta$ and $\xi>0$, whereas the coefficients of Eq.~\eqref{eq:psi_kq_lc} decay as Gaussians. The difference is therefore negligible for small $|k|$, but exceeds the coefficients of Eq.~\eqref{eq:psi_kq_lc} for $|k|\gtrsim\ell^2/(8\pi\sigma^2)$.

\subsection{Normalization}
The condition $\sum_{k,q\in\Z}|\psi_{kq}|^2=1$ gives $\mathcal{N}$ in Eq.~\eqref{eq:psi_kq_lc}. For $\sigma\ll\ell\ll\kappa$, the Poisson summation formula gives $\sum_ke^{-4\pi^2\sigma^2k^2/\ell^2}\simeq\ell/(2\sqrt\pi\,\sigma)$ and $\sum_qe^{-\ell^2q^2/\kappa^2}\simeq\sqrt\pi\,\kappa/\ell$, hence $\mathcal{N}^2\simeq2\sigma/\kappa$.

\subsection{Extent along the heights}
With Eq.~\eqref{eq:psi_kq_lc}, the weight of the height $m_1$ of Eq.~\eqref{eq:height_weights} is
\begin{equation}
P_1(m_1)=A^2\sum_{k,q\in\Z}|\psi_{kq}|^2e^{-(k-m_1)^2}\propto\sum_ke^{-ck^2}e^{-(k-m_1)^2},\qquad c=\frac{4\pi^2\sigma^2}{\ell^2}.
\end{equation}
Approximating the sum by an integral, which is accurate up to terms of relative order $e^{-\pi^2/(1+c)}$ by the Poisson summation formula, and completing the square gives $\int\dd k\,e^{-ck^2-(k-m_1)^2}=\sqrt{\pi/(1+c)}\,e^{-cm_1^2/(1+c)}$, hence Eq.~\eqref{eq:extent}. The weight $P_2(m_2)$ follows in the same way, with $c$ replaced by $d=\ell^2/\kappa^2$.

\subsection{Stellar function and zeros in the large-comb approximation}
Inserting Eq.~\eqref{eq:psi_kq_lc} into Eq.~\eqref{eq:stellar}, the Gaussian factors combine into
\begin{equation}
e^{-k^2/2}e^{-2\pi^2\sigma^2k^2/\ell^2}=e^{-ak^2},\qquad e^{-q^2/2}e^{-\ell^2q^2/2\kappa^2}=e^{-bq^2},
\end{equation}
which gives Eq.~\eqref{eq:stellar_lc}. Lemma~\ref{lemma:zeros} with $\zeta=z_1-2\ii\pi x_0/\ell=m_1+2\ii\pi(\overline{x}-x_0)/\ell$ gives $m_1=(2n+1)a$ and $2\pi(\overline{x}-x_0)/\ell=(2p+1)\pi$, that is $\overline{x}=x_0+\ell/2$ modulo $\ell$, which equals $-x_0$ for $x_0=\mp\ell/4$. With $\zeta=z_2=m_2+\ii\ell\overline{p}$, Lemma~\ref{lemma:zeros} gives $m_2=(2n'+1)b$ and $\ell\overline{p}=\pi$ modulo $2\pi$. Substituting the values of $a$ and $b$ gives Eqs.~\eqref{eq:zeros_lc_position} and~\eqref{eq:zeros_lc_momentum}.

\subsection{Fine-tuned widths}
The conditions $\nu_1a=\mu_1$ and $\nu_2b=\mu_2$, with $a=\frac12+\frac{2\pi^2\sigma^2}{\ell^2}$ and $b=\frac12+\frac{\ell^2}{2\kappa^2}$, give $2\pi^2\sigma^2/\ell^2=(2\mu_1-\nu_1)/(2\nu_1)$ and $\ell^2/2\kappa^2=(2\mu_2-\nu_2)/(2\nu_2)$, that is, Eq.~\eqref{eq:tuned_widths}. With $a=u_1/v_1$ in lowest terms, the height $(2n+1)a$ is an integer if and only if $v_1$ divides $2n+1$. Since $v_1$ divides the odd integer $\nu_1$, $v_1$ is odd, and $2n+1=(2j+1)v_1$, which gives the heights $m_1=(2j+1)u_1$. The same argument applies to $b=u_2/v_2$. For $\sigma=\ell/(2\sqrt3\,\pi)$, $a=\frac12+\frac16=\frac23$, so that $u_1=2$, $v_1=3$, and the heights are $m_1=\pm2,\pm6,\pm10,\dots$; for $\kappa=3\ell$, $b=\frac12+\frac1{18}=\frac59$, so that $u_2=5$, $v_2=9$, and the heights are $m_2=\pm5,\pm15,\dots$.

\section{Overlap between logical states}
\label{app:overlap}

\subsection{Large-comb approximation}
Eq.~\eqref{eq:psi_kq_lc} gives, with $x_0=-\ell/4$ for $\ket{0}_x$ and $x_0=+\ell/4$ for $\ket{1}_x$,
\begin{equation}
\begin{split}
\psi^{(-\ell/4)*}_{kq}\,\psi^{(+\ell/4)}_{kq}&=\mathcal{N}^2e^{-4\pi^2\sigma^2k^2/\ell^2}e^{-\ell^2q^2/\kappa^2}\,e^{-\ii\pi k/2}e^{-\ii\pi k/2}\\
&=\mathcal{N}^2e^{-4\pi^2\sigma^2k^2/\ell^2}e^{-\ell^2q^2/\kappa^2}(-1)^k,
\end{split}
\end{equation}
and the sums over $q$ cancel in the ratio with the normalization, which gives Eq.~\eqref{eq:overlap_lc}. The Poisson summation formula gives
\begin{equation}
\sum_{k\in\Z}(-1)^ke^{-ck^2}=\sqrt{\frac{\pi}{c}}\sum_{j\in\Z}e^{-\pi^2(j+1/2)^2/c}\simeq2\sqrt{\frac{\pi}{c}}\,e^{-\pi^2/4c}
\end{equation}
and $\sum_ke^{-ck^2}\simeq\sqrt{\pi/c}$, and $\pi^2/4c=\ell^2/16\sigma^2$ gives the leading term.

\subsection{Exact overlap}
For two Gaussian peaks centered at $c$ and $c'$,
\begin{equation}
e^{-(x-c)^2/2\sigma^2}e^{-(x-c')^2/2\sigma^2}=e^{-(c-c')^2/4\sigma^2}e^{-(x-\bar c)^2/\sigma^2},
\end{equation}
with $\bar c=(c+c')/2$. The product of the two envelopes is $e^{-x^2/\kappa^2}$. The sum of the exponents is $-x^2/\tilde\sigma^2+2x\bar c/\sigma^2-\bar c^2/\sigma^2$, with $1/\tilde\sigma^2=1/\kappa^2+1/\sigma^2$, that is $\tilde\sigma=\sigma\kappa/\sqrt{\kappa^2+\sigma^2}$, and completing the square gives
\begin{equation}
\int_\R\dd x\,e^{-x^2/\kappa^2}e^{-(x-\bar c)^2/\sigma^2}=\sqrt\pi\,\tilde\sigma\,e^{-\bar c^2/(\kappa^2+\sigma^2)},
\end{equation}
since $\bar c^2\tilde\sigma^2/\sigma^4-\bar c^2/\sigma^2=-\bar c^2/(\kappa^2+\sigma^2)$. With $c=s\ell-\ell/4$ for $\ket{0}_x$ and $c'=s'\ell+\ell/4$ for $\ket{1}_x$, one has $c-c'=(s-s'-\frac12)\ell$ and $\bar c=(s+s')\ell/2$, so that
\begin{equation}
{}_x\!\bra{0}\ket{1}_x=N^2\sqrt\pi\,\tilde\sigma\sum_{s,s'\in\Z}e^{-\frac{(2s-2s'-1)^2\ell^2}{16\sigma^2}}e^{-\frac{(s+s')^2\ell^2}{4(\kappa^2+\sigma^2)}} .
\end{equation}
The normalization of $\ket{0}_x$, with $c=s\ell-\ell/4$ and $c'=s'\ell-\ell/4$, gives
\begin{equation}
N^{-2}=\sqrt\pi\,\tilde\sigma\sum_{s,s'\in\Z}e^{-\frac{(s-s')^2\ell^2}{4\sigma^2}}e^{-\frac{(2s+2s'-1)^2\ell^2}{16(\kappa^2+\sigma^2)}},
\end{equation}
which is also the normalization of $\ket{1}_x$, by the symmetry $s\to-s$. The ratio gives the exact overlap
\begin{equation}
\label{eq:overlap}
{}_x\!\bra{0}\ket{1}_x=\frac{\displaystyle\sum_{s,s'\in\Z}e^{-\frac{(2s-2s'-1)^2\ell^2}{16\sigma^2}}\,e^{-\frac{(s+s')^2\ell^2}{4(\kappa^2+\sigma^2)}}}
{\displaystyle\sum_{s,s'\in\Z}e^{-\frac{(s-s')^2\ell^2}{4\sigma^2}}\,e^{-\frac{(2s+2s'-1)^2\ell^2}{16(\kappa^2+\sigma^2)}}}.
\end{equation}
In the numerator, the dominant terms have $s-s'\in\{0,1\}$ and contribute
\begin{equation}
e^{-\ell^2/16\sigma^2}\sum_{t\in\Z}e^{-t^2\ell^2/4(\kappa^2+\sigma^2)},
\end{equation}
with $t=s+s'$ running over all integers. In the denominator, the dominant terms have $s=s'$ and contribute $\sum_se^{-(s\ell-\ell/4)^2/(\kappa^2+\sigma^2)}$. For $\kappa\gg\ell$, the two sums over $t$ and $s$ are approximated by integrals equal to $2\sqrt{\pi(\kappa^2+\sigma^2)}/\ell$ and $\sqrt{\pi(\kappa^2+\sigma^2)}/\ell$, respectively, which gives ${}_x\!\bra{0}\ket{1}_x\simeq2e^{-\ell^2/16\sigma^2}$.

%%%%%%%%%%%%%%%%%%%%%%%%%%%%%%%%%%%%%%%%%%
% The aps option selects the apsrev4-2 bibliography style automatically.
\bibliography{bibliodouble_submission}

\end{document}